\newif\ifreport\reporttrue
\def\ps@headings{%
\def\@oddhead{\mbox{}\scriptsize\rightmark \hfil \thepage}%
\def\@evenhead{\scriptsize\thepage \hfil \leftmark\mbox{}}%
\def\@oddfoot{}%
\def\@evenfoot{}}
\theoremstyle{plain}
\newtheorem{example}{Example}
\newtheorem{lemma}{Lemma}
\newtheorem{Theorem}{Theorem}
\theoremstyle{definition}
\newtheorem{definition}{Definition}
\theoremstyle{remark}
\newcommand{\ignore}[1]{}
\begin{document}
\IEEEoverridecommandlockouts
\title{Provably Delay Efficient Data Retrieving in Storage Clouds}

\author{Yin Sun$^\dag$, Zizhan Zheng$^\dag$, C. Emre Koksal$^\dag$, Kyu-Han Kim$^\ddag$, and Ness B. Shroff$^\dag{}^\S$ \\
$^\dag$Dept. of ECE, $^\S$Dept. of CSE, The Ohio State University, Columbus, OH\\
$^\ddag$Hewlett Packard Laboratories, Palo Alto, CA
\thanks{This work has been supported in part by an IRP grant from HP.}
}


\maketitle

%

\begin{abstract}

One key requirement for storage clouds is to be able to retrieve data quickly. Recent system measurements have shown that the data retrieving delay in storage clouds is highly variable, which may result in a long latency tail. One crucial idea to improve the delay performance is to retrieve multiple data copies by using parallel downloading threads. However, how to optimally schedule these downloading threads to minimize the data retrieving delay remains to be an important open problem. In this paper, we
develop low-complexity thread scheduling policies for several important classes of data downloading time distributions, and prove that these policies are either delay-optimal or within a constant gap from the optimum delay performance.
These theoretical results hold for an arbitrary arrival process of read requests that may contain finite or infinite read requests, and for heterogeneous MDS storage codes that can support diverse storage redundancy and reliability requirements for different data files.
 Our numerical results show that 
the delay performance of the proposed policies is significantly better than that of First-Come-First-Served (FCFS) policies considered in prior work.

\end{abstract}

%
%

\section{Introduction}
Cloud storage is a prevalent solution for online data storage, as it provides the appealing benefits of easy access, low maintenance, elasticity, and scalability.
The global cloud storage market is expected to reach \$56.57 billion by 2019, with a compound annual growth rate of 33.1\% \cite{CloudStorageReport}.

{In cloud storage systems, multiple copies of data are generated using simple replications \cite{Google2003,Chang:2006:BDS:1298455.1298475,Hadoop2010} or erasure storage codes \cite{Dimakis2011,Rashmi2013,HDFS-RAID,Swift}, and distributedly stored in disks, in-memory databases and caches. For an $(n, k)$ erasure code $(n > k)$, data is divided into $k$ equal-size chunks, which are then encoded into $n$ chunks and stored in $n$ distinct storage devices. If the code satisfies the typical maximum distance separable (MDS) property, any $k$ out of the $n$ chunks are sufficient to restore original data. When $k=1$, the $(n, k)$ erasure code reduces to the case of data replication (aka repetition codes).

Current storage clouds jointly utilize multiple erasure codes to support diverse storage redundancy and reliability requirements. For instance, in Facebook's data warehouse cluster, frequently accessed data (or so called ``hot data'') is stored with 3 replicas, while rarely accessed data (``cold data'') is stored by using a more compressed (14,10) Reed-Solomon code to save space \cite{Rashmi2013}. Open-source cloud storage softwares, such as HDFS-RAID \cite{HDFS-RAID} and OpenStack Swift \cite{Swift}, have been developed to support the coexistence of multiple erasure codes.}

{One key design principle of cloud storage systems is fast data retrieval. Amazon, Microsoft, and Google all report that a slight increase in user-perceived delay will result in a concrete revenue loss \cite{Linden2006,DelayImpact2009}. However, in current storage clouds, data retrieving time is highly random and may have a long latency tail due to many reasons, including network congestion, load dynamics, cache misses, database blocking, disk I/O interference, update/maintenance activities,  and unpredictable failures \cite{Dean-CACM13,Garfinkel07anevaluation,Wang:2010,Google2003,ErasureCodingWindows}.
One important approach to curb this randomness is {\it downloading multiple data copies in parallel}. For example, if a file is stored with an $(n, k)$ erasure code, the system can schedule more than $k$ downloading ``threads'', each representing a TCP connection, to retrieve the file. The first $k$ successfully downloaded chunks are sufficient to restore the file, and the excess downloading threads are terminated to release the networking resources. By this, the retrieval latency of the file is reduced. However, scheduling redundant threads will increase the system load, which may in turn increase the latency. Such a policy provides a tradeoff between faster retrieval of each file and the extra system load for downloading redundant chunks. Therefore, a critical question is ``how to optimally manage the downloading threads to minimize average data retrieving delay?'' Standard tools in scheduling and queueing theories, e.g., \cite{Schrage68,Smith78,Michael2012book,Fox:2011:OSI:2133036.2133046,doi:10.1137/090772228} and the references therein, cannot be directly applied to resolve this challenge because they
do not allow scheduling redundant and parallel resources for service acceleration.}

In this paper, we rigorously analyze the fundamental delay limits of storage clouds. We develop low-complexity online thread scheduling policies for several important classes of data downloading time distributions, and prove that these policies are either delay-optimal or within a constant gap from the optimum delay performance.\footnote{By constant delay gap, we mean that the delay gap is bounded by a constant value that is independent of the request arrival process and system traffic load.} Our theoretical results hold for an arbitrary arrival process of read requests that may contain finite or infinite read requests, and for heterogeneous MDS storage codes that can support diverse code parameters $(n_i,k_i)$ for different data files.
The main contributions of our paper are listed as follows and summarized in Table \ref{tab2}. An interesting \emph{state evolution} argument is developed in this work, which is essential for establishing the constant delay gaps;
\ifreport
the interested reader is referred to the appendices for the detailed proofs.
\else
the interested reader is referred to the technical report \cite{tech_report2014} for the details.
\fi
\begin{table*}
\centering
\begin{tabular}{c|l|l|l|l|l||l} \hline
           & Arrival  &  Parameters of                   & Service & Downloading time &      &    \\
Theorem  & process  & MDS codes      & preemption &  distribution &  Policy & Delay gap from optimum \\
\hline
\ref{thm1} & any & $d_{\min}\geq L$ & allowed        &\emph{i.i.d.} exponential     & SERPT-R            & delay-optimal \\
\ref{thm3} & any & any   & allowed        &\emph{i.i.d.} exponential     & SERPT-R            & $\frac{1}{\mu}\sum_{l=d_{\min}}^{L-1}\frac{1}{l}$  \\
\ref{thm2} & any & $d_{\min}\geq L$ & not allowed    &\emph{i.i.d.} exponential     & SEDPT-R            & $1/\mu$ \\
\ref{thm4} & any & any   & not allowed    &\emph{i.i.d.} exponential     & SEDPT-R            & $\frac{1}{\mu}\left(\sum_{l=d_{\min}}^{L-1}\frac{1}{l}+1\right)$  \\
\ref{thm7} & any & any   & not allowed &\emph{i.i.d.} New-Longer-than-Used& SEDPT-NR  & $O(\ln L/\mu)$  \\
\ref{thm8} & any & any   & allowed &\emph{i.i.d.} New-Longer-than-Used&  SEDPT-WCR  & $O(\ln L/\mu)$  \\
\ref{thm5} & any & $k_i =1$, $d_{\min}\geq L$ & not allowed   &\emph{i.i.d.} New-Shorter-than-Used& SEDPT-R            & delay-optimal\\
 \hline
\end{tabular}
\caption{Summary of the delay performance of our proposed policies under different settings, where $d_{\min}$ is the minimum distance among all MDS storage codes defined in \eqref{eq_def2}, $1/\mu$ is the average chunk downloading time of each thread, and $L$ is the number of downloading threads. The classes of ``New-Longer-than-Used'' and ``New-Shorter-than-Used'' distributions are defined in Section \ref{sec4}. Note that the delay gaps in this table are independent of the request arrival process and system traffic load.}\label{tab2}
\end{table*}
\begin{itemize}
\item
When the downloading times of data chunks are \emph{i.i.d.} exponential with mean $1/\mu$, we propose a Shortest Expected Remaining Processing Time policy with Redundant thread assignment (SERPT-R), and prove that SERPT-R is  \emph{delay-optimal} among \emph{all} online policies, if (i) the storage redundancy is sufficiently high and (ii) preemption is allowed. If condition (i) is not satisfied, we show that under SERPT-R, the extra delay caused by low storage redundancy is no more than the average downloading time of $(\ln L +1)$ chunks, i.e., $(\ln L +1)/\mu$, where $L$ is the number of downloading threads. (This delay gap grows slowly with respect to $L$, and is independent of the request arrival process and system traffic load.) Further, if preemption is not allowed, we propose a Shortest Expected Differentiable Processing Time policy with Redundant thread assignment (SEDPT-R), which has a delay gap of
no more than the average downloading time of one chunk, i.e., $1/\mu$, compared to the delay-optimal policy.

\item When the downloading times of data chunks are \emph{i.i.d.} New-Longer-than-Used (NLU) (defined in Section \ref{sec4}), we design a Shortest Expected Differentiable Processing Time policy with Work-Conserving Redundant thread assignment (SEDPT-WCR) for the preemptive case and a Shortest Expected Differentiable Processing Time policy with No Redundant thread assignment (SEDPT-NR) for the non-preemptive case. We show that, comparing with the delay-optimal policy, the delay gaps of preemptive SEDPT-WCR and non-preemptive SEDPT-NR are both of the order $O(\ln{L}/\mu)$. 

\item When the downloading times of data chunks are \emph{i.i.d.} New-Shorter-than-Used (NSU) (defined in Section \ref{sec4}), we prove that SEDPT-R is delay-optimal among all online policies, under the conditions that data is stored with repetition codes, storage redundancy is sufficiently high, and preemption is not allowed. 
    %


\end{itemize}
We note that the proposed SEDPT-type policies are different from the traditional Shortest Remaining Processing Time first (SRPT) policy, and have not been proposed in prior work.

\section{Related Work}

The idea of reducing delay via multiple parallel data transmissions has been explored empirically in various contexts~\cite{DTN-delay,Ananthanarayanan11,vulimiri12latency,vulimiri13latency,Flach13latency}. More recently, theoretical analysis has been conducted to study the delay performance of data retrieval in distributed storage systems. One line of studies \cite{huang-isit-2012,shah-mdsq-2012,Joshi-2012,Xiang2014,Kumar2014,shah-Allerton-2013} were centered on the data retrieval from a small number of storage nodes, where the delay performance is limited by the service capability of individual storage nodes. It was shown in \cite{huang-isit-2012} that erasure storage codes can reduce the queueing delay compared to simple data replications. In \cite{shah-mdsq-2012,Joshi-2012}, delay bounds were provided for First-Come-First-Served (FCFS) policies with different numbers of redundant threads. In \cite{Xiang2014}, a delay upper bound was obtained for FCFS policies under Poisson arrivals and arbitrary downloading time distribution, which was further used to derive a sub-optimal solution for jointly minimizing latency and storage cost. In \cite{Kumar2014}, the authors established delay bounds for the classes of FCFS, preemptive and non-preemptive priority scheduling policies, when the downloading time is \emph{i.i.d.} exponential. In \cite{shah-Allerton-2013}, the authors studied when redundant threads can reduce delay (and when not), and designed optimal redundant thread scheduling policies among the class of FCFS policies.

The second line of researches \cite{addShengboOriginal,Liang2013_2,ShengboInfocom} focus on large-scale storage clouds with a large number of storage nodes, where the delay performance is constrained by the available networking resources of the system. In \cite{addShengboOriginal,Liang2013_2}, the authors measured the chunk downloading time over the Amazon cloud storage system and proposed to adapt code parameters and the number of redundant threads to reduce delay. In \cite{ShengboInfocom}, it was shown that FCFS with redundant thread assignment is delay-optimal among all online policies, under the assumptions of a single storage code, high storage redundancy and exponential downloading time distribution. Following this line of research, in this paper, we consider the more general scenarios with heterogonous storage codes, general level of storage redundancy, and non-exponential downloading time distributions, where neither FCFS nor priority scheduling is close to delay-optimal. 

\section{System Model}
\label{sec2}

We consider a cloud storage system that is composed of one frond-end proxy server and a large number of distributed storage devices, as illustrated in Fig. \ref{fig1model}. The proxy server enqueues the user requests and establishes TCP connections to fetch data from the storage devices. In practice, the proxy server also performs tasks such as format conversion, data compression, authentication and encryption.\footnote{Our results can be also used for systems with multiple proxy servers, where each read request is routed to a proxy server based on geometrical location, or determined by a round robin or random load balancing algorithm. More complicated load balancing algorithms will be studied in our future work.}

\subsection{Data Storage and Retrieval}
Suppose that the file corresponding to request $i$ is stored with an $(n_i,k_i)$ MDS code.\footnote{The terms ``file'' and ``request'' are interchangeable in this paper.} Then, file $i$ is partitioned into $k_{i}$ equal-size chunks, which are encoded into $n_{i}$ coded chunks and stored in $n_{i}$ distinct devices. In MDS codes, any $k_i$ out of the $n_i$ coded chunks are sufficient to restore file $i$. Therefore, the cloud storage system can tolerate $n_i-k_i$ failures and still secure file $i$. Examples of MDS codes include repetition codes ($k_i=1$) and Reed-Solomon codes. Let $d_i$ denote the Hamming distance of an $(n_i,k_i)$ MDS code, determined by
\begin{eqnarray}\label{eq_d_i}
d_{i} = n_i-k_i+1.
\end{eqnarray}
The minimum code distance of all storage codes is defined as
\begin{eqnarray}\label{eq_def2}
d_{\min} \triangleq \min \{d_{i},i=1,2,\cdots\}.
\end{eqnarray}

It has been reported in \cite{Dean-CACM13,Garfinkel07anevaluation,Wang:2010,Google2003,ErasureCodingWindows} that the downloading time of data chunks can be highly unpredictable in storage clouds. Some recent measurements \cite{addShengboOriginal,Liang2013_2,ShengboInfocom} on Amazon AWS show that the downloading times of data chunks stored with distinct keys can be approximated as independent and identically distributed (\emph{i.i.d.}) random variables. In this paper, we assume that the downloading times of data chunks are \emph{i.i.d.}\footnote{This assumption is reasonable for large-scale storage clouds, e.g., Amazon AWS, where individual read operations may experience long latency events, such as network congestion, cache misses, database blocking, high temperature or high I/O traffic of storage devices, that are unobservable and unpredictable by the decision-maker.},
as in \cite{huang-isit-2012,shah-Allerton-2013,shah-mdsq-2012,Joshi-2012,ShengboInfocom}.

\begin{figure}
\centering \includegraphics[width=3in]{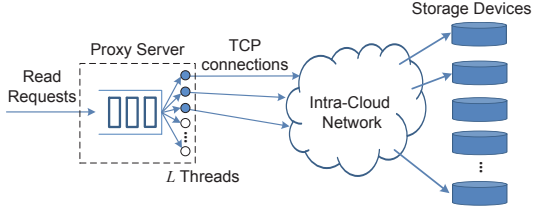} \caption{System model.}
\label{fig1model} \vspace{-0.5cm}
\end{figure}

\subsection{Redundant and Parallel Thread Scheduling}
The proxy server has $L$ downloading threads, each representing a potential TCP connection, to retrieve data from the distributed storage devices. The value of $L$ is chosen as the maximum number of simultaneous TCP connections that can occupy all the available networking bandwidth without significantly degrading the latency of each individual connection \cite{addShengboOriginal,Liang2013_2}. A decision-maker at the proxy server determines which chunks to download and in what order for the $L$ threads to minimize the average data retrieving delay.

Suppose that a sequence of $N$ read requests arrive at the queue of the processing server.\footnote{The value of $N$ can be either finite or infinite in this paper. If $N$ tends to infinite, a $\limsup$ operator is enforced on the right hand side of \eqref{eq_2}.} Let $a_i$ and $c_{i,\pi}$ denote the arrival and completion times of the $i$th request under policy $\pi$, respectively, where $0=a_1\leq a_2\leq \cdots \leq a_N$. Thus, the service latency of request $i$ is given by
$c_{i,\pi}-a_i$,
which includes both the downloading time and the waiting time in the request queue.
We assume that the arrival process $(a_1,a_2,\cdots)$ is an \emph{arbitrary deterministic} time sequence, while the departure process $(c_{1,\pi},c_{2,\pi},\cdots)$ is stochastic because of the random downloading time. Given the request parameters $N$ and $(a_i,k_i,n_i)_{i=1}^N$, the average flow time of the  requests under policy $\pi$ is defined as
\begin{align}
\overline{D}_\pi=\frac{1}{N}\sum_{i=1}^{N}\left(\mathbb{E}\left\{c_{i,\pi}\right\}-a_i\right),\label{eq_2}
\end{align}
where the expectation is taken with respect to the random distribution of chunk downloading time for given policy $\pi$ and for given request parameters $N$ and $(a_i,k_i,n_i)_{i=1}^N$.
The goal of this paper is {\it to design low-complexity online thread scheduling policies that achieve optimal or near-optimal delay performance.}

\begin{definition}
\textbf{Online policy}: A scheduling policy is said to be \emph{online} if, at any given time $t$, the decision-maker does not know the number of requests to arrive after time $t$, the parameters $(a_i,k_i,n_i)$ of the requests to arrive, or the realizations of the (remaining) downloading times of the chunks that have not been accomplished by time $t$. 
\end{definition}

\begin{definition}
\textbf{Delay-optimality}: A thread scheduling policy $\pi$ is said to be \emph{delay-optimal} if, for {\it any} given request parameters
$N$ and $(a_i,k_i,n_i)_{i=1}^N$, it yields the shortest average flow time $\overline{D}_\pi$ among {\it all online} policies.
\end{definition}

A key feature of this scheduling problem is the flexibility of \emph{redundant and parallel thread scheduling}. 
Take file $i$ as an example. When $n_i>k_i$, one can assign redundant threads to download more than $k_i$ chunks of file $i$. The
first $k_i$ successfully downloaded chunks are sufficient for completing the read operation. After that, the extra downloading threads are terminated immediately, which is called \emph{service termination}. By doing this, the retrieving delay of file $i$ is reduced. On the other hand, redundant thread scheduling may cause extra system load. Therefore, such a policy provides a tradeoff between fast retrieving of each file and a potentially longer service latency due to the extra system load, which makes it difficult to achieve delay-optimality.
\subsection{Service Preemption and Work Conserving}
We consider {\it chunk-level} preemptive and non-preemptive policies.
When preemption is allowed, a thread can switch to serve another chunk at any time, and resume to serve the previous chunk at a later time, continuing from the interrupted point.
When preemption is not allowed, a thread must complete (or terminate) the current chunk before switching to serve another chunk.
We assume that service terminations and  preemptions are executed immediately with no extra delay. 

\begin{definition}\label{def_5} \textbf{Work-conserving:}
A scheduling policy is said to be \emph{work-conserving} if all threads are kept busy
whenever there are chunks waiting to be downloaded.
\end{definition}

{\bf Remark 1:} \emph{If preemption is allowed, a delay-optimal policy must be work-conserving}, because the average delay of any non-work-conserving policy can be reduced by assigning the idle threads to download more chunks. Meanwhile, \emph{if preemption is not allowed, a work-conserving policy may not be delay-optimal}, because the occupied threads cannot be easily switched to serve an incoming request with a higher priority.

%
%


\ignore{
\subsubsection{Available Chunks and Outstanding Chunks}
{Therefore, it cannot predict beforehand when and which chunks will be downloaded. However, after some chunks are downloaded, the decision-maker knows about how many additional chunks are needed for retrieving each file.}

\begin{definition}
\textbf{Available chunks}: the chunks that have not been downloaded for each file.
\end{definition}

\begin{definition}
\textbf{Outstanding chunks}: from the set of available chunks, any subset of chunks that are sufficient for retrieving each file.
\end{definition}

Let $u_{i}(t)$ denote the number of available chunks of file $i$ at time $t$, and $r_i(t)$ denote the number of outstanding chunks of file $i$ at time $t$. The parameters $(u_i(t),r_i(t))$ will be used to make scheduling decisions.

Suppose that $g_i(t)$ chunks of file $i$ have been downloaded by time $t$. If $g_i(t)< k_i$, then require $i$ is not completed at time $t$, such that $r_i(t)=k_i-g_i(t)$ and $u_i(t)=n_i-g_i(t)$.
Therefore, any unfinished request $i$ satisfies
\begin{eqnarray}\label{eq_u}
u_i(t)-r_{i}(t) = n_i-k_i = d_i-1.
\end{eqnarray}
If $g_i(t)= k_i$, then request $i$ is completed before time $t$ such that $r_i(t)=u_i(t)=0$.
We further define
\begin{eqnarray}\label{eq_def4}
r(t)=\sum_{i:a_i\leq t} r_i(t)
\end{eqnarray}
as the total number of outstanding chunks of all unfinished requests at time $t$, and
\begin{eqnarray}\label{eq_def3}
u(t)=\sum_{i:a_i\leq t} u_i(t)
\end{eqnarray}
as the total number of available chunks of all unfinished requests at time $t$.
}

\section{Exponential Chunk Downloading Time}

\label{sec3}
In this section, we study the delay-optimal thread scheduling when chunk downloading time is \emph{i.i.d.} exponentially distributed with {mean $1/\mu$. Non-exponential downloading time distributions will be investigated in Section \ref{sec4}.}


\subsection{High Storage Redundancy, Preemption is Allowed}\label{sec3-1}
{We first consider the case of high storage redundancy such that $d_{\min} \geq L$ is satisfied. In this case, we have $n_i-(k_i -1) \geq L$ for all $i$. Hence, each file $i$ has at least $L$ available chunks even if $k_i-1$ chunks of file $i$ have been downloaded. Hence, each unfinished request has sufficient available chunks such that all $L$ threads can be simultaneously assigned to serve this request.}
Let $s_j$ denote the arrival time of the $j$th arrived chunk downloading task of all files and $t_j$ denote the completion time of the $j$th downloaded chunk of all files. The chunk arrival process $(s_1,s_2,\ldots)$ is uniquely determined by the request parameters $(a_i,k_i)_{i=1}^N$. Meanwhile, the chunk departure process $(t_1,t_2,\ldots)$ satisfies the following {\it invariant distribution} property:

\begin{lemma}\cite[Theorem 6.4]{ShengboInfocom}\label{lem0-1} Suppose that (i) $d_{\min} \geq L$ and (ii) the chunk downloading time is {i.i.d.} exponentially distributed with mean $1/\mu$. Then, for any given request parameters $N$ and $(a_i,k_i,n_i)_{i=1}^N$, the distribution of the chunk departure process $(t_1, t_2, \ldots)$ is invariant under any work-conserving policy. \end{lemma}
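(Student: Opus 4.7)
The plan is to reduce the problem to a single fact: whenever there is any pending work in the system, every work-conserving policy necessarily has all $L$ threads busy, and by memorylessness of the exponential distribution the time until the next chunk completes is $\mathrm{Exp}(L\mu)$ regardless of which chunks those threads happen to be serving. Once this is established, the point process $(t_1, t_2, \ldots)$ admits a policy-independent description and the invariance claim follows.

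First, I would verify the ``all threads busy'' property. At any time $t$, consider an unfinished request $i$, and let $g_i(t)$ be the number of its chunks downloaded so far, so $g_i(t) \le k_i - 1$. The number of chunks of file $i$ still available to assign is
\[
u_i(t) = n_i - g_i(t) \ge n_i - (k_i - 1) = d_i \ge d_{\min} \ge L.
\]
Thus a single unfinished request already supplies $L$ eligible chunks, so the work-conserving rule forces exactly $L$ threads to be active whenever the queue of unfinished requests is nonempty. This removes any dependence of the \emph{number} of active threads on the scheduling rule.

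Second, I would use the memoryless property to remove dependence on the \emph{identity} of the chunks being served. When all $L$ threads are busy, each thread's residual service time is $\mathrm{Exp}(\mu)$, independent of history and independent of which chunk was assigned; hence the time to the next completion is $\min$ of $L$ i.i.d.\ $\mathrm{Exp}(\mu)$ variables, i.e.\ $\mathrm{Exp}(L\mu)$. Memorylessness also neutralizes preemption: interrupting a chunk and later resuming it, or swapping it for a different chunk, leaves the remaining service-time distribution of the thread unchanged. So between any two consecutive arrival/departure events, the next departure epoch has the same conditional distribution for every work-conserving policy.

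My preferred way to package this is a coupling/marked-process argument. Drive the system with a single exogenous Poisson process of rate $L\mu$ on $[0,\infty)$; on each point, a chunk completion occurs if and only if the system is nonempty at that instant (the ``$L$ threads busy'' property above guarantees that this rule is consistent with the actual per-thread exponentials). Then the departure process $(t_j)$ is a deterministic functional of this Poisson process and the (policy-independent) arrival process $(s_j)$ alone, and in particular has the same distribution under every work-conserving policy. The main obstacle is the bookkeeping to justify the coupling rigorously across preemptions and simultaneous arrival/departure events, but the conceptual content is just the two observations above: $d_{\min} \ge L$ guarantees full thread occupancy, and memorylessness erases any trace of which chunk a given thread is working on.
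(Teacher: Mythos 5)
The paper never proves this lemma itself: it is imported verbatim from \cite{ShengboInfocom} (Theorem 6.4), so there is no in-paper proof to compare against. Your argument is a correct, self-contained derivation and is essentially the standard reasoning behind the cited result: $d_{\min}\ge L$ gives $u_i(t)=n_i-g_i(t)\ge n_i-k_i+1=d_i\ge L$ available chunks for any unfinished request, so every work-conserving policy has all $L$ threads busy exactly when the system is nonempty; memorylessness makes each busy thread's residual time exponential with rate $\mu$ irrespective of chunk identity, past preemptions, or switches, so inter-departure times are exponential with rate $L\mu$ while the system is busy; and since the busy/idle indicator at any instant depends only on the (deterministic) chunk arrival times and the departures already generated, the departure process is a measurable functional of a single rate-$L\mu$ Poisson process and the arrivals, hence its law is policy-independent. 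Two bookkeeping points should be made explicit in a full write-up: (i) the argument needs every busy thread to serve a distinct, not-yet-downloaded chunk, which is feasible because at least $L$ such chunks exist and is what guarantees that every completion counts as a departure (by the MDS property any $k_i$ distinct coded chunks finish request $i$); a degenerate ``work-conserving'' policy that duplicates a chunk across threads would produce wasted completions and must be excluded from the policy class, consistent with the model's assignment of threads to available chunks; (ii) at a request completion the redundant threads are terminated and reassigned without generating departures, and memorylessness ensures the residual time to the next departure remains exponential with rate $L\mu$ across such events, so the coupling survives them.
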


%
%
We propose a preemptive Shortest Expected Remaining Processing Time first policy with Redundant thread assignment (\textbf{preemptive SERPT-R}):

\vspace{0.1cm}


\emph{
Suppose that, at any time $t$, there are $V$ unfinished requests $i_1,i_2,\ldots, i_V$, such that $\alpha_j$ chunks need to be downloaded for completing request $i_j$. Under SERPT-R, each idle thread is assigned to serve one available chunk of request $i_j$ with the smallest $\alpha_j$. (Due to storage redundancy, the number of available chunks of request $i_j$ is larger than $\alpha_j$.) If all the available chunks of request $i_j$ are under service, then the idle thread is assigned to serve one available chunk of request $i_{j'}$ with the second smallest $\alpha_{j'}$. This procedure goes on, until all $L$ threads are occupied or all the available chunks of the $V$ unfinished requests are under services.
}
\vspace{0.1cm}

This policy is an extension of Shortest Remaining Processing Time first (SRPT) policy \cite{Schrage68,Smith78} because it schedules parallel and redundant downloading threads to serve the requests with the least workload.
The following theorem shows that this policy is delay-optimal under certain conditions.



\begin{Theorem}\label{thm1}
Suppose that (i) $d_{\min} \geq L$, 
(ii) preemption is allowed,
and (iii) the chunk downloading time is {i.i.d.} exponentially distributed with mean $1/\mu$. Then, for any given request parameters $N$ and $(a_i,k_i,n_i)_{i=1}^N$, preemptive SERPT-R is delay-optimal among all online policies.
\end{Theorem}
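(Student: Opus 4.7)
The plan is to leverage Lemma~\ref{lem0-1} to couple the chunk departure processes across all work-conserving policies, thereby reducing the original thread-scheduling problem to a classical single-machine scheduling problem that is solved optimally by the SRPT rule.

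First, I would restrict attention to work-conserving policies. By Remark~1, under preemption any delay-optimal policy is work-conserving, so it suffices to compare SERPT-R to work-conserving online policies. Under the hypothesis $d_{\min}\geq L$, every unfinished request $i$ has at least $n_i-(k_i-1)=d_i\geq d_{\min}\geq L$ available chunks at any time, so SERPT-R itself is well-defined and work-conserving: all $L$ threads can always be assigned to chunks of the unfinished request with the smallest $\alpha_j$.

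Next, I would apply Lemma~\ref{lem0-1} to couple all work-conserving policies to a common sequence of chunk departure epochs $t_1<t_2<\cdots$. Since the distribution of $(t_j)$ does not depend on the policy, we may work on a single probability space where every work-conserving policy sees the same realization. On this coupled space, the only degree of freedom that distinguishes one policy from another is how the completion at each epoch $t_j$ is credited to an unfinished request.

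Third, I would reduce the problem to an equivalent single-machine scheduling problem: given arrival times $a_i$ and job sizes $k_i$, assign each fixed epoch $t_j$ to some request $i$ with $a_i\leq t_j$ that is still unfinished, so as to minimize $\sum_{i=1}^{N}c_i$, where $c_i$ is the epoch of the $k_i$-th completion credited to~$i$. The SRPT rule assigns $t_j$ to the available request with the smallest remaining count, which is precisely the choice that SERPT-R makes on the coupled space. I would then prove pathwise optimality of SRPT by a standard pairwise exchange argument: if some schedule credits $t_j$ to a request $j'$ while another unfinished request $i$ with $\alpha_i<\alpha_{j'}$ is available, let $S$ be the set of future epochs the schedule assigns to $\{i,j'\}$, relabeled as $s_1<s_2<\cdots<s_{|S|}$; swap the crediting so that $i$ receives the first $\alpha_i$ of these epochs and $j'$ the remaining $\alpha_{j'}$. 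The new completion time of $j'$ equals $s_{|S|}$, unchanged, while the new completion time of $i$ drops to $s_{\alpha_i}$, no larger than before. Thus $\sum_i c_i$ weakly decreases. Iterating the swap converges to the SRPT schedule, which coincides with SERPT-R's schedule under the coupling, giving $\sum_{i=1}^{N}c_{i,\text{SERPT-R}}\leq\sum_{i=1}^{N}c_{i,\pi}$ sample-pathwise; taking expectations yields $\overline{D}_{\text{SERPT-R}}\leq\overline{D}_\pi$ for every online policy $\pi$.

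The main obstacle will be to handle the online, dynamic arrivals cleanly in the exchange argument: new requests may appear between epochs being exchanged, but since each exchange only permutes epochs already credited to the pair $\{i,j'\}$ and never repurposes epochs credited to other requests, the swapped crediting remains a valid (offline) assignment of the fixed epoch set. Combined with the fact that SERPT-R's decision at each epoch depends only on information observable at that epoch, the induction on exchanges terminates at SERPT-R and establishes its delay-optimality.
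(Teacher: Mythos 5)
Your proposal is correct and follows essentially the same route as the paper: restrict to work-conserving policies via Remark 1, invoke Lemma \ref{lem0-1} to make the chunk-departure epochs policy-independent, and then argue that SERPT-R's crediting of each epoch to the request with the fewest remaining chunks is pathwise optimal in the SRPT sense. The only difference is presentational: you phrase the invariance as a coupling and spell out the pairwise-exchange argument, whereas the paper integrates over the common departure distribution and cites the classical SRPT optimality results of Schrage and Smith.
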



{\bf Remark 2:} Theorem \ref{thm1} and the subsequent theoretical results of this paper 
are difficult to establish for the following reasons: 1) Each request $i$ is partitioned into a batch of $k_i$ chunk downloading tasks, and the processing time of each task is random. 2) There are $n_i-k_i$ redundant chunks for request $i$, such that completing any $k_i$ of the $n_i$ tasks would complete the request. 3) The system has $L$ threads which can simultaneously process $L$ tasks belonging to one or multiple requests. 4) If redundant downloading threads are scheduled, the associated extra system load must be  considered when evaluating the delay performance.

\begin{proof}
We provide a proof sketch of Theorem \ref{thm1}. Consider an arbitrarily given chunk departure sample path $(t_1,t_2,\ldots)$. According to the property of the SRPT principle \cite{Schrage68,Smith78}, preemptive SERPT-R minimizes $\frac{1}{N}\sum_{i=1}^{N}\left(c_{i,\pi}-a_i\right)$ for any given sample path $(t_1,t_2,\ldots)$. Further, Lemma \ref{lem0-1} tells us that the distribution of $(t_1,t_2,\ldots)$ is invariant among the class of work-conserving policies. By this, preemptive SERPT-R is delay-optimal among the class of work-conserving policies. Finally, since a delay-optimal policy must be work-conserving when preemption is allowed, Theorem \ref{thm1} follows.
\ifreport
More details are provided in Appendix \ref{app0}.
\else
See \cite{tech_report2014} for the details.
\fi
\end{proof}



%

In Theorem 6.4 of \cite{ShengboInfocom}, it was shown that a First-Come-First-Served policy with Redundant thread assignment (FCFS-R) is delay-optimal when $k_{i}=k$ for all $i$ and $d_{\min}\geq L$. In this case, preemptive SERPT-R reduces to the following policy: {After a request departs from the system, pick any waiting request (not necessarily the request arrived the earliest) and assign all $L$ threads to serve the available chunks of
this request until it departs}. Hence, FCFS-R belongs to the class of SERPT-R policies, and Theorem 6.4 of \cite{ShengboInfocom} is a special case of Theorem \ref{thm1}.


\subsection{General Storage Redundancy, Preemption is Allowed}
When $d_{\min}< L$, some requests may have less than $L$ available chunks, such that not all of the $L$ threads can be assigned to serve it.
In this case, SERPT-R may not be delay-optimal. This is illustrated in the following example.

\begin{example}\label{ex1}
%
Consider two requests with parameters given as $(k_1 = 1,n_1 = 4, d_1 =4, a_1 = 0)$ and $(k_2 = 2,n_2 = 2, d_2 =1, a_2 = 0)$. The number of threads is $L=4$. Under SERPT-R, all $4$ threads are assigned to serve request $1$ after time zero. However, after request $1$ is completed, the chunk downloading rate is reduced from $4\mu$ to $2\mu$, because request $2$ only has $n_2 = 2$ chunks. Furthermore, after one chunk of request $2$ is downloaded, the chunk downloading rate is reduced from $2\mu$ to $\mu$. The average flow time of SERPT-R is $\overline{D}_{\text{SERPT-R }}=1/\mu$ seconds.

We consider another policy $Q$: after time zero, $2$ threads are assigned to serve request 1 and $2$ threads are assigned to serve request $2$. After the first chunk is downloaded, if the downloaded chunk belongs to request $1$, then request 1 departs and 2 threads are assigned to serve request $2$. If the downloaded chunk belongs to request $2$, then 3 threads are assigned to serve request $1$ and $1$ thread is assigned to serve request $2$.
After the second chunk is downloaded, only one request is left and the threads are assigned to serve the available chunks of this request. The average flow time of policy $Q$ is $\overline{D}_{Q}=61/(64\mu)$ seconds. Hence, SERPT-R is not delay-optimal.
\end{example}


Next, we bound the delay penalty associated with removing the condition $d_{\min}\geq L$.
\begin{Theorem}\label{thm3}
If (i) preemption is allowed and (ii) the chunk downloading time is {i.i.d.} exponentially distributed with mean $1/\mu$. Then, for any given request parameters $N$ and $(a_i,k_i,n_i)_{i=1}^N$, the average flow time of preemptive SERPT-R satisfies
\begin{eqnarray}
&&\!\!\!\!\!\!\!\!\!\!\!\!\!\!\!\!\!\!\!\!\!\overline{D}_{\text{opt}}\leq \overline{D}_{\text{prmp,SERPT-R}}
\leq\overline{D}_{\text{opt}}+
\frac{1}{\mu}\sum_{l=d_{\min}}^{L-1}\frac{1}{l},\label{eq_4-1-1}
\end{eqnarray}
where $d_{\min}$ is defined in \eqref{eq_def2}.
\end{Theorem}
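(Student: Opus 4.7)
The lower bound $\overline{D}_{\text{opt}}\le\overline{D}_{\text{prmp,SERPT-R}}$ follows immediately from the definition of $\overline{D}_{\text{opt}}$; the plan is therefore to pivot through an auxiliary instance for which Theorem~\ref{thm1} applies, and then pay for the remaining gap via a sample-path/state-evolution comparison.

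First I would construct an auxiliary instance by inflating each code from $(n_i,k_i)$ to $(n_i^{\text{aux}},k_i)$ with $n_i^{\text{aux}}=k_i+L-1$, so that $d_i^{\text{aux}}=L$ and $d_{\min}^{\text{aux}}\ge L$. Any online policy for the original instance lifts to the auxiliary one by declining to download the $L-d_i$ extra chunks of each file; this lift preserves real-chunk completion times, so $\overline{D}_{\text{opt,aux}}\le\overline{D}_{\text{opt}}$. Theorem~\ref{thm1} then gives $\overline{D}_{\text{aux,SERPT-R}}=\overline{D}_{\text{opt,aux}}\le\overline{D}_{\text{opt}}$, so it suffices to bound the flow-time increase $\overline{D}_{\text{prmp,SERPT-R}}-\overline{D}_{\text{aux,SERPT-R}}$ by $\frac{1}{\mu}\sum_{l=d_{\min}}^{L-1}\frac{1}{l}$.

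Next I would couple the two systems on a common probability space, running SERPT-R in both with identical per-thread exponential service clocks and the same priority ranking based on the remaining-chunk counts $\alpha_j$. The two coupled sample paths evolve identically whenever the total real-chunk availability $a(t)$ in the original system is at least $L$, because then both systems serve $L$ distinct chunks in parallel at rate $L\mu$; they diverge only during ``slow'' intervals in which $a(t)\in\{d_{\min}-1,\ldots,L-1\}$, during which the original serves at rate $a(t)\mu$ while the auxiliary still serves at $L\mu$ thanks to its extra redundancy.

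The main obstacle — and the place where the state-evolution argument advertised in the introduction becomes essential — is charging the accumulated slow-interval time to individual requests cleanly. Tracking the joint state $(\alpha_j)_j$ of remaining real chunks under the coupling, I would argue that for the request currently holding SERPT-R top priority, availability descends one step at a time through the values $L-1,L-2,\ldots,d_i\ge d_{\min}$, spending expected time $1/(l\mu)$ at state $l$ by the memoryless property of the exponential clocks. Summing these memoryless waits and using $\sum_{l=d_i}^{L-1}\left(\frac{1}{l}-\frac{1}{L}\right)\le\sum_{l=d_{\min}}^{L-1}\frac{1}{l}$ yields a per-request slow-tail gap of at most $\frac{1}{\mu}\sum_{l=d_{\min}}^{L-1}\frac{1}{l}$ relative to the auxiliary. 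The delicate bookkeeping that remains is to verify (i) that each request accumulates its slow-tail cost at most once over its lifetime, and (ii) that requests queued behind it inherit this slow tail as an additive shift rather than an amplified delay — both of which I expect to follow from the preemptive shortest-remaining structure of SERPT-R together with the fact that whenever $a(t)\ge L$ the coupling resynchronises the two systems.
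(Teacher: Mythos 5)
Your skeleton matches the paper's: inflate the codes so that $d_{\min}\geq L$ holds, invoke Theorem~\ref{thm1} to get $\overline{D}_{\text{aux,SERPT-R}}\leq\overline{D}_{\text{opt}}$, couple the two SERPT-R runs on common exponential clocks, and pay the remaining gap via a harmonic-sum catch-up bound (the descending-availability argument giving $\sum_{l=d_{\min}}^{L-1}\frac{1}{l\mu}$ is exactly how the paper converts a chunk deficit into time, using the fact that every unfinished request carries $d_i-1\geq d_{\min}-1$ redundant chunks). However, there is a genuine gap at the heart of the comparison step. The claim that the two coupled systems ``evolve identically whenever $a(t)\geq L$'' and that ``the coupling resynchronises'' once availability returns to $L$ is false: after the first slow interval the two systems have different sets of completed requests and different remaining-chunk vectors, so even when both serve $L$ chunks in parallel they serve \emph{different} chunks and the discrepancy persists; the auxiliary system may also empty and restart while the original carries leftover work into the next busy period. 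Consequently your bookkeeping items (i) and (ii) are not minor residual checks — without a uniform-in-time control on the lag, the ``additive shifts'' you grant to queued requests can in principle accumulate across the (possibly $\Theta(N)$ many) slow episodes, which would give a gap growing with load rather than the advertised constant.

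What closes this hole in the paper is a state-evolution invariant proved by induction over every chunk-departure and arrival event: with $\vec{\alpha}(t)$ the sorted remaining-chunk state under SERPT-R on the original instance and $\vec{\beta}(t)$ that of the auxiliary run, one shows $\sum_{i=j}^{\infty}\alpha_i(t)\leq\sum_{i=j}^{\infty}\beta_i(t)+L-d_{\min}$ for \emph{all} tail indices $j$ and all $t$. The reason this lag cannot grow past $L-d_{\min}$ is precisely the redundancy bound: whenever $\sum_{i=j}^{\infty}\alpha_i(t)\geq L-d_{\min}+1$, the requests counted in that tail already expose at least $L$ available chunks, so under SERPT-R all $L$ threads of the original system are serving those requests and it downloads relevant chunks at least as fast as the auxiliary; a separate (easy) step shows arrivals preserve the inequality. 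Applying the invariant at $j$ equal to the number of unfinished requests in the auxiliary system shows the original needs at most $L-d_{\min}$ further downloads to match the auxiliary's completion count, and only then does your harmonic-sum estimate apply. Your proposal names the right ingredients but replaces this invariant with an unsubstantiated (and incorrect) resynchronisation claim, so as written the constant, load-independent bound does not follow.
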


\begin{proof}
Here is a proof sketch of Theorem \ref{thm3}. We first use a {\it state evolution} argument to show that, after removing the condition $d_{\min}\geq L$, SERPT-R needs to download $L-d_{\min}$ or fewer additional chunks after any time $t$, so as to accomplish the same number of requests that are completed by SERPT-R with the condition $d_{\min}\geq L$ during $(0,t]$. Further, according to the properties of exponential distribution, the average time for the system to download $L-d_{\min}$ extra chunks under the conditions of Theorem \ref{thm3} is upper bounded by the last term of \eqref{eq_4-1-1}. This completes the proof.
\ifreport
See Appendix \ref{app1} for the details.
\else
See \cite{tech_report2014} for the details.
\fi
\end{proof}
Note that if $d_{\min}\geq L$, the last term in \eqref{eq_4-1-1} becomes zero which corresponds to the case of Theorem \ref{thm1}; if $d_{\min}< L$, the last term in \eqref{eq_4-1-1} is upper bounded by $\frac{1}{\mu}\left[\ln(\frac{L-1}{d_{\min}})+1\right]$. Therefore, the delay penalty caused by low storage redundancy is of the order $O(\ln L/\mu)$, and is insensitive to increasing $L$. Further, this delay penalty remains constant for any request arrival process and system traffic load.



\subsection{High Storage Redundancy, Preemption is Not Allowed}
Under preemptive SERPT-R, each thread can switch to serve another request at any time. However, when preemption is not allowed, a thread must complete or terminate the current chunk downloading task before switching to serve another request. In this case, SERPT-R may not be delay-optimal, as illustrated in the following example.

\begin{example}\label{ex2}
Consider two requests with parameters given as $(k_1 = 2,n_1 = 3,d_1=2, a_1 = 0)$ and $(k_2 = 1,n_2 = 2,d_2=2, a_2 = \varepsilon)$, where $\varepsilon>0$ can be arbitrarily close to zero. The number of threads is $L=2$, the chunk downloading time is {i.i.d.} exponentially distributed with mean $1/\mu$. Under SERPT-R, the two threads are assigned to serve request $1$ after time zero. After the first chunk is downloaded, one thread is assigned to serve request $2$ and the other thread remains to serve request $1$. After the second chunk is downloaded, one of the requests has departed, and the two threads are assigned to serve the remaining request.
The average flow time of SERPT-R is $\overline{D}_{\text{SERPT-R}}=5/(4\mu)-\varepsilon/2$ seconds.

We consider another non-preemptive policy $Q$: the threads remain idle until time $\varepsilon$. After $\varepsilon$, the two threads are assigned to serve request $2$. After the first chunk is downloaded, request $2$ has departed. Then, the two threads are assigned to serve request $1$, until it departs. The average flow time of policy $Q$ is $\overline{D}_{Q}=1/\mu+\varepsilon/2$ seconds. Since $\varepsilon$ is arbitrarily small, SERPT-R is not delay-optimal when preemption is not allowed.
\end{example}


We propose a non-preemptive Shortest Expected Differential Processing Time first policy with Redundant thread assignment (\textbf{non-preemptive SEDPT-R}), where the service priority of a file is determined by the difference between the number of remaining chunks of the file and the number of threads that has been assigned to the file.

\vspace{0.1cm}
\emph{
Suppose that, at any time $t$, there are $V$ unfinished requests $i_1,i_2,\ldots, i_V$, such that $\alpha_j$ chunks need to be downloaded for completing request $i_j$ at time $t$ and $\delta_j$ threads have been assigned to serve request $i_j$. Under non-preemptive SEDPT-R, each idle thread is assigned to serve one available chunk of request $i_j$ with the smallest $\alpha_j-\delta_j$. (Due to storage redundancy, the number of available chunks of request $i_j$ is larger than $\alpha_j$. Hence, it may happen that $\alpha_j-\delta_j<0$ because of redundant chunk downloading.) If all the available chunks of request $i_j$ are under service, then the idle thread is assigned to serve one available chunk of request $i_{j'}$ with the second smallest $\alpha_{j'}-\delta_{j'}$. This procedure goes on, until all $L$ threads are occupied or all the available chunks of the $V$ unfinished requests are under services.
}
\vspace{0.1cm}




{The intuition behind non-preemptive SEDPT-R is that $\delta_j$ chunks of request $i_j$ will be under service after time $t$ for any non-preemptive policy, and thus should be excluded when determining the service priority of request $i_j$. This is different from the traditional SRPT-type policies \cite{Schrage68,Smith78,Michael2012book,Fox:2011:OSI:2133036.2133046,doi:10.1137/090772228}, which do not exclude the chunks under service when determining the service priorities of the requests.}
The delay performance of this policy is characterized in the following theorem: 
\begin{Theorem}\label{thm2}
Suppose that (i) $d_{\min}\geq L$, 
(ii) preemption is \textbf{not} allowed, and (iii) the chunk downloading time is {i.i.d.} exponentially distributed with mean $1/\mu$. Then, for any given request parameters $N$ and $(a_i,k_i,n_i)_{i=1}^N$, the average flow time of non-preemptive SEDPT-R satisfies
\begin{eqnarray} \label{eq_2-1}
\overline{D}_{\text{opt}}\leq\overline{D}_{\text{non-prmp,SEDPT-R}}
\leq\overline{D}_{\text{opt}}\!+1/\mu. 
\end{eqnarray}
\end{Theorem}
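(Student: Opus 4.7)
The plan is to establish the gap in~\eqref{eq_2-1} by comparing non-preemptive SEDPT-R against preemptive SERPT-R, which by Theorem~\ref{thm1} attains $\overline{D}_{\text{opt}}$ whenever $d_{\min}\geq L$. It therefore suffices to show the pathwise bound $\sum_{i=1}^{N}\mathbb{E}\bigl[c_{i,\text{non-prmp,SEDPT-R}}-c_{i,\text{prmp,SERPT-R}}\bigr]\leq N/\mu$. Since both policies are work-conserving, Lemma~\ref{lem0-1} gives the distributional invariance of the chunk-departure process $(t_1,t_2,\ldots)$; I would couple the two policies on a single probability space so that they share the same realization of $(t_1,t_2,\ldots)$, and then carry out a pathwise comparison.

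Next, I would compare the two schedules at every chunk-departure epoch. The $\alpha_j-\delta_j$ priority in SEDPT-R is engineered so that, after subtracting the chunks already locked in service, the ordering induced on the remaining work should coincide with the SRPT ordering used by preemptive SERPT-R. I would then adapt the classical SRPT exchange argument of~\cite{Schrage68,Smith78} to show that SEDPT-R's sequence of request completions can be reproduced from SERPT-R's by at most one ``deferred'' chunk-assignment per newly arrived request that in SERPT-R would have triggered a preemption. Each deferred assignment merely waits for a single currently-busy thread to finish its chunk before reassignment occurs.

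Finally, the memoryless property of the exponential chunk service time implies that the residual service time of the thread being waited on is itself exponential with mean $1/\mu$. Summing the per-request deferrals would yield a total excess flow time of at most $N/\mu$, and dividing by $N$ gives the $1/\mu$ average gap in~\eqref{eq_2-1}. The lower bound $\overline{D}_{\text{opt}}\leq\overline{D}_{\text{non-prmp,SEDPT-R}}$ is immediate from the definition of the optimum over online policies.

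The main obstacle will be the pathwise exchange argument of the second step. When several short requests arrive during a single locked-in service interval, one must verify that their delays do not compound: the argument has to show that the symmetric difference between the in-service sets of SEDPT-R and SERPT-R stays bounded, and that each newly-arrived short request contributes at most one additional mismatch before it is resolved by the next relevant chunk departure. Handling this bookkeeping carefully, while also accounting for the possibly negative values of $\alpha_j-\delta_j$ that arise when SEDPT-R schedules redundant threads, is where the real work of the proof lies.
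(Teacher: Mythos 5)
Your overall framing matches the paper's: by Theorem~\ref{thm1}, preemptive SERPT-R attains $\overline{D}_{\text{opt}}$ when $d_{\min}\geq L$, and by Lemma~\ref{lem0-1} the two policies can be coupled on the same chunk-departure sample path, so everything reduces to bounding how far non-preemptive SEDPT-R lags behind preemptive SERPT-R and converting that lag into $1/\mu$ via memorylessness. However, the step you yourself flag as ``the real work'' is the entire content of the theorem, and the route you sketch through it is not the one that works. A per-request exchange argument with ``at most one deferred assignment per newly arrived request'' is shaky on several counts: the two policies may complete requests in different orders, so per-request completion-time comparisons need not hold; a newly arrived short request may have to wait for up to $L$ in-service chunks (not one) before it can occupy all threads; and the chunks SEDPT-R is forced to finish are charged against other requests' remaining work, which is precisely the compounding effect you concede you cannot yet control. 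Nothing in the proposal supplies the mechanism that keeps the total lag from growing with the number of arrivals or the traffic load, which is what ``constant gap'' requires.

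The paper's mechanism is a global state-evolution (tail-sum) invariant rather than an SRPT-style swap of individual assignments. Writing $\vec{\beta}(t)$ for the sorted remaining-chunk vector under preemptive SERPT-R and $(\vec{\alpha}(t),\vec{\delta}(t))$ for the remaining-chunk and assigned-thread vectors under non-preemptive SEDPT-R, one proves by induction over chunk-departure and arrival events (Lemmas~\ref{lem_non_prmp1} and~\ref{lem_non_prmp2}) that $\sum_{i=j}^{\infty}[\alpha_i(t)-\delta_i(t)]\leq\sum_{i=j}^{\infty}\beta_i(t)$ for every tail index $j$ and every $t$. Taking $j$ just beyond the number of requests SERPT-R has completed by time $t$ shows that SEDPT-R needs at most $\sum_{i}\delta_i(t)\leq L$ additional chunk downloads to match that completion count, and since all $L$ threads are then busy the aggregate departure rate is $L\mu$, so the expected catch-up time is $L/(L\mu)=1/\mu$. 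Note that the $1/\mu$ comes from this aggregate accounting (at most $L$ extra chunks at rate $L\mu$), not from one residual exponential per request; if you want to repair your outline, this invariant is the missing ingredient that turns your informal ``delays do not compound'' claim into a provable statement.
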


\begin{proof}
We provide a proof sketch of Theorem \ref{thm2}. Theorem \ref{thm1} tells us that preemptive SERPT-R provides a lower bound of $\overline{D}_{\text{opt}}$. On the other hand, non-preemptive SEDPT-R provides an upper bound of $\overline{D}_{\text{opt}}$.
Thus, we need to show that the delay gap between preemptive SERPT-R and non-preemptive SEDPT-R is at most $1/\mu$. Towards this goal, we use a {\it state evolution} argument to show that {for any time $t$ and any given sample path of chunk departures $(t_1,t_2,\ldots)$, non-preemptive SEDPT-R needs to download $L$ or fewer additional chunks after time $t$, so as to accomplish the same number of requests that are completed under preemptive SERPT-R during $(0,t]$}. By the properties of exponential distribution, the average time for the $L$ threads to download $L$ chunks under non-preemptive SEDPT-R is $1/\mu$, and Theorem \ref{thm2} follows.
\ifreport
See Appendix \ref{app2} for the details.
\else
See \cite{tech_report2014} for the details.
\fi
\end{proof}





Theorem \ref{thm2} tells us that the delay gap between non-preemptive SEDPT-R and the optimal policy is at most the average downloading time of one chunk by each thread, i.e., $1/\mu$. Intuitively speaking, this is because each thread only needs to wait for downloading one chunk, before switching to serve another request. However, the proof of Theorem \ref{thm2} is non-trivial, because it must work for any possible sample path of the downloading procedure.

\subsection{General Storage Redundancy, Preemption is Not Allowed}

When preemption is {not} allowed and the condition $d_{\min}\geq L$ is removed, we have the following result.
\begin{Theorem}\label{thm4}
Suppose that (i) preemption is \textbf{not} allowed, and (ii) the chunk downloading time is {i.i.d.} exponentially distributed with mean $1/\mu$. Then, for any given request parameters $N$ and $(a_i,k_i,n_i)_{i=1}^N$, the average flow time of non-preemptive SEDPT-R satisfies
\begin{eqnarray}\label{coro2-1-1}
&&\!\!\!\!\!\!\!\!\!\!\!\!\!\!\!\!\!\!\!\!\!\!\overline{D}_{\text{opt}}\leq\overline{D}_{\text{non-prmp,SEDPT-R}}
\leq\overline{D}_{\text{opt}}\!+\frac{1}{\mu}+\frac{1}{\mu}\sum_{l=d_{\min}}^{L-1}\frac{1}{l},
\end{eqnarray}
where $d_{\min}$ is defined in \eqref{eq_def2}.
\end{Theorem}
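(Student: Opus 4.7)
The plan is to chain the state-evolution arguments behind Theorems \ref{thm2} and \ref{thm3}, using an auxiliary delay-optimal benchmark. Introduce a hypothetical system $\mathcal{A}$ that is identical to the original one except that its storage redundancy is inflated so that $d_{\min}\geq L$ holds (for instance, by appending phantom chunks that never belong to any real request), and that runs preemptive SERPT-R. By Theorem \ref{thm1}, the average flow time $\overline{D}_{\mathcal{A}}$ of $\mathcal{A}$ lower-bounds $\overline{D}_{\text{opt}}$, so the target inequality \eqref{coro2-1-1} reduces to showing
\[
\overline{D}_{\text{non-prmp,SEDPT-R}}\;\leq\;\overline{D}_{\mathcal{A}}+\frac{1}{\mu}+\frac{1}{\mu}\sum_{l=d_{\min}}^{L-1}\frac{1}{l}.
\]

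Next I would couple the chunk-downloading randomness of the real system and $\mathcal{A}$ and carry out a joint state-evolution induction on chunk departure epochs. The invariant to maintain is: at every time $t$ and on every coupled sample path, non-preemptive SEDPT-R needs at most $L$ additional ``non-preemption'' chunk completions plus at most $L-d_{\min}$ additional ``low-redundancy'' chunk completions after $t$ in order to have finished every request that $\mathcal{A}$ has finished by $t$. The $L$-term arises exactly as in the proof of Theorem \ref{thm2}: up to $L$ threads can be locked on suboptimal chunks of lower-priority requests that the SRPT principle would have interrupted. The $L-d_{\min}$ term arises exactly as in the proof of Theorem \ref{thm3}: when the number of available chunks of the currently highest-priority request falls below $L$, the remaining threads drift to lower-priority requests, creating a deficit relative to $\mathcal{A}$ whose redundancy is artificially boosted.

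The final step converts the chunk deficit into an expected time gap using the memoryless property of exponential downloading times. The $L$ ``non-preemption'' chunks are served by $L$ busy threads in parallel, giving expected extra time $L\cdot(L\mu)^{-1}=1/\mu$. The $L-d_{\min}$ ``low-redundancy'' chunks are served by a number of threads that decreases from $L-1$ down to $d_{\min}$ as each excess request completes, contributing $\sum_{l=d_{\min}}^{L-1}1/(l\mu)$ exactly as in the proof of Theorem \ref{thm3}. Summing the two contributions yields \eqref{coro2-1-1}.

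The main obstacle is the joint state-evolution: a thread freed from a locked low-priority chunk can immediately be constrained by low redundancy when reassigned to the currently highest-priority request, which threatens to entangle the two penalties. I would address this by defining two separate potential functions, one tracking still-locked thread commitments and one tracking the available-chunk shortfall of the current highest-priority request in $\mathcal{A}$, and showing that each evolves monotonically and additively across chunk-departure epochs. Once this invariant is established, the rest of the argument reuses the machinery already developed for Theorems \ref{thm2} and \ref{thm3}.
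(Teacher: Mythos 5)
Your proposal is correct and follows essentially the same route as the paper: the appendix proof of Theorem \ref{thm4} is likewise a sketch that compares non-preemptive SEDPT-R against the inflated-redundancy preemptive SERPT-R benchmark (the policy $Q$ of Theorem \ref{thm3}, which lower-bounds $\overline{D}_{\text{opt}}$ via Theorem \ref{thm1}), bounds the chunk deficit by $L$ chunks for non-preemption plus $L-d_{\min}$ chunks for low redundancy, and converts this to $\frac{1}{\mu}+\frac{1}{\mu}\sum_{l=d_{\min}}^{L-1}\frac{1}{l}$ exactly as you do. The entanglement issue you flag is resolved in the paper by the observation that the non-preemption penalty need only be charged while all $L$ threads are active (otherwise every available chunk is already under service), and your two-potential-function plan is a reasonable substitute for that remark.
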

\ifreport
\begin{proof}
See Appendix \ref{app3-1}.
\end{proof}
\else

\fi

If $d_{\min}\geq L$, the last term in \eqref{coro2-1-1} becomes zero which corresponds to the case of Theorem \ref{thm2}.

\section{Non-Exponential Chunk Downloading Time}\label{sec4}
In this section, we consider two classes of general downloading time distributions: 
New-Longer-than-Used (NLU) distributions and New-Shorter-than-Used (NSU) distributions, defined as follows.\footnote{Note that New-Longer-than-Used (New-Shorter-than-Used) is \textbf{equivalent} to the term New-Better-than-Used (New-Worse-than-Used) used in reliability theory \cite{Bagnoli2005,StochasticOrderBook}, where ``better'' means a longer lifetime. However, this may lead to confusion in the current paper, where ``better'' means a shorter delay. We choose to use New-Longer-than-Used (New-Shorter-than-Used) to avoid confusion. In a recent work \cite{shah-Allerton-2013}, the New-Longer-than-Used (New-Shorter-than-Used) property was termed light-everywhere (heavy-everywhere).}

\begin{definition}
\textbf{New-Longer-than-Used distributions}: A distribution on $[0,\infty)$ is said to be New-Longer-than-Used (NLU), if for all $t,\tau\geq 0$ and $\mathbb{P}(X>\tau)>0$, the distribution satisfies
\begin{eqnarray}\label{eq_light}
\mathbb{P}(X>t) \geq \mathbb{P}(X>t+\tau|X>\tau).
\end{eqnarray}
%
%
\textbf{New-Shorter-than-Used distributions}: A distribution on $[0,\infty)$ is said to be New-Shorter-than-Used (NSU), if for all $t,\tau\geq 0$ and $\mathbb{P}(X>\tau)>0$, the distribution satisfies
\begin{eqnarray}\label{eq_heavy}
\mathbb{P}(X>t) \leq \mathbb{P}(X>t+\tau|X>\tau).
\end{eqnarray}
\end{definition}

NLU (NSU) distributions are closely related to log-concave (log-convex) distributions.
Many commonly used distributions are NLU or NSU distributions \cite{Bagnoli2005}. In practice, NLU distributions can be used to characterize the scenarios where the downloading time is a constant value followed by a short latency tail. For instance, recent studies \cite{addShengboOriginal,Liang2013_2} suggest that the data downloading time of Amazon AWS can be approximated as a constant delay plus an exponentially distributed random variable, which is an NLU distribution. On the other hand, NSU distributions can be used to characterize occasional slow responses resulting from TCP retransmissions, I/O interference, database blocking and/or even disk failures.

We will require the following definitions: Let $\vec{x} = (x_1, x_2,\ldots,x_m)$ and $\vec{y} = (y_1, y_2,\ldots,y_m)$ be two vectors in $\mathbb{R}^m$, then we denote $\vec{x} \leq \vec{y}$ if $x_i \leq y_i$ for $i = 1,2,\ldots,m$.
\begin{definition}
\textbf{Stochastic Ordering}: \cite{StochasticOrderBook}
Let ${X}$ and ${Y}$ be two random variables. Then, ${X}$ is said to be {stochastically smaller than ${Y}$} (denoted as ${X}\leq_{\text{st}}{Y}$), if
\begin{eqnarray}
\mathbb{P}({X}>t) \leq \mathbb{P}({Y}>t)~\text{for all }t\in \mathbb{R}.
\end{eqnarray}
\end{definition}

\begin{definition}
\textbf{Multivariate Stochastic Ordering}: \cite{StochasticOrderBook}
A set $U \subseteq \mathbb{R}^m$ is called \emph{upper} if $\vec{y} \in U$ whenever $\vec{y}\geq \vec{x}$ and $\vec{x} \in U$.
Let $\vec{X}$ and $\vec{Y}$ be two random vectors. Then, $\vec{X}$ is said to be {stochastically smaller than $\vec{Y}$} (denoted as $\vec{X}\leq_{\text{st}}\vec{Y}$), if
\begin{eqnarray}
\mathbb{P}(\vec{X}\in U) \leq \mathbb{P}(\vec{Y}\in U)~\text{for all upper sets}~U\subseteq \mathbb{R}^m.
\end{eqnarray}
\end{definition}
Stochastic ordering of stochastic processes (or infinite vectors) can be defined similarly \cite{StochasticOrderBook}.

\subsection{NLU Chunk Downloading Time Distributions}\label{sec4:light}

%
%
%

We consider a non-preemptive Shortest Expected Differential Processing Time first policy with No Redundant thread assignment (\textbf{non-preemptive SEDPT-NR}):

\vspace{0.1cm}

\emph{
Suppose that, at any time $t$, there are $V$ unfinished requests $i_1,i_2,\ldots, i_V$, such that $\alpha_j$ chunks need to be downloaded for completing request $i_j$  at time $t$ and $\delta_j$ threads have been assigned to serve request $i_j$. Under non-preemptive SEDPT-NR, each idle thread is assigned to serve one available chunk of request $i_j$ with the smallest $\alpha_j-\delta_j$. If $\alpha_j$ threads have been assigned to request $i_j$, then the idle thread is assigned to serve one available chunk of request $i_{j'}$ with the second smallest $\alpha_{j'}-\delta_{j'}$. This procedure goes on, until all $L$ threads are occupied or each request $i_j$ is served by $\alpha_j$ threads.
}
\vspace{0.1cm}

Note that since at most $\alpha_j$ threads are assigned to request $i_j$, we have $\alpha_j-\delta_j\geq0$ for all $i_j$ under non-preemptive SEDPT-NR.
SEDPT-NR is a non-work-conserving policy. When preemption is allowed, the delay performance of SEDPT-NR can be improved by exploiting the idle threads to download redundant chunks. This leads to a preemptive Shortest Expected Differential Processing Time first policy with Work-Conserving Redundant thread assignment (\textbf{preemptive SEDPT-WCR}):

\vspace{0.1cm}
\emph{Upon the decision of {SEDPT-NR}, if each request $i_j$ is served by $\alpha_j$ threads and there are still some idle threads, then assign these threads to download some redundant chunks to avoid idleness. When a new request arrives, the threads downloading redundant chunks will be preempted to serve the new arrival request.}
\vspace{0.1cm}

Let us consider the service time for a thread to complete downloading one chunk. If the thread has spent $\tau$ seconds on one chunk, the tail probability for completing the current chunk under service is $\mathbb{P}(X>t+\tau|X>\tau)$. On the other hand, the tail probability for switching to serve a new chunk is $\mathbb{P}(X>t)$. Since the chunk downloading time is \emph{i.i.d.} NLU, it is stochastically better to keep downloading the same chunk than switching to serve a new chunk. 

\begin{lemma}\label{lem2} Suppose that (i) the system load is high such that all $L$ threads are occupied at all time $t\geq0$ and (ii) the chunk downloading time is {i.i.d.} NLU. Then, for any given request parameters $N$ and $(a_i,k_i,n_i)_{i=1}^N$, the chunk departure instants $(t_1,t_2,\ldots)$ under non-preemptive SEDPT-NR are stochastically smaller than those under any other online policy.
\end{lemma}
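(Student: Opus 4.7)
The plan is to establish the multivariate stochastic ordering $(t_1^{\text{SEDPT-NR}}, t_2^{\text{SEDPT-NR}}, \ldots) \leq_{\text{st}} (t_1^\pi, t_2^\pi, \ldots)$ for an arbitrary online policy $\pi$ by building a joint probability space on which the inequality holds path-wise for every index $j$. The driving probabilistic fact is the defining NLU inequality $\mathbb{P}(X>t+\tau\mid X>\tau)\le \mathbb{P}(X>t)$, which says that the residual service time of any chunk currently under service is stochastically no larger than the service time of a fresh chunk. Under assumption~(i) every thread is always busy, so SEDPT-NR processes $L$ distinct chunks at every instant with neither preemption nor redundant assignment; intuitively, these two features together extract the maximum benefit from the NLU property, so SEDPT-NR should dominate in the sense of stochastic order.

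To make this precise, I would first construct the coupling by attaching to each thread $\ell$ an i.i.d.\ sequence $Y_{\ell,1},Y_{\ell,2},\ldots$ of NLU-distributed random variables, interpreted as the (hypothetical) full service times consumed by thread $\ell$ on successive fresh chunks. Under SEDPT-NR, because no preemption occurs and every thread is always occupied by a distinct chunk, thread $\ell$ completes its $m$-th chunk at time $Y_{\ell,1}+\cdots+Y_{\ell,m}$, so the global departure process $(t_j^{\text{SEDPT-NR}})$ is exactly the order statistics of $L$ i.i.d.\ renewal streams on the coupled probability space.

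For the competitor $\pi$, I would couple its execution to the same pool $\{Y_{\ell,m}\}$ by feeding, at each decision epoch where $\pi$ makes a thread start a fresh chunk, the next unused variable from the pool to that thread. Two mechanisms can push $\pi$'s departure times past SEDPT-NR's. First, whenever $\pi$ preempts a chunk of age $\tau$ and switches to a fresh one, the NLU inequality, combined with the identity that $X\mid\{X>\tau\}$ has the same distribution as $\tau + (X-\tau\mid X>\tau)$, yields $(X\mid X>\tau)\le_{\text{st}}\tau+X'$ for an independent fresh copy $X'$, so the thread's time to next completion strictly grows in the stochastic sense. Second, whenever $\pi$ assigns more than $\alpha_j$ threads to request $i_j$, the service times used by the eventually-terminated threads consume $Y$-draws that produce no chunk departure, thereby reducing the rate of counted completions. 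Pairing the $Y$-draws between the two schedules so that the ``useful'' draws of $\pi$ are matched with the draws of SEDPT-NR shows that each departure epoch of $\pi$ exceeds, path-wise, the corresponding SEDPT-NR epoch by NLU-nonnegative residual terms.

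Finally, I would close the argument by induction on the departure index $j$, showing that on the coupled probability space $t_j^{\text{SEDPT-NR}}\le t_j^\pi$ a.s.\ for every $j\ge 1$. The base case uses the minimum-over-threads comparison at time $0$, and the inductive step applies the coupling above to the next ``race'' among the $L$ thread clocks, invoking the NLU inequality once per preempted or redundantly assigned chunk since the last common departure. The main obstacle will be the bookkeeping of the coupling when $\pi$ employs arbitrary history-dependent preemption and redundancy patterns: one must verify that the $Y$-draws can always be reassigned between the two schedules in a manner consistent with their conditional distributions, so that every invocation of the NLU inequality is valid. I expect this to be handled by an epoch-by-epoch construction of the coupling, combined with a standard conditional stochastic-ordering argument to chain the inequalities across all decision epochs of $\pi$ and obtain the multivariate stochastic ordering asserted in the lemma.
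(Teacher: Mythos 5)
Your proposal is built on the same two pillars as the paper's own argument: (a) under the high-load condition, SEDPT-NR never abandons a chunk in service, so each thread's completions form a renewal stream driven by fresh i.i.d.\ downloading times and the departure instants are their pooled completion epochs; and (b) the NLU inequality makes the residual time of a chunk already in service stochastically no longer than a fresh chunk, so any switch to a new chunk (by preemption or by terminating a redundant download) can only push later departures upward. The paper executes this as an induction on the departure index: given $(t_1,\ldots,t_j)$, under SEDPT-NR the next departure is $t_{j+1}=\min_{l=1,\ldots,L}[t_j+R_l]$ with no switching, which is compared in conditional distribution against any other work-conserving policy, and non-work-conserving policies are then dismissed by the remark that, with i.i.d.\ chunk times, idling only postpones departures. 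Your coupling framing is a legitimate alternative in principle (the multivariate ordering is equivalent to the existence of an a.s.\ ordered coupling), but the specific construction you describe --- feeding both schedules from one common pool $\{Y_{\ell,m}\}$ and ``matching useful draws'' --- does not yield $t_j^{\text{SEDPT-NR}}\leq t_j^{\pi}$ pathwise: on a fixed sample path a fresh draw can easily be smaller than the residual it replaces, and the two histories diverge, so the draws cannot simply be reused. The dominance has to be built epoch by epoch from the conditional distributions (Strassen-type), which is exactly the ``standard conditional stochastic-ordering argument'' you fall back on at the end --- at that point your proof coincides with the paper's induction, so the coupling layer adds bookkeeping without buying generality. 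Two smaller corrections: a redundant chunk that actually finishes downloading does count as a chunk departure (the loss from redundancy enters only through terminated, partially downloaded chunks whose threads must restart on fresh chunks), and since the lemma quantifies over all online policies you should state explicitly how idling policies are dominated rather than leaving them outside the coupling.
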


\ifreport
\begin{proof}
See Appendix \ref{app3}.
\end{proof}
\else
\fi

\begin{lemma}\label{thm6}
Suppose that (i) the system load is high such that all $L$ threads are occupied at all time $t\geq0$, (ii) preemption is {\bf not} allowed, and (iii) the chunk downloading time is {i.i.d.} NLU. Then, for any given request parameters $N$ and $(a_i,k_i,n_i)_{i=1}^N$, the average flow time of non-preemptive SEDPT-NR satisfies
\begin{eqnarray}\label{eq_6}
&&\!\!\!\!\!\!\!\!\!\!\!\!\!\!\!\!\!\!\!\!\!\overline{D}_{\text{opt}}\leq \overline{D}_{\text{non-prmp,SEDPT-NR}} \leq\overline{D}_{\text{opt}}+
\mathbb{E}\left\{ \max_{l=1,\ldots, L} X_l\right\}, 
\end{eqnarray}
where the $X_l$'s are {i.i.d.} chunk downloading times.
\end{lemma}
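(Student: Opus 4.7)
The plan is to combine the stochastic ordering established in Lemma \ref{lem2} with a sample-path state-evolution argument in the same spirit as the proof of Theorem \ref{thm2}, and then invoke the NLU property to bound residual downloading times.

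First, I would fix an arbitrary online policy $\pi$. Lemma \ref{lem2} gives that the chunk departure process $(t_1,t_2,\ldots)$ under non-preemptive SEDPT-NR is stochastically smaller than that under $\pi$ in the multivariate sense. By Strassen's theorem this stochastic dominance can be realized as a pathwise coupling on a common probability space: after relabelling, the $j$-th chunk completes no later under SEDPT-NR than under $\pi$ on every sample path. All subsequent arguments live on this coupled space.

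Second, I would adapt the state-evolution argument of Theorem \ref{thm2} to this setting. The $\alpha_j-\delta_j$ priority used by SEDPT-NR is precisely the non-preemptive, redundancy-aware analogue of SRPT on ``differentiable'' workloads. An induction over assignment epochs then yields the following structural fact: at any time $t$, every request that has been completed by $\pi$ by time $t$ will be completed by SEDPT-NR after at most $L$ additional chunk completions, namely the chunks currently in service under SEDPT-NR at time $t$. This is the non-preemptive analogue of the ``$L$-chunk lag'' lemma underlying Theorem \ref{thm2}, the only modification being that the chunk departure process used on the SEDPT-NR side comes from the coupling rather than from the invariant-distribution Lemma \ref{lem0-1}.

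Third, to close the bound, let $R$ denote the maximum residual downloading time of the (at most) $L$ in-service chunks at the relevant reference instant. Because chunk times are i.i.d.\ NLU, each per-thread residual $R_l$ satisfies $R_l \leq_{\text{st}} X_l$ (a direct restatement of \eqref{eq_light}), where $X_l$ is a fresh downloading time. Since the $L$ threads operate on distinct chunks, the residuals $R_1,\ldots,R_L$ are conditionally independent given the assignment, so $R = \max_l R_l \leq_{\text{st}} \max_{l=1,\ldots,L} X_l$ by the coordinate-wise preservation of stochastic order under the monotone $\max$ map. Therefore on the coupled space $c_{i,\text{SEDPT-NR}} \leq c_{i,\pi} + R$ for every request $i$, and averaging and taking expectations then yields \eqref{eq_6}.

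The main obstacle will be making the second step rigorous when memorylessness is absent. In the exponential case of Theorem \ref{thm2} the state evolution could exploit the invariant-distribution property (Lemma \ref{lem0-1}) to ignore sample-path details of chunk completions, and residuals of partially served chunks had the same distribution as fresh ones. Here one must simultaneously carry the pathwise coupling forward, maintain a correct inductive bookkeeping of $(\alpha_j,\delta_j)$ for every unfinished request across non-preemptive assignment epochs, and verify that the SEDPT-NR decisions never fall more than $L$ in-service chunks behind the trajectory of $\pi$. This combined coupling-plus-inductive-tracking argument, together with the NLU-based residual-time comparison, is the delicate part of the proof.
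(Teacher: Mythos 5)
Your high-level ingredients match the paper's (Lemma \ref{lem2} for the departure process, an ``at most $L$ in-service chunks of lag'' argument, and the NLU residual bound giving $\mathbb{E}\{\max_{l\le L}X_l\}$), but the way you set up the comparison contains a genuine gap. You compare SEDPT-NR directly against an \emph{arbitrary} policy $\pi$ by Strassen-coupling the two different departure processes, and then claim a state-evolution induction ``in the spirit of Theorem \ref{thm2}'' across this coupling. That induction does not go through as described: the proof of Theorem \ref{thm2} (and of this lemma in the paper) is an event-by-event argument that requires both systems to experience chunk departures at the \emph{same} instants, so that every event is either a common arrival or a common departure. Under your coupling the departure instants differ, so there are epochs at which $\pi$ completes a chunk while SEDPT-NR does not; at such an epoch the benchmark's workload tail $\sum_{i\ge j}\beta_i$ drops while SEDPT-NR's $\sum_{i\ge j}(\alpha_i-\delta_i)$ does not, and the inductive inequality is not preserved by any local step. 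Moreover, the interleaving of arrivals with departures now differs between the two systems, so SEDPT-NR's priority decisions are made against a different state history; repairing this needs an additional monotonicity/exchange lemma (advancing departure instants cannot worsen SEDPT-NR's state in the required partial order) that you have not supplied. Separately, your structural claim is stated per request identity (``every request completed by $\pi$ by time $t$ will be completed by SEDPT-NR after at most $L$ additional chunk completions''), which is false: with $L=1$ and three requests each needing $5$ chunks, a policy $\pi$ that serves the request SEDPT-NR touches last can complete that specific request while SEDPT-NR still needs $5$ chunks for it. Only the count version (``at least as many requests completed'') is true, and the final averaging step must then go through sorted completion times or the number-in-system integral rather than the per-request inequality $c_{i,\text{SEDPT-NR}}\le c_{i,\pi}+R$.

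The paper's proof avoids all of this by never comparing SEDPT-NR with an actual competing policy on a coupled space. It first builds a \emph{virtual} lower bound on $\overline{D}_{\text{opt}}$: take the chunk departure instants $(t_1,t_2,\ldots)$ generated by SEDPT-NR itself (stochastically smallest by Lemma \ref{lem2}, and the minimal sample-path delay is monotone in these instants), and assign each departed chunk to an unfinished request with the fewest remaining chunks. Since this benchmark shares SEDPT-NR's own departure sample path, the state-evolution lemma $\sum_{i\ge j}[\alpha_i(t)-\delta_i(t)]\le\sum_{i\ge j}\beta_i(t)$ can be proved by the same synchronized event-by-event induction as in Theorem \ref{thm2}, yielding that at any time SEDPT-NR needs at most $\sum_i\delta_i(t)\le L$ further chunks, all \emph{already in service}; the NLU property \eqref{eq_light} then bounds the expected extra time by $\mathbb{E}\{\max_{l=1,\ldots,L}X_l\}$, giving \eqref{eq_6}. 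If you want to keep your direct-coupling route, you must either prove the missing monotonicity lemma for non-synchronized departures or restructure the benchmark as the paper does.
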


\ifreport
\begin{proof}
See Appendix \ref{app4}.
\end{proof}
\else
\fi

If the average chunk downloading time is $\mathbb{E}\left\{ X_l\right\}=1/\mu$, then the last term in \eqref{eq_6} is bounded by
\begin{eqnarray}\label{eq_remark7}
\frac{1}{\mu}\leq \mathbb{E}\left\{ \max_{l=1,\ldots, L} X_l\right\} \leq \frac{1}{\mu}\sum_{l=1}^{L}\frac{1}{l},
\end{eqnarray}
where the lower bound is trivial, and the upper bound follows from the property of New-Longer-than-Used distributions in Proposition 2 of \cite{Jose}. Therefore, the delay gap in Lemma \ref{thm6} is no more than $(\ln L + 1)/\mu$.
Next, we remove condition (i) in Lemma \ref{thm6} and obtain the following result.
\begin{Theorem}\label{thm7}
Suppose that 
(i) preemption is {\bf not} allowed and (ii) the chunk downloading time is {i.i.d.} NLU. Then, for any given request parameters $N$ and $(a_i,k_i,n_i)_{i=1}^N$, the average flow time of non-preemptive SEDPT-NR satisfies
\begin{eqnarray}\label{eq_7}
&&\!\!\!\!\!\!\!\!\!\!\!\!\!\!\!\overline{D}_{\text{opt}}\leq \overline{D}_{\text{non-prmp,SEDPT-NR}} \leq\overline{D}_{\text{opt}}\nonumber\\
&&\!\!\!\!\!\!+\mathbb{E}\left\{ \max_{l=1,\ldots, L} X_l\right\} + \mathbb{E}\left\{ \max_{l=1,\ldots, L-1} X_l\right\},~
\end{eqnarray}
where the $X_l$'s are {i.i.d.} chunk downloading times.
\end{Theorem}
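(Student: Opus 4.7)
The plan is to extend Lemma \ref{thm6} by removing the always-high-load assumption through a state-evolution/sample-path argument in the spirit of Theorems \ref{thm2}--\ref{thm4}. The two terms on the right of \eqref{eq_7} will arise from two distinct sources: $\mathbb{E}\{\max_{l=1,\ldots,L}X_l\}$ will be inherited from Lemma \ref{thm6} and quantifies the penalty of non-preemption within a fully-loaded interval, while the additional $\mathbb{E}\{\max_{l=1,\ldots,L-1}X_l\}$ will quantify the extra cost incurred because the system is allowed to enter partially-loaded states and then transition back to fully-loaded.

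First, I would construct a sample-path coupling between non-preemptive SEDPT-NR and an arbitrary online policy $\pi^*$, coupling the chunk downloading-time realizations so that corresponding chunks share the same value. By Lemma \ref{lem2}, whenever SEDPT-NR has all $L$ threads busy, its chunk departure instants are stochastically no later than under $\pi^*$, so within any fully-loaded interval non-preemptive SEDPT-NR is pathwise ``ahead'' of $\pi^*$ in the number of chunks that have completed.

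Next, I would handle partially-loaded intervals of SEDPT-NR. During such an interval, every request that has arrived is already being served (otherwise an idle thread would have been assigned to it), so the only contribution to the flow time within that interval is the residual downloading time of each chunk. The cost of non-preemption surfaces only at the moment the system transitions from a partially-loaded to a fully-loaded state: a later arrival may then be forced to wait while some thread is still committed to a previously-started chunk. Using a state-evolution argument analogous to that of Theorem \ref{thm2}, I would show that at any time $t$ non-preemptive SEDPT-NR is at most $L+(L-1)$ chunk downloads ``behind'' the delay-optimal policy in the number of completed requests: $L$ chunks from the terminal-burst residue (exactly as in Lemma \ref{thm6}) and $L-1$ chunks from a single partial-to-full transition, the $-1$ reflecting that one of the $L$ threads can always be dedicated to the request triggering the transition. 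Combined with the NLU inequality used in Lemma \ref{thm6}, under which the expected time for non-preemptive SEDPT-NR to complete a burst of $k$ chunks is dominated by $\mathbb{E}\{\max_{l=1,\ldots,k}X_l\}$, this delivers the two-term bound in \eqref{eq_7}.

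The main obstacle will be controlling the state evolution across \emph{multiple} busy/idle transitions so that the $\max_{l\leq L-1}X_l$ term does not accumulate once per transition. The NLU property is essential here: because already-started chunks are stochastically shorter than fresh ones, the ``debt'' created by each transition is paid off before the next transition begins, and a Lyapunov-style potential defined on the difference between the outstanding-chunk counts of the two coupled processes should remain bounded by the same $L-1$ throughout the horizon. This is the bookkeeping step most likely to require care, paralleling the way the $1/\mu$ and $\sum_l 1/l$ corrections are tallied in Theorems \ref{thm2} and \ref{thm3}, but now with geometric-style residuals replaced by the generic NLU residual-dominance inequality \eqref{eq_light}.
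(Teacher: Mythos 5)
There is a genuine gap, and it sits exactly where you flag the difficulty. Your plan compares SEDPT-NR directly against an arbitrary online policy $\pi^*$ and invokes Lemma \ref{lem2} ``whenever SEDPT-NR has all $L$ threads busy.'' Lemma \ref{lem2} is a global statement that requires all threads to be occupied at \emph{all} times; it cannot be applied interval-by-interval, because the fully-loaded intervals of SEDPT-NR and of $\pi^*$ need not coincide, the states at the start of such an interval differ across the two coupled processes, and the lemma gives stochastic ordering of the whole departure sequence rather than the pathwise ``ahead in completed chunks'' property your coupling asserts. The paper's proof avoids this by a construction you are missing: it builds a \emph{virtual} comparison policy $P$ that is kept busy at all times by inserting virtual chunks whenever fewer than $L$ real chunks remain, and proves (Lemma \ref{lemG1}, the analogue of Lemma \ref{lem2} but now valid unconditionally) that the departure instants of $P$ are stochastically smaller than under any online policy, so that $P$ furnishes the lower bound on $\overline{D}_{\text{opt}}$. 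The entire gap analysis is then between SEDPT-NR and this always-busy virtual process, not between SEDPT-NR and the unknown optimal policy.

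The second issue is your mechanism for preventing the $\mathbb{E}\{\max_{l\leq L-1}X_l\}$ term from accumulating over repeated busy/idle transitions. You propose that the NLU property makes the ``debt'' get repaid before the next transition; that is neither obviously true nor how the bound is actually controlled. In the paper, the non-accumulation comes from a counting fact, not from residual-time dominance: during any partially-loaded interval the total number of remaining real chunks is at most $L-1$ by definition of that interval, and an intermediate virtual policy $Q$ is introduced that clears those chunks instantly, so the total-chunk discrepancy between SEDPT-NR and the always-busy process resets to at most $L-1$ at every such interval (Lemma \ref{lem_Yin}, inequality \eqref{eq_thm7}). This is then combined with a state-evolution inequality involving the assigned-thread vector $\vec{\delta}$ (Lemma \ref{lemG4}) to conclude that at any time SEDPT-NR needs at most $L+(L-1)=2L-1$ additional chunk downloads to match the completions of the lower-bound process, of which $L$ are already in service (giving $\mathbb{E}\{\max_{l\leq L}X_l\}$ by NLU) and the remaining at most $L-1$ can be served in parallel (giving $\mathbb{E}\{\max_{l\leq L-1}X_l\}$). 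Your numeric bookkeeping lands on the same $2L-1$, but the justification (``one thread dedicated to the request triggering the transition'' and a Lyapunov potential kept bounded by NLU repayment) is not a proof, and without the virtual-chunk lower-bound construction the quantity you would be bounding against is not shown to lower-bound $\overline{D}_{\text{opt}}$ in the first place.
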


\ifreport
\begin{proof}
See Appendix \ref{app5}.
\end{proof}
\else
\fi

When preemption is allowed, preemptive SEDPT-WCR can achieve a shorter average delay than non-preemptive SEDPT-NR. In this case, we have the following result.
\begin{Theorem}\label{thm8}
Suppose that 
(i) preemption is allowed and (ii) the chunk downloading time is {i.i.d.} NLU. Then, for any given request parameters $N$ and $(a_i,k_i,n_i)_{i=1}^N$, the average flow time of preemptive SEDPT-WCR satisfies
\begin{eqnarray}
&&\!\!\!\!\!\!\!\!\!\!\!\!\!\!\!\overline{D}_{\text{opt}}\leq \overline{D}_{\text{prmt,SEDPT-WCR}} \leq\overline{D}_{\text{opt}}\nonumber\\
&&\!\!\!\!\!\!+\mathbb{E}\left\{ \max_{l=1,\ldots, L} X_l\right\} + \mathbb{E}\left\{ \max_{l=1,\ldots, L-1} X_l\right\},~
\end{eqnarray}
where the $X_l$'s are {i.i.d.} chunk downloading times.
\end{Theorem}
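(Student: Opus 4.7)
The plan is to reduce Theorem~\ref{thm8} to Theorem~\ref{thm7} by showing that, under a natural coupling of chunk-downloading-time realizations, preemptive SEDPT-WCR produces request completion times that are pathwise no later than non-preemptive SEDPT-NR. Since Theorem~\ref{thm7} already bounds the flow time of SEDPT-NR within the stated gap of the optimum, the same bound will then carry over to SEDPT-WCR.

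First, I would construct a coupling of the two policies driven by identical arrival parameters $(a_i,k_i,n_i)_{i=1}^N$. For every thread that SEDPT-NR uses as a ``primary'' assignment, I would couple its chunk-downloading-time realizations identically with the corresponding primary thread in SEDPT-WCR, so that the primary work proceeds in lockstep in both policies. For each additional redundant download that SEDPT-WCR performs on a thread that SEDPT-NR leaves idle, I would draw an independent realization from the chunk-downloading distribution. Whenever SEDPT-WCR preempts a redundant download in response to a new arrival, I would discard the partial progress on the preempted chunk and draw a fresh independent sample for the new chunk now being served on that thread.

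Next, I would prove by induction over the ordered sequence of event times (arrivals and chunk completions) that, under this coupling, at every time $t$ and for every request $i$, SEDPT-WCR has completed at least as many chunks of request $i$ as SEDPT-NR has. The pivotal observations are: (a) the primary-thread work is identical by construction; (b) the extra redundant completions produced by SEDPT-WCR can only advance, never delay, the completion of pending requests; and (c) the NLU property ensures that the fresh independent sample assigned after preemption is stochastically no shorter than the continuation of the preempted chunk would have been, so the coupling does not manufacture any artificial speedup for SEDPT-WCR. From chunk-completion dominance, each request's $k_i$-th chunk arrives no later under SEDPT-WCR than under SEDPT-NR almost surely, giving $c_{i,\text{WCR}}\le c_{i,\text{NR}}$ and hence $\overline{D}_{\text{prmt,SEDPT-WCR}} \le \overline{D}_{\text{non-prmp,SEDPT-NR}}$. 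Combining this with Theorem~\ref{thm7} yields the stated bound.

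The main obstacle will be the bookkeeping in the induction step when a new arrival triggers the simultaneous preemption of several redundant threads whose associated requests carry different SEDPT priority values $\alpha_j-\delta_j$. I must verify that the reshuffled SEDPT-WCR state still dominates the corresponding SEDPT-NR state request-by-request, and that the NLU-based preemption argument can be applied thread-by-thread without accumulating any unaccounted advantage that would break the pathwise dominance.
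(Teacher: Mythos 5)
Your overall route is the same as the paper's: bound $\overline{D}_{\text{prmt,SEDPT-WCR}}$ by $\overline{D}_{\text{non-prmp,SEDPT-NR}}$ and then invoke the gap of Theorem~\ref{thm7}. However, used as a black box, Theorem~\ref{thm7} does not suffice, and this is a genuine gap in your argument. In Theorem~\ref{thm7} the hypothesis is that preemption is \emph{not} allowed, so its $\overline{D}_{\text{opt}}$ is the optimum over non-preemptive online policies; Theorem~\ref{thm8} asserts a gap from the optimum over \emph{all} online policies with preemption allowed, which can only be smaller. Your chain therefore yields only $\overline{D}_{\text{prmt,SEDPT-WCR}}\leq \overline{D}_{\text{opt,non-prmp}}+\mathbb{E}\{\max_{l\leq L}X_l\}+\mathbb{E}\{\max_{l\leq L-1}X_l\}$, which is weaker than the claim. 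The missing ingredient is precisely what the paper's short proof records: the delay lower bound used inside the proof of Theorem~\ref{thm7} is the virtual never-switching policy $P$ of Lemma~\ref{lemG1}, whose chunk departure instants are stochastically smaller than those of \emph{any} online policy, preemptive or not, because the NLU inequality penalizes every mid-chunk switch, including preemptions. Hence \eqref{eq_7} actually holds with $\overline{D}_{\text{opt}}$ taken as the preemptive optimum, and only with that observation does your reduction give Theorem~\ref{thm8}.

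On the comparison $\overline{D}_{\text{prmt,SEDPT-WCR}}\leq\overline{D}_{\text{non-prmp,SEDPT-NR}}$: the paper merely asserts this, so your attempt to substantiate it goes beyond the paper, but your proposed invariant (at every time, WCR has completed at least as many chunks of \emph{every} request as NR) is stronger than needed and is in jeopardy. The lockstep coupling of ``primary'' threads is only valid until the first redundant completion; from that moment the two systems' states, and hence the SEDPT priorities $\alpha_j-\delta_j$ and subsequent assignments, diverge, so per-request chunk dominance need not be preserved even though the average-delay comparison is expected to survive. Also, NLU is not where you invoke it: when a new request arrives, the WCR thread that preempts its redundant chunk and the corresponding idle NR thread both start a \emph{fresh} chunk of the new arrival at the same instant, so they can be coupled identically with no stochastic comparison at all; the NLU property is needed for the lower-bound side (Lemma~\ref{lemG1}), not for the WCR-versus-NR step.
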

\ifreport
\begin{proof}
See Appendix \ref{app5.5}.
\end{proof}
\else
\fi

Similar to Lemma \ref{thm6}, the delay gaps in Theorems \ref{thm7} and \ref{thm8} are also of the order $O(\ln L/\mu)$.



\subsection{NSU Chunk Downloading Time Distributions}\label{sec4:heavy}
If the chunk downloading time is \emph{i.i.d.} NSU, one can show that it is stochastically better to switch to a new chunk than sticking to downloading the same chunk.
We consider the scenario that preemption is not allowed and obtain the following result.
\begin{lemma}\label{lem3} Suppose that (i) $d_{\min}\geq L$,  (ii) $k_i = 1$ for all $i$,
(iii) preemption is {not} allowed, and (iv) the chunk downloading time is {i.i.d.} NSU. Then, for any given request parameters $N$ and $(a_i,k_i=1,n_i)_{i=1}^N$, the chunk departure instants $(t_1,t_2,\ldots, t_N)$  under non-preemptive SEDPT-R are stochastically smaller than those under any other online policy.
\end{lemma}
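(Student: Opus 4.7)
The plan is to establish the multivariate stochastic dominance of the departure process under non-preemptive SEDPT-R via a sample-path interchange argument that exploits the NSU property. I would first characterize SEDPT-R under the given hypotheses: since $k_i = 1$ for all $i$, every unfinished request has $\alpha_j = 1$, so the SEDPT-R priority $\alpha_j - \delta_j = 1 - \delta_j$ steers every newly idle thread to whichever request already has the most threads assigned. Together with $d_{\min}\geq L$, this implies SEDPT-R concentrates all $L$ threads on a single unfinished request at any time; when that request completes (the first of $L$ concurrent chunks finishes), the remaining $L-1$ threads terminate and the $L$ threads collectively move to the next unfinished request with all-fresh chunks. Consequently the inter-departure times under SEDPT-R are i.i.d.\ copies of $\min(X_1,\ldots,X_L)$, where the $X_\ell$ are fresh draws from the downloading-time distribution.

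Next I would rewrite the NSU inequality \eqref{eq_heavy} as a residual-service statement: for any chunk already in service for time $\tau>0$, its residual service time (conditioned on non-completion) is stochastically larger than a fresh draw from the downloading-time distribution. This is the local ``improvement'' on which the coupling is built, and it is precisely the property that makes restarting fresh chunks preferable to continuing in-progress ones.

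The core of the proof is then an interchange argument. For any online non-preemptive policy $\pi$, consider the first decision epoch $t^\ast$ at which $\pi$ disagrees with SEDPT-R; because non-preemption restricts reassignment to moments when threads become idle, such epochs form a discrete set. Build a modified policy $\pi'$ that matches SEDPT-R at $t^\ast$ and mimics $\pi$ afterwards. The deviation at $t^\ast$ amounts to $\pi$ either leaving a thread idle or assigning it to a chunk of a less-concentrated request, while $\pi'$ starts a fresh chunk on SEDPT-R's concentrated request. On a common probability space where the residual services of chunks continued by $\pi$ are coupled with fresh samples (justified by the NSU inequality and the standard coupling representation of stochastic order), I would show that the departure vector $(t_1',\ldots,t_N')$ under $\pi'$ is pointwise no larger than $(t_1^\pi,\ldots,t_N^\pi)$, hence stochastically smaller in the multivariate sense. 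Iterating the interchange across the finitely many decision epochs transforms $\pi$ into SEDPT-R while preserving the multivariate stochastic ordering.

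The main obstacle I expect is propagating the NSU coupling through the entire multivariate departure sequence rather than just the next departure: a single substitution of a fresh chunk for a residual one directly improves one concurrent service time, but because every later departure depends on several concurrent services, the pointwise inequality must be maintained across all subsequent minima taken over changing sets of threads. I would address this by exploiting the fact that $k_i=1$ reduces every request's completion to the minimum of its concurrently-serving threads' downloading times, and that pointwise inequalities are preserved under such minima; the explicit coupling on the canonical probability space then pushes the local NSU improvement into pointwise domination of the full departure vector, which implies multivariate stochastic ordering via the characterization by upper sets.
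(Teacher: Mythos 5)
Your structural characterization of SEDPT-R under the hypotheses (all $L$ threads concentrate on a single request, and at each request departure every thread is terminated and restarts on a fresh chunk) and your reading of NSU as ``residual service is stochastically longer than a fresh draw'' are both correct, and they are the same ingredients the paper uses. However, the core of your plan --- a one-step interchange at the first disagreement epoch $t^\ast$, a modified policy that ``mimics $\pi$ afterwards,'' and a coupling giving pointwise domination of the whole departure vector --- has a genuine gap, in two respects. First, at $t^\ast$ itself there is no residual-versus-fresh comparison to couple: both $\pi$ and the modified policy assign the newly idle thread to a \emph{fresh} chunk (merely of different requests). The NSU advantage of concentration only materializes at \emph{later} departure epochs, when SEDPT-R's redundant assignment plus service termination lets all $L$ threads restart fresh, whereas under $\pi$ some threads keep running in-progress chunks whose residuals are stochastically longer; so the ``local improvement'' on which your coupling is built is not located where you place it. Second, after the interchange the two systems differ in which requests remain and in the elapsed service of the in-progress chunks, so ``mimics $\pi$ afterwards'' is not well defined, and pointwise domination after a single interchange is exactly the hard part: with the naive coupling that fixes each thread's chunk durations it can fail. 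For instance, with $L=2$ and requests $A,B$ at time $0$, let $\pi$ put thread $1$ on $A$ (duration $10$) and thread $2$ on $B$ (duration $3$), while SEDPT-R puts both on $A$; coupling thread $2$'s first chunk to $3$ and all fresh chunks started at time $3$ to $20$ gives second departure $\min(10,3+20)=10$ under $\pi$ but $3+20=23$ under the interchanged policy. Repairing this requires re-applying the NSU coupling (fresh against each thread's residual) at \emph{every} departure instant, over the changing sets of in-service chunks --- and that is essentially the induction your ``preserved under minima'' remark glosses over.

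For comparison, the paper avoids the interchange entirely: it argues by induction on the departure index $j$, among work-conserving policies, that $t_{j+1}=\min_{l=1,\ldots,L}\left[\max\{s_{j+1},t_j\}+R_l\right]$ is stochastically minimized when all $L$ residuals $R_l$ are fresh draws, which by NSU is the stochastically smallest choice and occurs only under SEDPT-R (all threads switch to new chunks of the same request at each departure); non-work-conserving policies are then handled by noting that idling only postpones departures. Your outline would need to be reworked into such an epoch-by-epoch coupling argument --- specifying the coupling at each departure time and handling the divergence of the two systems' states --- before the claimed multivariate stochastic ordering follows.
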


\begin{figure*}[!t]
\begin{center}
{\subfigure[][Preemption is allowed, $d_{\min}=L=3$]{\resizebox{0.3\textwidth}{!}{\includegraphics{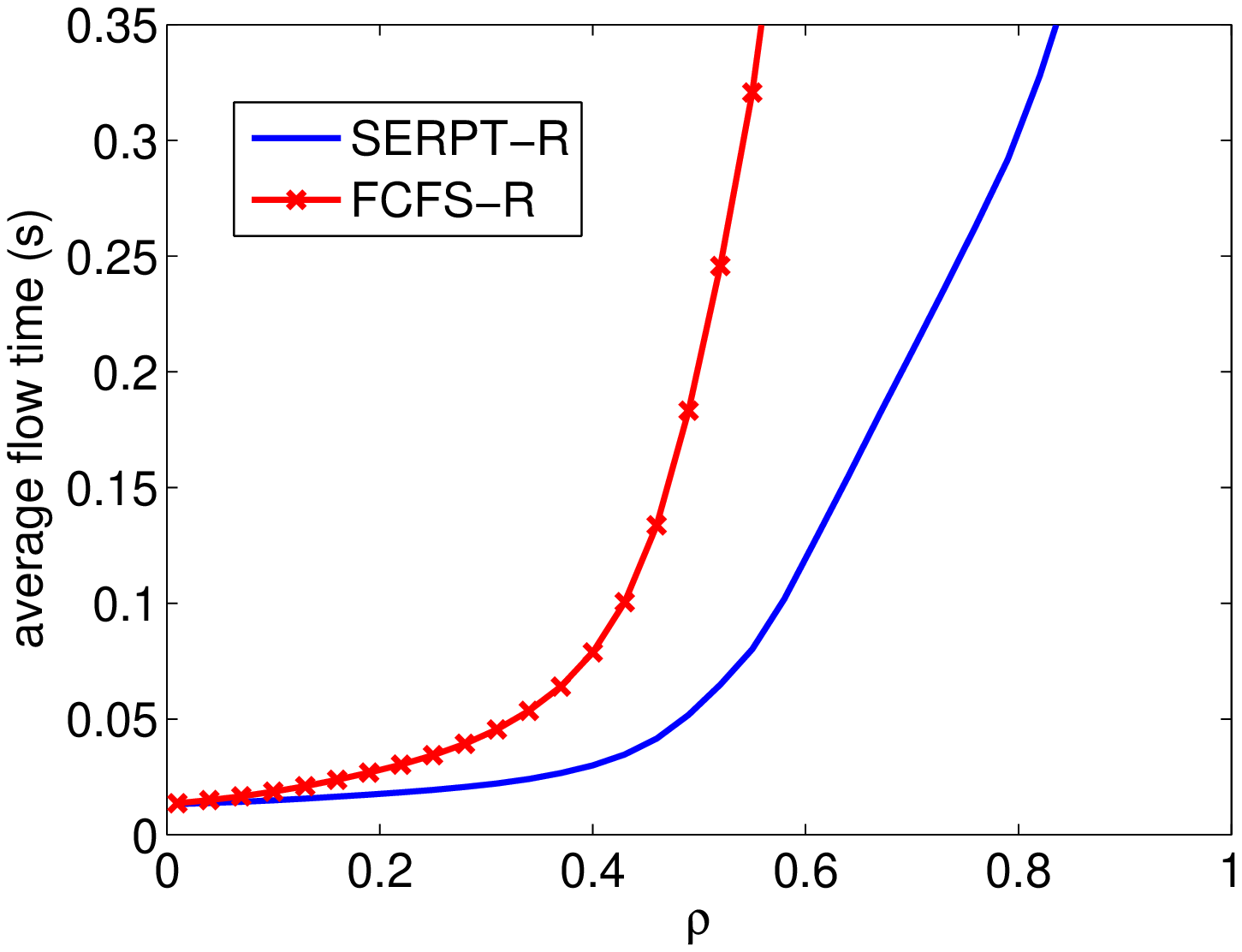}}}}
    \hspace{0.1\textwidth}
{\subfigure[][Preemption is not allowed, $d_{\min}=L=3$]{\resizebox{0.3\textwidth}{!}{\includegraphics{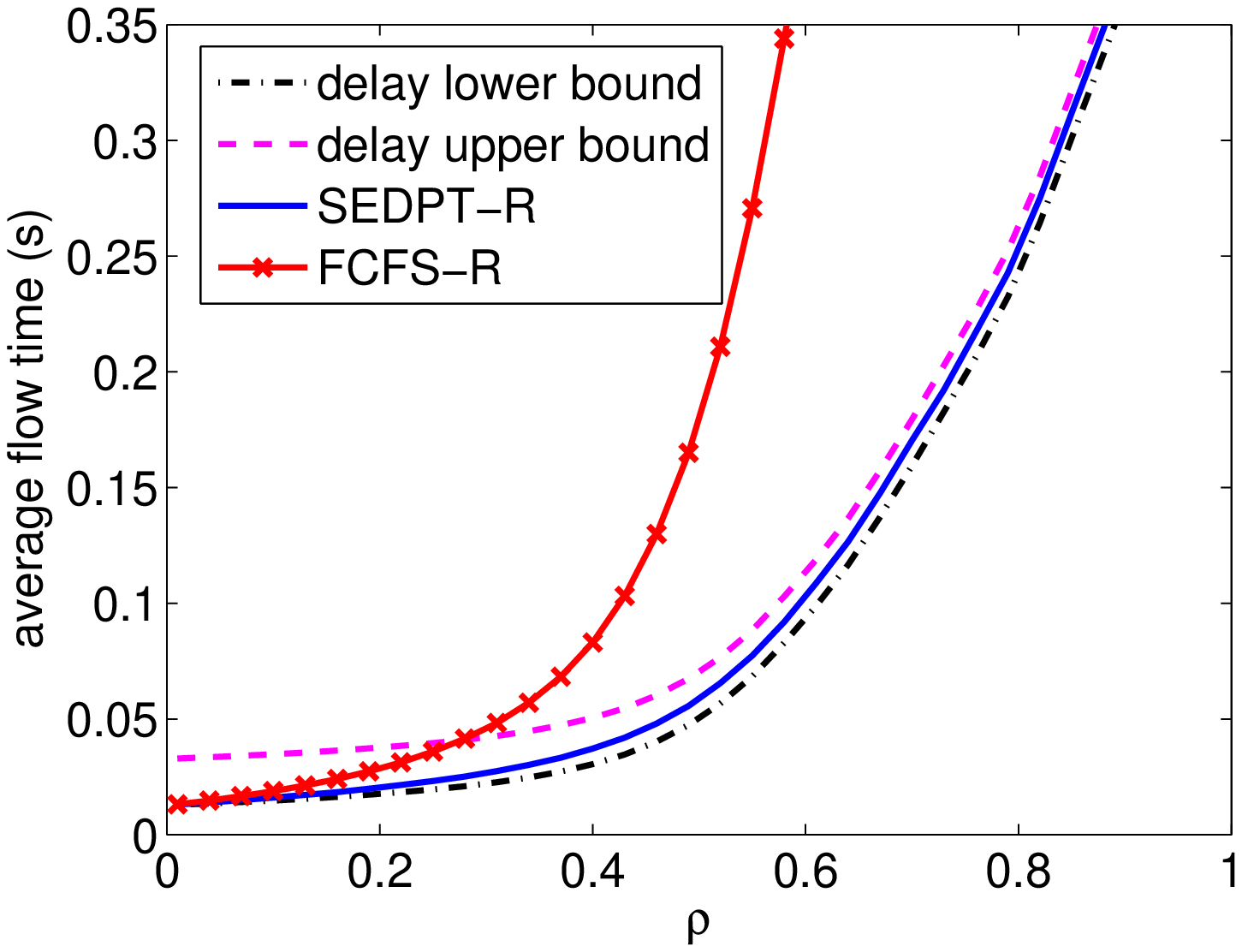}}}
    \vspace{0.0cm}}
{\subfigure[][Preemption is allowed, $d_{\min}<L=5$]{\resizebox{0.3\textwidth}{!}{\includegraphics{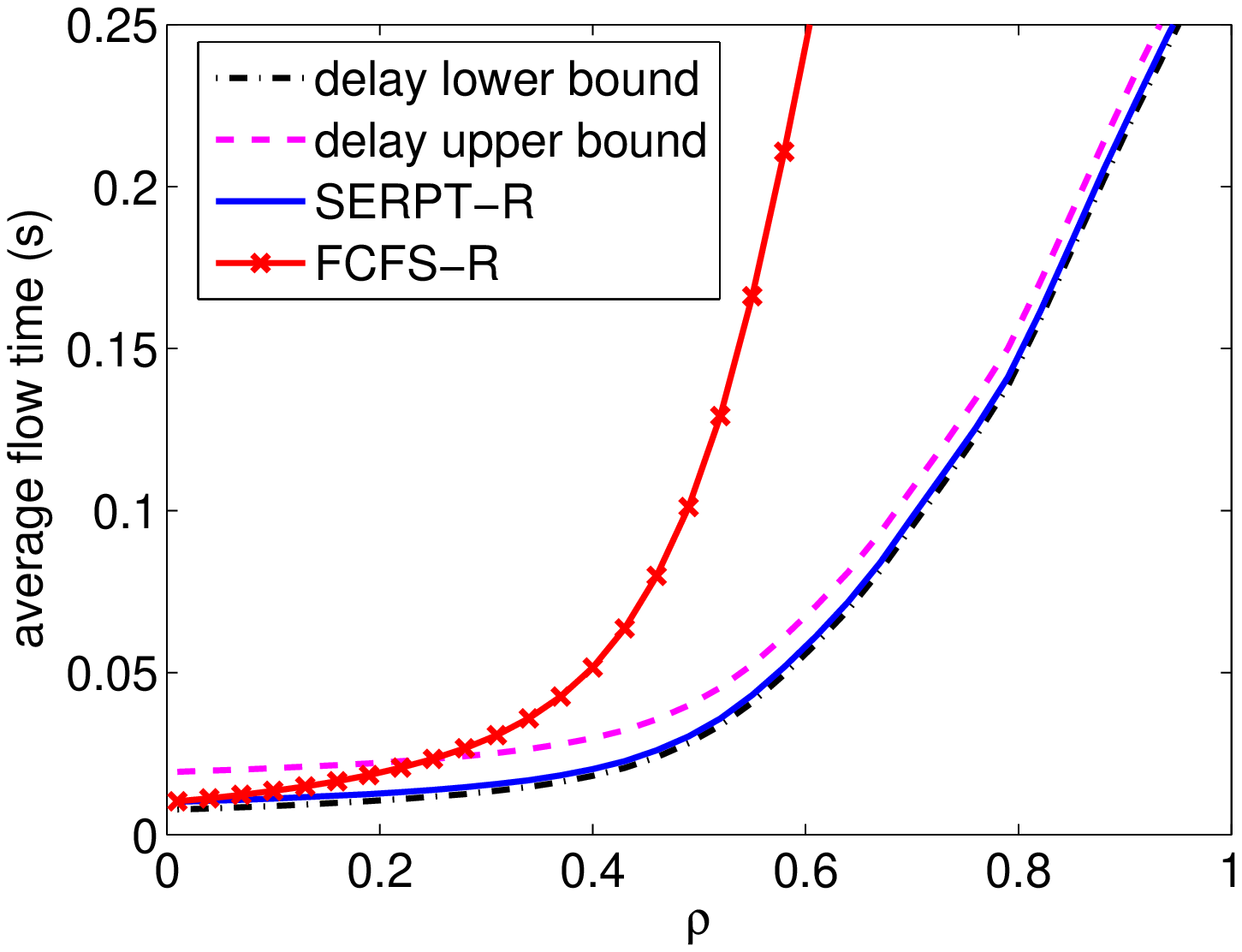}}}}
    \hspace{0.1\textwidth}
{\subfigure[][Preemption is not allowed, $d_{\min}<L=5$]{\resizebox{0.3\textwidth}{!}{\includegraphics{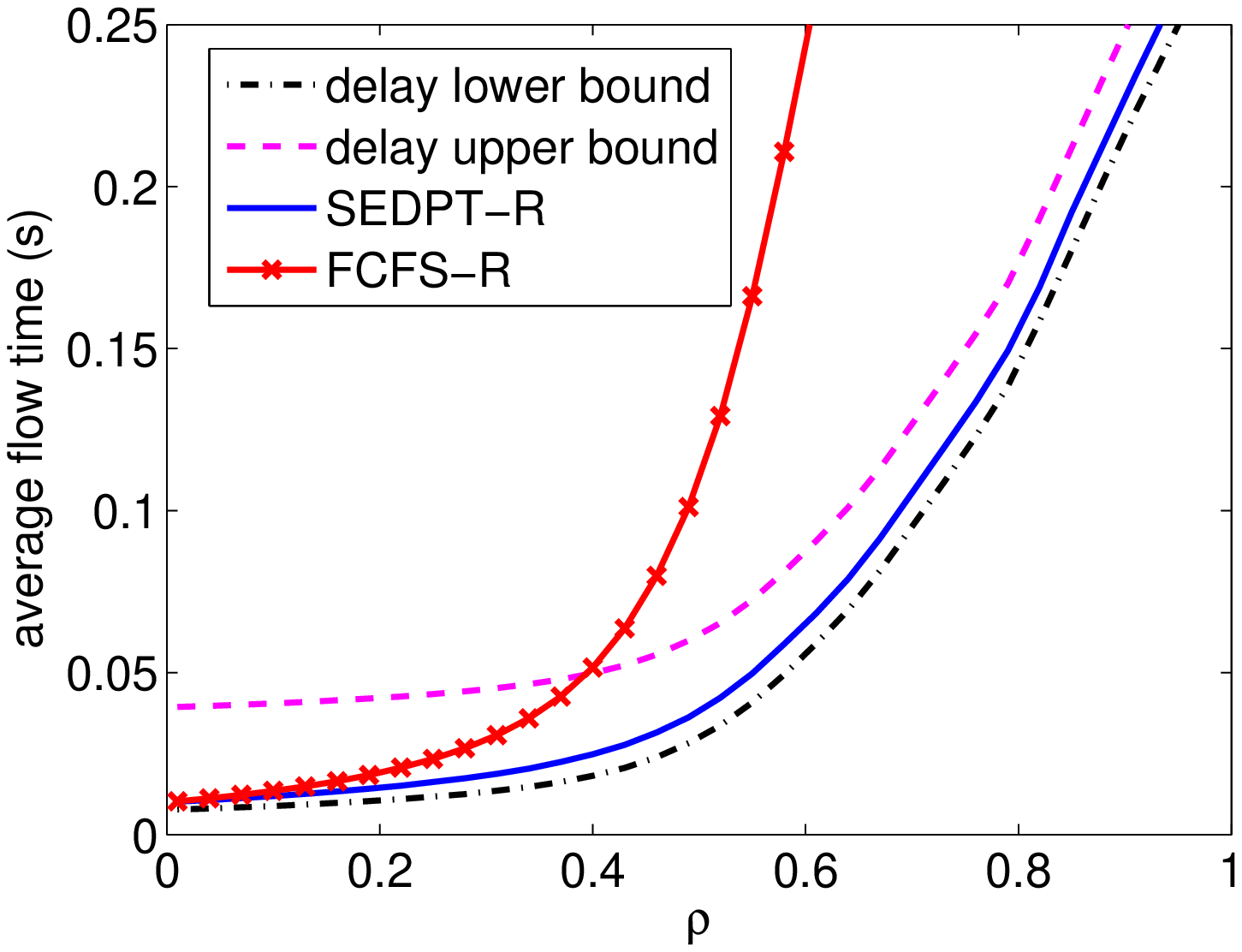}}}}
\vspace{+0.2cm}
\end{center}\vspace{-0.4cm}
\caption{Average flow time $\overline{D}_\pi$ versus traffic intensity $\rho$, where the chunk downloading time is \emph{i.i.d.} exponentially distributed.}\vspace{-0.0cm}
\label{Fig2}
\vspace{-0.4cm}
\end{figure*}

\ifreport
\begin{proof}
See Appendix \ref{app6}.
\end{proof}
\else
\fi

\begin{Theorem}\label{thm5}
Suppose that (i) $d_{\min}\geq L$, (ii) $k_i = 1$ for all $i$,
(iii) preemption is {not} allowed, and (iv) the chunk downloading time is {i.i.d.} NSU. Then, for any given request parameters $N$ and $(a_i,k_i=1,n_i)_{i=1}^N$, non-preemptive SEDPT-R is delay-optimal among all online policies.
\end{Theorem}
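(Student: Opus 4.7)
The plan is to leverage Lemma~\ref{lem3} --- which states that the first $N$ chunk departure instants under non-preemptive SEDPT-R are stochastically smaller than those under any other online policy --- together with a counting argument tailored to the $k_i=1$ setting. Because every file needs only one chunk to be recovered, each request completion under any policy is induced by a chunk departure, and this lets me convert the stochastic dominance of chunk departure processes into a dominance of completion-time sums.

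First, I would pin down the operational behavior of non-preemptive SEDPT-R under the stated conditions. With $k_i=1$, every unfinished request has $\alpha_j=1$, so the SEDPT-R priority score $\alpha_j-\delta_j=1-\delta_j$ is minimized by the request already holding the largest number of assigned threads. Combined with $d_{\min}\geq L$, which forces $n_i\geq L$ so each unfinished request offers at least $L$ available chunks, this implies that SEDPT-R piles all $L$ threads onto a single request: once the first idle thread commits to a request, subsequent idle threads see a strictly positive $\delta$ for that request and $\delta=0$ for any others, so they join the in-progress request until all $L$ threads are busy on it. When the first of those $L$ parallel downloads completes, the request departs, the other $L-1$ partial downloads are ended via service termination (which is permitted even without preemption), and the freed threads pile onto the next request. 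Consequently, under SEDPT-R the chunk departure process consists of exactly $N$ events, and the $j$-th chunk departure time $T_j^{\text{SEDPT-R}}$ equals the $j$-th smallest request completion time $c_{(j),\text{SEDPT-R}}$.

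Second, I would establish the pathwise inequality $c_{(j),\pi}\geq T_j^\pi$ for every online policy $\pi$: to have completed $j$ requests, each needing one chunk since $k_i=1$, the system must have witnessed at least $j$ chunk departures, so the $j$-th completion time is no smaller than the $j$-th chunk departure time. Summing over $j$ and taking expectations gives
\begin{equation*}
\sum_{i=1}^{N}\mathbb{E}[c_{i,\pi}]=\sum_{j=1}^{N}\mathbb{E}[c_{(j),\pi}]\geq \sum_{j=1}^{N}\mathbb{E}[T_j^\pi].
\end{equation*}
Since the map $(t_1,\ldots,t_N)\mapsto \sum_j t_j$ is coordinatewise increasing, Lemma~\ref{lem3} yields $\sum_{j=1}^N\mathbb{E}[T_j^{\text{SEDPT-R}}]\leq \sum_{j=1}^N\mathbb{E}[T_j^\pi]$, and combining this with the SEDPT-R identity $\sum_i\mathbb{E}[c_{i,\text{SEDPT-R}}]=\sum_j\mathbb{E}[T_j^{\text{SEDPT-R}}]$ from the previous paragraph gives
\begin{equation*}
\sum_{i=1}^{N}\mathbb{E}[c_{i,\text{SEDPT-R}}]\leq \sum_{i=1}^{N}\mathbb{E}[c_{i,\pi}].
\end{equation*}
Dividing by $N$ and subtracting $\tfrac{1}{N}\sum_i a_i$ from both sides yields $\overline{D}_{\text{non-prmp,SEDPT-R}}\leq \overline{D}_\pi$ for every online policy $\pi$, which is exactly delay-optimality.

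The main obstacle is not the chain of inequalities above --- the real analytical workhorse is Lemma~\ref{lem3}, which is already in hand --- but rather the careful justification of the ``pile-on-one-request'' description of SEDPT-R in the presence of asynchronous arrivals. One has to check that a newly arrived request can never peel threads away from the request currently in service, because non-preemption forbids switching and because the in-progress request strictly dominates the new one in the $\delta_j$ tie-breaker, and that service termination (distinct from preemption) is always available so that all $L$ threads are freed together at each completion. Once that operational description is pinned down, the short counting argument above converts Lemma~\ref{lem3} into the desired flow-time optimality.
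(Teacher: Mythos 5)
Your proposal is correct and follows essentially the same route as the paper's proof: it reduces delay-optimality to the stochastic dominance of the chunk departure instants in Lemma~\ref{lem3}, using $k_i=1$ to identify request completions with chunk departures and then summing expectations. Your use of the pathwise inequality $c_{(j),\pi}\geq T_j^{\pi}$ for a general policy $\pi$ is a slightly more careful rendering of the paper's assertion that $(c_{1,\pi},\ldots,c_{N,\pi})$ is a permutation of $(t_1,\ldots,t_N)$, but it is the same argument in substance.
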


\ifreport
\begin{proof}
See Appendix \ref{app7}.
\end{proof}
\else
\fi


A special case of Theorem \ref{thm5} was obtained in Theorem 3 of \cite{shah-Allerton-2013}, where delay-optimality was shown only for high system load such that all $L$ threads are occupied at all time.



\section{Numerical Results}
We present some numerical results to illustrate the delay performance of different scheduling policies and validate the theoretical results. All these results are averaged over 100 random samples for the downloading times of data chunks.

\subsection{Exponential Chunk Downloading Time Distributions}
Consider a system with $N=3000$ request arrivals, among which $p_1=90\%$ of the requested files are stored with a $(n_1,k_1,d_1)=(3,1,3)$ repetition code, and $p_2=10\%$ of the requested files are stored with a $(n_2,k_2,d_2)=(14,10,5)$ Reed-Solomon code. Therefore, $d_{\min}=3$.
The code parameters are drawn at random, \emph{i.i.d.} from these two classes.
The inter-arrival time of the requests is \emph{i.i.d.} distributed as a mixture of exponentials:
\begin{eqnarray}
X\sim\left\{\begin{array}{l l}\text{Exponential}(\text{rate} =0.5\lambda) & \text{with probability}~ 0.99;\\
\text{Exponential}(\text{rate} =50.5\lambda) & \text{with probability}~ 0.01.\end{array}\right.\nonumber
\end{eqnarray}
The average chunk downloading time is $1/\mu$ = 0.02s.
The traffic intensity $\rho$ is determined by
\begin{eqnarray}\label{eq_rho}
\rho = \frac{\left(p_1k_1+p_2k_2 \right)\lambda }{L\mu}.
\end{eqnarray}

Figures \ref{Fig2}(a)-(d) illustrate the numerical results of average flow time $\overline{D}_\pi$ versus traffic intensity $\rho$ for 4 scenarios where the chunk downloading time is \emph{i.i.d.} exponentially distributed. One can observe that SERPT-R and SEDPT-R have shorter average flow times than the First-Come-First-Served policy with Redundant thread assignment (FCFS-R) \cite{ShengboInfocom}. If $L=d_{\min}=3$ and preemption is allowed, by Theorem \ref{thm1}, preemptive SERPT-R is delay-optimal. For the other 3 scenarios, upper and lower bounds of the optimum delay performance are plotted. By comparing with the delay lower bound, we find that the extra delay caused by non-preemption is $0.0114$s which is smaller than $1/\mu= 0.02$s, and the extra delay caused by $d_{\min}<L$ is $0.0034$s which is smaller than $\frac{1}{\mu}\sum_{l=d_{\min}}^{L-1}\frac{1}{l}= 0.0117$s. These results are in accordance with Theorems \ref{thm1}-\ref{thm4}.

\subsection{NLU Chunk Downloading Time  Distributions}
For the NLU distributions, the system setup is the same with that in the previous subsection.
We assume that the chunk downloading time $X$ is \emph{i.i.d.} distributed as the sum of a constant and a value drawn from an exponential distribution:
\begin{eqnarray}
\Pr(X>x)=\left\{\begin{array}{l l} 1, &\text{~if~} x\leq \frac{0.4}{\mu};\\
\exp\left[-\frac{\mu}{0.6} (x-\frac{0.4}{\mu})\right],& \text{~if~} x\geq \frac{0.4}{\mu},\end{array}\right.
\end{eqnarray}
which was proposed in \cite{addShengboOriginal,Liang2013_2} to model the data downloading time in Amazon AWS system.
The traffic intensity $\rho$ is also given by \eqref{eq_rho}.

Figure \ref{fig5} illustrates the average flow time $\overline{D}_\pi$ versus traffic intensity $\rho$ when $L= 3$ and the chunk downloading time is \emph{i.i.d.} NLU. As expected, preemptive SEDPT-WCR has a shorter average delay than non-preemptive SEDPT-NR. In the preemptive case, the delay performance of SEDPT-WCR is much better than those of non-preemptive SEDPT-R and the First-Come-First-Served policy with Work-Conserving Redundant thread assignment (FCFS-WCR). Therefore, preemptive SEDPT-WCR and non-preemptive SEDPT-NR are appropriate for \emph{i.i.d.} NLU downloading time distributions. By comparing with the delay lower bound, we find that the maximum extra delays of preemptive SEDPT-WCR and non-preemptive SEDPT-NR are $0.0229$s and $0.0230$s, respectively. Both of them are smaller than the delay gap in Theorems \ref{thm7} and \ref{thm8}, whose value is $0.0560$s.


\subsection{NSU Chunk Downloading Time Distributions}

For NSU distributions, we consider that all $N=3000$ requested files are stored with a $(n_1,k_1,d_1)=(3,1,3)$ repetition code. The chunk downloading time $X$ is chosen \emph{i.i.d.} as a mixture of exponentials:
\begin{eqnarray}
X\sim\left\{\begin{array}{l l}\text{Exponential}(\text{rate} =0.4\mu) & \text{with probability}~ 0.5;\\
\text{Exponential}(\text{rate} =1.6\mu) & \text{with probability}~ 0.5.\end{array}\right.\nonumber
\end{eqnarray}
Under SEDPT-R, the average time for completing one chunk is $\mathbb{E}\left\{\min_{l=1,\cdots,L} X_l\right\}$, where the $X_l$'s are \emph{i.i.d.} chunk downloading times.
Therefore, the traffic intensity $\rho$ is
\begin{eqnarray}
\rho = \lambda \mathbb{E}\left\{\min_{l=1,\cdots,L} X_l\right\}.
\end{eqnarray}
Figure \ref{fig6} shows the average flow time $\overline{D}_\pi$ versus traffic intensity $\rho$ where $L= 3$, preemption is not allowed, and the chunk downloading time is \emph{i.i.d.} NSU. In this case, SEDPT-R is delay-optimal. We observe that the delay performance of SEDPT-WCR is quite bad and the delay gap between SEDPT-R and SEDPT-WCR is unbounded. This is because SEDPT-WCR has a smaller throughput region than SEDPT-R. Therefore, SEDPT-R is appropriate for \emph{i.i.d.} NSU downloading time distributions.


\section{Conclusions}\label{sec6}
In this paper, we have analytically characterized the delay-optimality of data retrieving in distributed storage systems with multiple storage codes.
\ifreport
Low-latency thread scheduling policies have been designed by combining the advantages of SERPT in the preemptive case (or SEDPT in the non-preemptive case) and redundant thread assignment.
\fi
Under several important settings, we have shown that the proposed policies are either delay-optimal or within a constant gap from the optimum delay performance.

There are several important open problems concerning the analytical characterization of data retrieving delay:

\begin{itemize}
\item What is the optimal policy for other classes of non-exponential service distributions?

\item What is the optimal policy when the service time distributions are heterogeneous across data chunks?

\item What is the optimal policy when latency and downloading cost need to be jointly considered?

\item How to design low-latency policies under delay metrics other than average flow time?
\end{itemize}

\begin{figure}
\centering \includegraphics[width=0.3\textwidth]{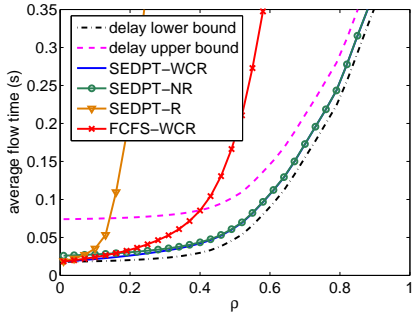} \caption{Average flow time $\overline{D}_\pi$ versus traffic intensity $\rho$, where the chunk downloading time is \emph{i.i.d.} NLU.}
\label{fig5} \vspace{-0.cm}
\end{figure}

\begin{figure}
\centering \includegraphics[width=0.3\textwidth]{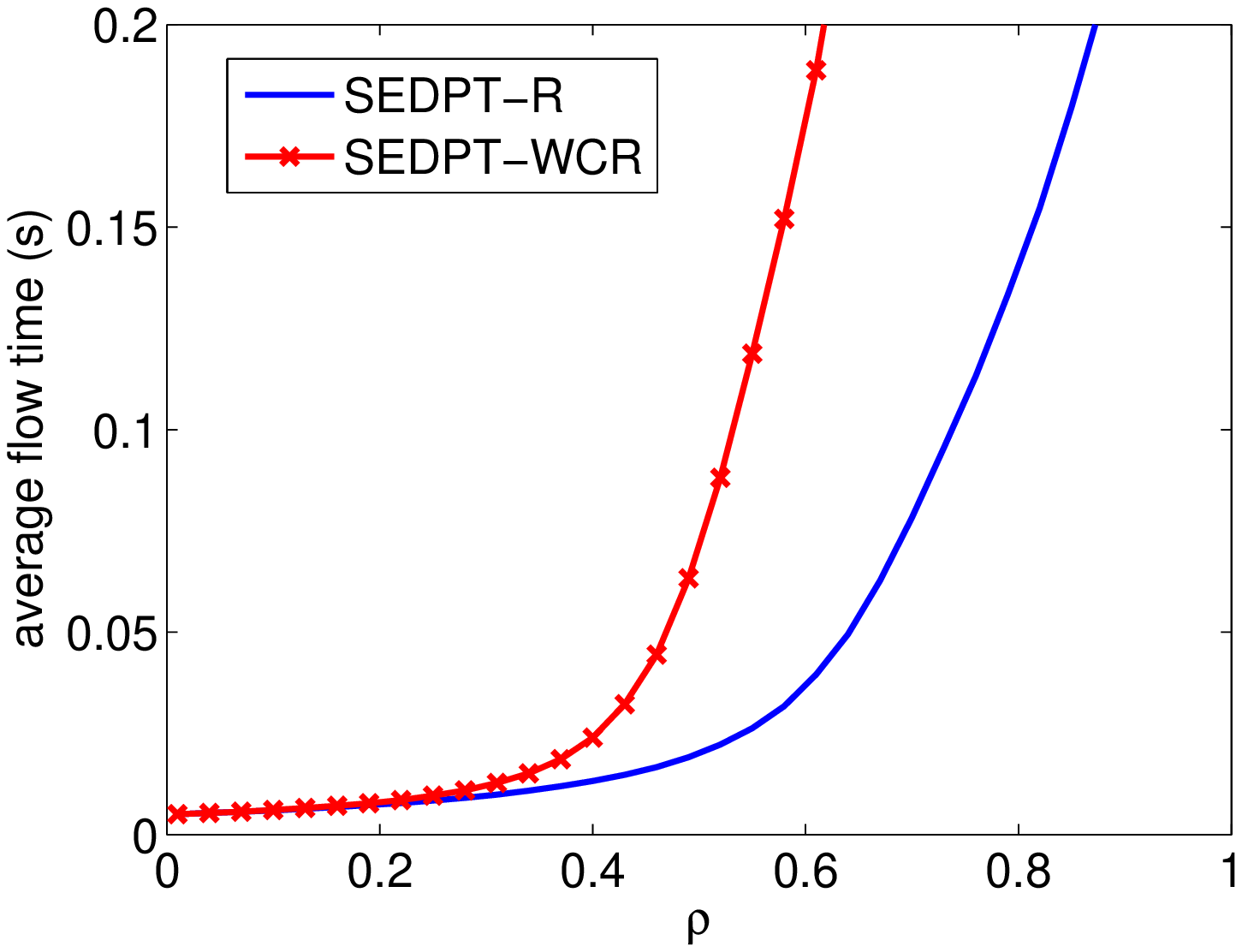} \caption{Average flow time $\overline{D}_\pi$ versus traffic intensity $\rho$, where the chunk downloading time is \emph{i.i.d.} NSU.}
\label{fig6} \vspace{-0.cm}
\end{figure}

\bibliographystyle{ieeetran}
\bibliography{ref}

\begin{thebibliography}{10}
\providecommand{\url}[1]{#1}
\csname url@samestyle\endcsname
\providecommand{\newblock}{\relax}
\providecommand{\bibinfo}[2]{#2}
\providecommand{\BIBentrySTDinterwordspacing}{\spaceskip=0pt\relax}
\providecommand{\BIBentryALTinterwordstretchfactor}{4}
\providecommand{\BIBentryALTinterwordspacing}{\spaceskip=\fontdimen2\font plus
\BIBentryALTinterwordstretchfactor\fontdimen3\font minus
  \fontdimen4\font\relax}
\providecommand{\BIBforeignlanguage}[2]{{%
\expandafter\ifx\csname l@#1\endcsname\relax
\typeout{** WARNING: IEEEtran.bst: No hyphenation pattern has been}%
\typeout{** loaded for the language `#1'. Using the pattern for}%
\typeout{** the default language instead.}%
\else
\language=\csname l@#1\endcsname
\fi
#2}}
\providecommand{\BIBdecl}{\relax}
\BIBdecl

\bibitem{CloudStorageReport}
``Public/private cloud storage market,''
  \url{http://www.marketsandmarkets.com/PressReleases/cloud-storage.asp}.

\bibitem{Google2003}
S.~Ghemawat, H.~Gobioff, and S.-T. Leung, ``The {Google} file system,'' in
  \emph{ACM SOSP}, 2003.

\bibitem{Chang:2006:BDS:1298455.1298475}
F.~Chang, J.~Dean, S.~Ghemawat, W.~C. Hsieh, D.~A. Wallach, M.~Burrows,
  T.~Chandra, A.~Fikes, and R.~E. Gruber, ``Bigtable: A distributed storage
  system for structured data,'' in \emph{Proc. of OSDI}, 2006.

\bibitem{Hadoop2010}
K.~Shvachko, H.~Kuang, S.~Radia, and R.~Chansler, ``The hadoop distributed file
  system,'' in \emph{IEEE MSST}, 2010.

\bibitem{Dimakis2011}
A.~Dimakis, K.~Ramchandran, Y.~Wu, and C.~Suh, ``A survey on network codes for
  distributed storage,'' \emph{Proc. IEEE}, vol.~99, no.~3, pp. 476--489, 2011.

\bibitem{Rashmi2013}
K.~V. Rashmi, N.~B. Shah, D.~Gu, H.~Kuang, D.~Borthakur, and K.~Ramchandran,
  ``A solution to the network challenges of data recovery in erasure-coded
  distributed storage systems: A study on the facebook warehouse cluster,'' in
  \emph{USENIX HotStorage}, 2013.

\bibitem{HDFS-RAID}
\BIBentryALTinterwordspacing
{HDFS-RAID wiki}. [Online]. Available:
  \url{http://wiki.apache.org/hadoop/HDFS-RAID.}
\BIBentrySTDinterwordspacing

\bibitem{Swift}
\BIBentryALTinterwordspacing
{Swift-OpenStack wiki}. [Online]. Available:
  \url{https://wiki.openstack.org/wiki/Swift/.}
\BIBentrySTDinterwordspacing

\bibitem{Linden2006}
G.~Linden, ``Make data useful,'' Stanford CS345 Talk, 2006.

\bibitem{DelayImpact2009}
E.~Schurman and J.~Brutlag, ``The user and business impact of server delays,
  additional bytes, and {HTTP} chunking in web search,'' in \emph{O'Reilly
  Velocity Web performance and operations conference}, 2009.

\bibitem{Dean-CACM13}
J.~Dean and L.~A. Barroso, ``The tail at scale,'' \emph{Communications of the
  ACM}, vol.~56, no.~2, pp. 74--80, 2013.

\bibitem{Garfinkel07anevaluation}
S.~L. Garfinkel, ``An evaluation of {Amazon's} grid computing services: {EC2},
  {S3} and {SQS},'' Harvard University, Tech. Rep., 2007.

\bibitem{Wang:2010}
G.~Wang and T.~S.~E. Ng, ``The impact of virtualization on network performance
  of amazon {EC2} data center,'' in \emph{IEEE INFOCOM}, 2010.

\bibitem{ErasureCodingWindows}
C.~Huang, H.~Simitci, Y.~Xu, A.~Ogus, B.~Calder, P.~Gopalan, J.~Li, and
  S.~Yekhanin, ``Erasure coding in windows azure storage,'' in \emph{USENIX
  Annual Technical Conference}, Boston, MA, 2012, pp. 15--26.

\bibitem{Schrage68}
L.~Schrage, ``A proof of the optimality of the shortest remaining processing
  time discipline,'' \emph{Operations Research}, vol.~16, pp. 687--690, 1968.

\bibitem{Smith78}
D.~R. Smith, ``A new proof of the optimality of the shortest remaining
  processing time discipline,'' \emph{Operations Research}, vol.~16, pp.
  197--199, 1978.

\bibitem{Michael2012book}
M.~L. Pinedo, \emph{Scheduling: Theory, Algorithms, and Systems}, 4th~ed.\hskip
  1em plus 0.5em minus 0.4em\relax Springer, 2012.

\bibitem{Fox:2011:OSI:2133036.2133046}
K.~Fox and B.~Moseley, ``Online scheduling on identical machines using srpt,''
  in \emph{SODA}, 2011.

\bibitem{doi:10.1137/090772228}
N.~Bansal and K.~Pruhs, ``Server scheduling to balance priorities, fairness,
  and average quality of service,'' \emph{SIAM Journal on Computing}, vol.~39,
  no.~7, pp. 3311--3335, 2010.

\bibitem{tech_report2014}
{Yin Sun, Zizhan Zheng, C. Emre Koksal, Kyu-Han Kim, and Ness B. Shroff},
  ``Provably delay efficient data retrieving in storage clouds,''
  http://arxiv.org/abs/1501.01661, 2014.

\bibitem{DTN-delay}
S.~Jain, M.~Demmer, R.~Patra, and K.~Fall, ``Using redundancy to cope with
  failures in a delay tolerant network,'' in \emph{ACM SIGCOMM}, 2005.

\bibitem{Ananthanarayanan11}
G.~Ananthanarayanan, A.~Ghodsi, S.~Shenker, and I.~Stoica, ``Why let resources
  idle? aggressive cloning of jobs with dolly,'' in \emph{USENIX HotCloud},
  2011.

\bibitem{vulimiri12latency}
A.~Vulimiri, O.~Michel, P.~B. Godfrey, and S.~Shenker, ``More is less: Reducing
  latency via redundancy,'' in \emph{ACM HotNets}, 2012.

\bibitem{vulimiri13latency}
A.~Vulimiri, P.~B. Godfrey, R.~Mittal, J.~Sherry, S.~Ratnasamy, and S.~Shenker,
  ``Low latency via redundancy,'' in \emph{ACM CoNEXT}, 2013.

\bibitem{Flach13latency}
T.~Flach, N.~Dukkipati, A.~Terzis, B.~Raghavan, N.~Cardwell, Y.~Cheng, A.~Jain,
  S.~Hao, E.~Katz-Bassett, and R.~Govindan, ``Reducing web latency: the virtue
  of gentle aggression,'' in \emph{ACM SIGCOMM}, 2013.

\bibitem{huang-isit-2012}
L.~Huang, S.~Pawar, H.~Zhang, and K.~Ramchandran, ``Codes can reduce queueing
  delay in data centers,'' in \emph{IEEE ISIT}, 2012.

\bibitem{shah-mdsq-2012}
N.~B. Shah, K.~Lee, and K.~Ramchandran, ``The {MDS} queue: Analysing latency
  performance of codes,'' in \emph{IEEE ISIT}, 2014.

\bibitem{Joshi-2012}
G.~Joshi, Y.~Liu, and E.~Soljanin, ``On the delay-storage trade-off in content
  download from coded distributed storage systems,'' \emph{IEEE J. Sel. Areas
  Commun.}, vol.~32, pp. 989--997, May 2014.

\bibitem{Xiang2014}
\BIBentryALTinterwordspacing
Y.~Xiang, T.~Lan, V.~Aggarwal, and Y.-F.~R. Chen, ``Joint latency and cost
  optimization for erasure-coded data center storage,'' 2014. [Online].
  Available: \url{http://arxiv.org/abs/1404.4975}
\BIBentrySTDinterwordspacing

\bibitem{Kumar2014}
\BIBentryALTinterwordspacing
A.~Kumar, R.~Tandon, and T.~C. Clancy, ``On the latency of erasure-coded cloud
  storage systems,'' 2014. [Online]. Available:
  \url{http://arxiv.org/abs/1405.2833}
\BIBentrySTDinterwordspacing

\bibitem{shah-Allerton-2013}
N.~B. Shah, K.~Lee, and K.~Ramchandran, ``When do redundant requests reduce
  latency?'' in \emph{Allerton Conference}, 2013.

\bibitem{addShengboOriginal}
G.~Liang and U.~Kozat, ``{FAST CLOUD}: Pushing the envelope on delay
  performance of cloud storage with coding,'' \emph{IEEE/ACM Trans.
  Networking}, 2013, in press.

\bibitem{Liang2013_2}
------, ``{TOFEC}: Achieving optimal throughput-delay trade-off of cloud
  storage using erasure codes,'' in \emph{IEEE INFOCOM}, 2014.

\bibitem{ShengboInfocom}
S.~Chen, Y.~Sun, U.~Kozat, L.~Huang, P.~Sinha, G.~Liang, X.~Liu, and N.~B.
  Shroff, ``When queueing meets coding: Optimal-latency data retrieving scheme
  in storage clouds,'' in \emph{IEEE INFOCOM}, 2014.

\bibitem{Bagnoli2005}
M.~Bagnoli and T.~Bergstrom, ``\BIBforeignlanguage{English}{Log-concave
  probability and its applications},''
  \emph{\BIBforeignlanguage{English}{Economic Theory}}, vol.~26, no.~2, pp.
  445--469, 2005.

\bibitem{StochasticOrderBook}
M.~Shaked and J.~G. Shanthikumar, \emph{Stochastic Orders}.\hskip 1em plus
  0.5em minus 0.4em\relax Springer, 2007.

\bibitem{Jose}
J.~R. Berrendero and J.~C\'{a}rcamo, ``Characterizations of exponentiality
  within the {HNBUE} class and related tests,'' \emph{Journal of Statistical
  Planning and Inference}, vol. 139, no.~7, pp. 2399--2406, 2009.

\end{thebibliography}

\ifreport
\appendices
\ignore{
\section{Useful Notations and Definitions}
The decision-maker cannot predict beforehand when and which chunks will be downloaded. However, after some chunks are downloaded, the decision-maker knows about how many additional chunks are needed for completing request.
\begin{definition}
\textbf{Available chunks}: the chunks that have not been downloaded and are not under service.
\end{definition}

\begin{definition}
\textbf{Remaining chunks}: from the set of available chunks, any subset of chunks that are sufficient for retrieving each file.
\end{definition}

Let $u_{i}(t)$ denote the number of available chunks of file $i$ at time $t$, and $r_i(t)$ denote the number of remaining chunks of file $i$ at time $t$. The parameters $(u_i(t),r_i(t))$ will be used to make scheduling decisions. Suppose that $g_i(t)$ chunks of file $i$ have been downloaded by time $t$. If $g_i(t)< k_i$, then require $i$ is not completed at time $t$, such that $r_i(t)=k_i-g_i(t)$ and $u_i(t)=n_i-g_i(t)$.
Therefore, any unfinished request $i$ satisfies
\begin{eqnarray}\label{eq_u}
u_i(t)-r_{i}(t) = n_i-k_i = d_i-1.
\end{eqnarray}
If $g_i(t)= k_i$, then request $i$ is completed before time $t$ such that $r_i(t)=u_i(t)=0$.
We further define
\begin{eqnarray}\label{eq_def4}
r(t)=\sum_{i:a_i\leq t} r_i(t)
\end{eqnarray}
as the total number of remaining chunks of all unfinished requests at time $t$, and
\begin{eqnarray}\label{eq_def3}
u(t)=\sum_{i:a_i\leq t} u_i(t)
\end{eqnarray}
as the total number of available chunks of all unfinished requests at time $t$.
}
\section{Proof of Theorem~\ref{thm1}} \label{app0}
{First, consider an arbitrarily given sample path of chunk departures $(t_1,t_2,\ldots)$. According to the conditions of Theorem \ref{thm1}, the request parameters $N$ and $(a_i,k_i,n_i)_{i=1}^N$ are fixed. Then, the request completion times $(c_{1,\pi},c_{2,\pi},\ldots)$ of a policy $\pi$ are determined by which request each departed chunk belongs.
Let $D_\pi (t_1,t_2,\ldots)=\frac{1}{N}\sum_{i=1}^N (c_{i,\pi}-a_i)$ denote the sample-path average delay of policy $\pi$ for given request parameters $N$, $(a_i,k_i,n_i)_{i=1}^N$ and chunk departures $(t_1,t_2,\ldots)$. According to the SRPT discipline \cite{Schrage68,Smith78},  $D_\pi (t_1,t_2,\ldots)$ is minimized if each downloaded chunk belongs to the request with the fewest remaining chunks. This is satisfied by preemptive SERPT-R under the conditions of Theorem \ref{thm1}, because all $L$ threads are assigned to the request with the fewest remaining chunks. Therefore, for any given chunk departures $(t_1,t_2,\ldots)$, preemptive SERPT-R minimizes $D_\pi (t_1,t_2,\ldots)$, i.e.,
\begin{eqnarray}\label{eq_proof1_1}
D_{\text{SERPT-R}} (t_1,t_2,\ldots) = \min_{\pi} D_\pi (t_1,t_2,\ldots).
\end{eqnarray}

Let $F_\pi (t_1,t_2,\ldots)$ denote the cumulative distribution function of the chunk departure process $(t_1,t_2,\ldots)$ under policy $\pi$. Then, the average delay of policy $\pi$ can be expressed as
\begin{eqnarray}
\overline{D}_\pi =
\int D_\pi (t_1,t_2,\ldots) d F_\pi (t_1,t_2,\ldots).
\end{eqnarray}
According to Lemma \ref{lem0-1}, any two work-conserving policies $\pi_1$ and $\pi_2$ satisfy
\begin{eqnarray}\label{eq_proof1_2}
F_{\pi_1} (t_1,t_2,\ldots) = F_{\pi_2} (t_1,t_2,\ldots), ~\forall~(t_1,t_2,\ldots).
\end{eqnarray}
Using \eqref{eq_proof1_1}-\eqref{eq_proof1_2} and the fact that preemptive SERPT-R is a work-conserving policy, we can obtain for any work-conserving policy $\pi$ that
\begin{eqnarray}
&&\!\!\!\!\!\!\!\!\!~~~\overline{D}_\pi \nonumber\\
&&\!\!\!\!\!\!\!\!\!= \int D_{\pi} (t_1,t_2,\ldots) d F_\pi (t_1,t_2,\ldots) \nonumber\\
&&\!\!\!\!\!\!\!\!\!\geq \int D_{\text{SERPT-R}} (t_1,t_2,\ldots) d F_\pi (t_1,t_2,\ldots)\nonumber\\
&&\!\!\!\!\!\!\!\!\!=  \int D_{\text{SERPT-R}} (t_1,t_2,\ldots) d F_\text{SERPT-R} (t_1,t_2,\ldots)\nonumber\\
&&\!\!\!\!\!\!\!\!\!= \overline{D}_{\text{SERPT-R}}.
\end{eqnarray}
Hence, preemptive SERPT-R is delay-optimal among the class of work-conserving policies.
Finally, when preemption is allowed, a delay-optimal policy must be work-conserving. Hence, Theorem \ref{thm1} follows.}

\section{Proof of Theorem~\ref{thm3}} \label{app1}
The case of $L\leq d_{\min}$ was studied in Theorem \ref{thm1} and we only need to consider the case of $L> d_{\min}$. For notational simplicity, we use policy $P$ to denote preemptive SERPT-R with $L> d_{\min}$, and policy $Q$ to denote preemptive SERPT-R under the conditions of Theorem \ref{thm1} where $L\leq d_{\min}$ holds. In particular, policy $P$ is under the request parameters $N$ and $(a_i, k_i,n_i)_{i=1}^N$ such that there exists an integer $j$ $(1\leq j\leq N)$ satisfying $L> n_j-k_j +1$, and policy $Q$ has some ``virtual'' chunks such that it is under the request parameters $N$ and $(a_i, k_i,n'_i)_{i=1}^N$ satisfying $L\leq n'_i-k_i +1$ for all $i=1,2,\ldots,N$.

When $L> d_{\min}$, the optimal policy of Theorem \ref{thm3} can be an non-work-conserving policy under the conditions of Theorem \ref{thm1}, because there can be less than $L$ available chunks to download. By Theorem \ref{thm1}, policy $Q$ provides a lower bound of $\overline{D}_{\text{opt}}$. On the other hand, policy $P$ provides an upper bound of $\overline{D}_{\text{opt}}$. The remaining task is to evaluate the delay gap between policy $P$ and policy $Q$ when $L> d_{\min}$.

First, we construct the chunk departure sample paths $(t_1,t_2,\ldots)$ of policy $P$ and policy $Q$. Let $(t_1^l, t_2^l,\ldots)$ denote the chunk departure time sequences of thread $l$, such that the inter-departure time $\tau_j^l = t_{j+1}^l-t_j^l$ is i.i.d. exponentially distributed with rate $\mu$. Under policy $P$, the chunk departure time sequences $(t_1,t_2,\ldots)$ is obtained by taking the union $\cup_{l=1}^L(t_1^l, t_2^l,\ldots)$ and deleting the chunk departures during the idle periods of each thread $l$ under policy $P$. (Under policy $P$, the idle periods are different across the threads.) Since the chunk service time is memoryless, deleting some chunk departures will not affect the service time distribution of other chunks.
Under policy $Q$, the chunk departure time sequences $(t_1,t_2,\ldots)$ is obtained by taking the union $\cup_{l=1}^L(t_1^l, t_2^l,\ldots)$, and deleting the chunk departures when all $L$ threads are idle under policy $Q$. (Under policy $Q$, all $L$ threads are active or idle at the same time.) 
By this, we obtain two chunk departure sample paths of policy $P$ and policy $Q$ with the same probability to occur.

In the sequel, we will show that \emph{for any time $t$ and chunk departure sample paths of policy $P$ and policy $Q$ constructed above, policy $P$ needs to download $L-d_{\min}$ or fewer additional chunks after time $t$, so as to accomplish the same number of requests that are completed under policy $Q$ during $(0,t]$}.

\begin{definition}\cite{Smith78}
The state of the system is specified by an infinite vector $\vec{\alpha}=(\alpha_1,\alpha_2,\ldots)$ with non-negative, non-increasing components. At any time, the coordinates of $\vec{\alpha}$ are interpreted as follows: $\alpha_1$ is the maximum number of remaining chunks among all requests, $\alpha_2$ is the next greatest number of remaining chunks among all requests, and so on, with duplications being explicitly repeated. Suppose that there are $l$ unfinished requests in the system, then
\begin{eqnarray}
\alpha_1\geq\alpha_2\geq\ldots\geq\alpha_l>0 = \alpha_{l+1} =\alpha_{l+2}=\ldots.
\end{eqnarray}
\end{definition}

The key step for proving Theorem \ref{thm3} is to establish the following result:
\begin{lemma}\label{lemB0}
Let $\{\vec{\alpha}(t),t\geq0\}$ be the state process of policy $P$ and $\{\vec{\beta}(t),t\geq0\}$ be the state process of policy $Q$. If $L> d_{\min}$ and $\vec{\alpha}(0) = \vec{\beta}(0)$, then for the chunk departure sample paths of policy $P$ and policy $Q$ described above, we have
\begin{eqnarray}
\sum_{i=j}^\infty \alpha_i(t) \leq \sum_{i=j}^\infty \beta_i(t) + L- d_{\min}
\end{eqnarray}
for all $t\geq0$ and j = $1,2,\ldots$
\end{lemma}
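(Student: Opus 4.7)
The plan is to prove Lemma~\ref{lemB0} by induction on the event times in the coupled construction of policies $P$ and $Q$, maintaining the invariant $S_j(\vec\alpha(t)) - S_j(\vec\beta(t)) \leq L - d_{\min}$ for every $j \geq 1$ and every $t \geq 0$, where $S_j(\vec v) = \sum_{i \geq j} v_i$. The base case at $t = 0$ follows from $\vec\alpha(0) = \vec\beta(0)$. Events are of two types: request arrivals and thread-clock ticks $\tau_k^l$.

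For a request arrival, both $\vec\alpha$ and $\vec\beta$ insert the same value $k_i$ at sorted positions $p$ and $q$, respectively. A direct computation yields $S_j(\vec\alpha') = S_j(\vec\alpha) + k_i$ when $j \leq p$ and $S_j(\vec\alpha') = S_{j-1}(\vec\alpha)$ when $j > p$, and analogously for $\vec\beta$. Combining this with the equivalence $j \leq p \iff \alpha_{j-1} \geq k_i$, a four-way case split based on whether $j \leq p$ and whether $j \leq q$ closes each case via the inductive hypothesis at index $j$ or $j-1$. For example, when $j \leq p$ and $j > q$, one has $S_j(\vec\alpha') = S_j(\vec\alpha) + k_i \leq S_{j-1}(\vec\alpha) \leq S_{j-1}(\vec\beta) + L - d_{\min} = S_j(\vec\beta') + L - d_{\min}$, where the first step uses $k_i \leq \alpha_{j-1}$ and the last uses $j > q$.

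For a thread-clock tick $\tau_k^l$, the vector $\vec\alpha$ updates iff thread $l$ is busy in $P$, and $\vec\beta$ updates iff thread $l$ is busy in $Q$. Because $Q$'s virtual chunks keep the number of available chunks at least $L$ for every alive request, $Q$ is always work-conserving and SERPT-R in $Q$ loads all $L$ threads onto the smallest-$\beta$ request, so the tick deterministically decrements the last positive coordinate $\beta_{m_Q}$ by one. In $P$, SERPT-R assigns threads to the smallest-$\alpha$ requests subject to the caps $u_j = \alpha_j + d_j - 1$. When these caps admit all $L$ threads, the decrement position $m_P$ in $\vec\alpha$ lies among the smallest-$\alpha$ positions, and a case analysis paralleling the arrival step shows that $\Delta_j(t) = S_j(\vec\alpha(t)) - S_j(\vec\beta(t))$ changes by at most one in either direction, in a pattern controllable by a joint induction over all $j$. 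When the caps leave some threads idle, only $\vec\beta$ is decremented and $\Delta_j$ may rise by one; however, idleness triggers exactly when $\sum_j u_j(t) < L$, which structurally caps the tail of $\vec\alpha$.

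The main obstacle is the idle-thread sub-case at departures, since wasted ticks appear to grow $\Delta_j$ monotonically. The key observation that closes the proof is that $\sum_j u_j(t) < L$ together with $u_j \geq \alpha_j + d_{\min} - 1$ bounds $\sum_j \alpha_j = S_1(\vec\alpha(t))$ by approximately $L - d_{\min}$, so the pathwise accumulation of wasted ticks cannot push any $\Delta_j$ past $L - d_{\min}$ before either $Q$ exhausts its currently smallest request (halting the wasted-tick regime) or a new arrival re-populates $\vec\beta$ in a manner already handled by the arrival step. A careful joint induction over all $j$ using these structural bounds will then yield the claim.
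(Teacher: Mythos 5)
Your overall skeleton---an event-by-event induction on the coupled sample paths with the invariant $S_j(\vec\alpha)\leq S_j(\vec\beta)+L-d_{\min}$, an arrival step handled by the four-case comparison (this is exactly the paper's Lemma~\ref{lemB2}), and the observation that whenever policy $P$ idles a thread the total backlog obeys $\sum_i\alpha_i(t)\leq L-d_{\min}$---matches the paper's strategy. But there is a genuine gap in the busy-tick case, and it sits precisely where the crux of the lemma lies. Your claim that when the caps admit all $L$ threads ``the decrement position $m_P$ lies among the smallest-$\alpha$ positions'' is false: the availability caps $u_i=\alpha_i+d_i-1$ of the small requests can be saturated while all $L$ threads remain busy, pushing threads onto requests with large $\alpha$. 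For instance, take $\vec\alpha=\vec\beta=(5,1)$, $L=10$, with the small request having only $2$ available chunks ($d=2$) and the large one enough to occupy the other $8$ threads. A tick at a thread serving the large request decrements $\alpha_1$ under $P$, while under $Q$ (all ten threads on the smallest request) the same tick decrements $\beta_2$; hence $\Delta_2=S_2(\vec\alpha)-S_2(\vec\beta)$ \emph{increases} even though $P$ is fully busy. So ``changes by at most one in either direction, controllable by a joint induction'' does not close the induction: you must show that such an increase can occur only when $S_j(\vec\alpha)\leq L-d_{\min}$ already holds, so that the invariant is preserved trivially.

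The missing ingredient---which is the paper's key step in Lemma~\ref{lemB1}---is the redundancy-plus-priority argument: if $\sum_{i\geq j}\alpha_i\geq L-d_{\min}+1$, then since every unfinished request carries $d_i-1\geq d_{\min}-1$ redundant chunks, the requests occupying positions $\geq j$ alone have at least $\sum_{i\geq j}\alpha_i+d_{\min}-1\geq L$ available chunks; SERPT-R then fills all $L$ threads with chunks of requests whose remaining count is at most $\alpha_j$, so in this regime every departure under $P$ strictly decreases $S_j(\vec\alpha)$, while the coupled $Q$ path completes at most as many chunks over the same (arrival-free) stretch. This dichotomy---either $S_j(\vec\alpha)\leq L-d_{\min}$ and the bound is vacuous, or the tail has at least $L$ available chunks and $S_j(\vec\alpha)$ cannot lose ground to $S_j(\vec\beta)$---is what your per-tick induction still needs and never supplies. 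Separately, your closing discussion about the ``pathwise accumulation of wasted ticks'' is unnecessary and somewhat confused: once you note that an idle thread under $P$ forces $\sum_i\alpha_i(t)\leq L-d_{\min}$ at that instant, the invariant holds there regardless of history, with no need to track when $Q$ exhausts its smallest request. As written, the proposal is a plan whose decisive step is asserted rather than proved.
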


In order to prove this result, we first establish the following lemmas:

\begin{lemma}\label{lemB1}
Suppose that, under policy $P$, the system state at time $t$ is $\vec{\alpha}$ and at time $t+\Delta t$ is $\vec{\alpha}'$. Further, suppose that, under policy $Q$, the system state at time $t$ is $\vec{\beta}$ and at time $t+\Delta t$ is $\vec{\beta}'$. If (i) $L> d_{\min}$, (ii) no arrivals occur during the interval $(t, t+\Delta t]$ and (iii)
\begin{eqnarray}\label{eq_41}
\sum_{i=j}^\infty \alpha_i \leq \sum_{i=j}^\infty \beta_i + L- d_{\min}, ~\forall~j=1,2,\ldots
\end{eqnarray}
Then, for the chunk departure sample paths of policy $P$ and policy $Q$ described above, we have
\begin{eqnarray}\label{eq_40}
\sum_{i=j}^\infty \alpha_i' \leq \sum_{i=j}^\infty \beta_i' + L- d_{\min}, ~\forall~j=1,2,\ldots
\end{eqnarray}
\end{lemma}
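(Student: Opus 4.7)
The plan is to exploit the sample-path coupling constructed just before the lemma: each thread $l$ has its own i.i.d.\ sequence of exponential clock ticks shared between $P$ and $Q$, and a tick counts as a chunk completion exactly when the corresponding thread is busy under the policy in question. By the memoryless property I may take $\Delta t$ small enough that at most one tick occurs in $(t,t+\Delta t]$ and that no thread changes its busy/idle status in between; arrivals are excluded by hypothesis. It then suffices to verify that the invariant is preserved after each of the four possible outcomes of a single tick: no completion in either system, a completion only in $P$, a completion only in $Q$, or simultaneous completions in both.

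Two outcomes are immediate. If the tick produces no completion the states are unchanged. If only $P$ completes, then $\vec{\beta}$ is unchanged and one coordinate of $\vec{\alpha}$ drops by $1$; a unit decrement of any entry of a non-negative sorted vector can never increase a suffix sum (the prefix sum of the $j-1$ largest entries can decrease by at most $1$), so each $\sum_{i=j}^\infty\alpha_i$ can only shrink and the invariant is preserved.

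The $Q$-only case is more subtle but still direct. For SERPT-R to leave thread $l$ idle under $P$ while unfinished requests remain, every available chunk of every unfinished request in $P$ must already be under service, so $\sum_i(\alpha_i+d_i-1)<L$. Together with $d_i\geq d_{\min}$, this forces the sharper bound $\sum_i\alpha_i\leq L-d_{\min}$. Under $Q$ all $L$ threads serve the single smallest-$\beta$ request, so the decrement lands at the last positive entry: $\sum_{i=j}^\infty\beta_i$ drops by $1$ for each $j\leq M^Q:=|\{i:\beta_i>0\}|$ and remains $0$ for $j>M^Q$. For $j>M^Q$ the invariant reduces to $\sum_{i=j}^\infty\alpha_i\leq L-d_{\min}$, which follows from $\sum_i\alpha_i\leq L-d_{\min}$; for $j\leq M^Q$, since $\sum_{i=j}^\infty\beta_i\geq\beta_{M^Q}\geq 1$, the decremented right-hand side is at least $L-d_{\min}\geq\sum_i\alpha_i\geq\sum_{i=j}^\infty\alpha_i$. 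Either way the invariant is preserved.

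The main obstacle is the simultaneous-completion case, in which the shared tick completes a chunk in both systems. Under $Q$ the decrement lands at position $M^Q$ of $\vec{\beta}$; under $P$ it lands at some position $m_l$ of $\vec{\alpha}$ determined by which request thread $l$ was serving. The invariant can fail only for those $j$ with $m_l<j\leq M^Q$, on which $\sum_{i=j}^\infty\beta_i$ decreases by $1$ and $\sum_{i=j}^\infty\alpha_i$ does not; preservation then demands the previous inequality to hold strictly by at least $1$ on this range. I plan to establish this extra slack by invoking SERPT-R's assignment rule: if thread $l$ is assigned to rank $m_l$ in $\vec{\alpha}$, then every strictly smaller-$\alpha$ request (positions $m_l+1,\ldots$) must already be saturated, each absorbing at least $d_{\min}$ threads under the availability constraint $\alpha_i+d_i-1\geq d_{\min}$, and a careful count of these absorbed ``tail'' threads supplies exactly the unit of extra slack required. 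The hardest bookkeeping is that $\vec{\alpha}$ and $\vec{\beta}$ generally index different multisets of unfinished requests with different ranks, so I would reformulate the suffix sums in terms of the level counts $|\{i:\alpha_i\geq k\}|$ and $|\{i:\beta_i\geq k\}|$, under which a single chunk completion affects only one level cleanly, and verify the level-wise inequality by induction on $k$ using the saturation count above.
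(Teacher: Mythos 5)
Your proposal is correct in substance but proceeds by a genuinely different decomposition than the paper. The paper argues over the whole arrival-free interval at once: it fixes $j$, splits on whether $\sum_{i=j}^\infty\alpha_i'\leq L-d_{\min}$ (trivial) or $\geq L-d_{\min}+1$, and in the latter case uses the MDS availability bound $\alpha_i+d_i-1\geq \alpha_i+d_{\min}-1$ to conclude that throughout $(t,t+\Delta t]$ all $L$ threads of policy $P$ are busy and, by the SERPT-R priority rule, serve only requests counted in the suffix; hence $P$'s suffix sum drops by exactly the number $b$ of ticks while $Q$'s drops by at most $b$, and \eqref{eq_41} carries over to \eqref{eq_40}. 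You instead reduce to single-tick transitions under the shared-clock coupling and check four sub-cases; your cases of no completion, $P$-only completion, and $Q$-only completion are handled correctly (in particular the bound $\sum_i\alpha_i\leq L-d_{\min}$ when a thread idles under $P$ is exactly right). The one sub-case you leave as a plan, simultaneous completions, does close along the lines you sketch, and in fact more simply than you anticipate: when the decremented $\alpha$-entry lies strictly above every entry in the suffix from $j$ (the only way the $\alpha$-suffix fails to drop), the SERPT-R rule forces every suffix request to be saturated, the serving threads exclude thread $l$, and each suffix request absorbs at least $\alpha_i+d_{\min}-1$ threads, giving directly $\sum_{i\geq j}\alpha_i\leq L-1-(d_{\min}-1)=L-d_{\min}$; since the new $\beta$-suffix is nonnegative this already yields \eqref{eq_40}, so no extra ``slack relative to \eqref{eq_41}'' and no level-count induction are needed. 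Two loose ends to state explicitly if you write this up: (a) the lemma concerns an arbitrary $\Delta t$, so after proving invariance across one tick you must chain over the finitely many ticks in $(t,t+\Delta t]$ (states are constant between ticks, so this is immediate); (b) handle ties by comparing remaining-chunk \emph{values} rather than sorted ranks, as the paper does, since when the served request is tied with suffix entries the $\alpha$-suffix does drop and the dangerous case does not arise. What your route buys is a cleaner event-by-event invariant that parallels Lemma~\ref{lemB2}; what the paper's route buys is avoiding the simultaneous-departure bookkeeping altogether by aggregating all departures in the interval into a single counting step.
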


\begin{proof}
If $\sum_{i=j}^\infty \alpha_i'\leq L - d_{\min}$, then \eqref{eq_40} follows naturally.

If $\sum_{i=j}^\infty \alpha_i'\geq L- d_{\min}+1$, the unfinished requests have at least $L- d_{\min}+1$ remaining chunks to download at time $t+\Delta t$. Equation \eqref{eq_d_i} tells us that each unfinished request $i$ has $n_i-k_i = d_{i}-1$ redundant chunks. Therefore, the system must have at least a total number of $\sum_{i=j}^\infty \alpha_i'+ d_{\min}-1 \geq L$ available chunks at time $t+\Delta t$, and all $L$ threads are active under policy $P$ at time $t+\Delta t$.

Next, since there is no request arrivals during the interval $(t, t+\Delta t]$, all $L$ threads must be kept active during $(t, t+\Delta t]$ under policy $P$. Suppose that $b$ chunks are downloaded under policy $P$ during $(t, t+\Delta t]$. Then, in the two chunk departure sample paths constructed above, no more than $b$ chunks are downloaded under policy $Q$ during $(t, t+\Delta t]$, because the threads can be idle.

Further, suppose that one chunk being served at time $t+\Delta t$ under policy $P$ is associated to an $\alpha_m'$ satisfying $\alpha_m'>\alpha_j'$. Then, according to the description of policy $P$ (preemptive SERPT-R), all the available chunks of the requests with $\alpha_j'$ or fewer remaining chunks must be also under service at time $t+\Delta t$. We have just shown that the requests with $\alpha_j'$ or fewer remaining chunks have a total number of at least $L$ available chunks. Thus, the total number of chunks under service at time $t+\Delta t$ is no less than $L + 1$, which is impossible. Therefore, any request under service at time $t+\Delta t$ must associate to an $\alpha_m'$ satisfying $\alpha_m'\leq \alpha_j'$. Since no arrivals occur during the interval $(t, t+\Delta t]$, each downloaded chunk of policy $P$ during $(t, t+\Delta t]$ must belong to some request associated to an $\alpha_m'$ satisfying $\alpha_m'\leq\alpha_j'$.

Using these facts, we can obtain
$\sum_{i=j}^\infty \alpha_i' = \sum_{i=j}^\infty \alpha_i - b\leq \sum_{i=j}^\infty \beta_i + L- d_{\min} - b \leq \sum_{i=j}^\infty \beta_i' + L- d_{\min}$, where the equality is due to the fact that each downloaded chunk of policy $P$ must belong to some request associated to an $\alpha_m'$ satisfying $\alpha_m'\leq\alpha_j'$, the first inequality is due to \eqref{eq_41}, and the second inequality is due to the fact that no more than $b$ chunks are downloaded under  policy $Q$ during $(t, t+\Delta t]$.
\end{proof}

\begin{lemma}\label{lemB2}
Suppose that, under policy $P$, $\vec{\alpha}'$ is obtained by adding a request with $b$ remaining chunks to the system whose state is $\vec{\alpha}$. Further, suppose that, under policy $Q$, $\vec{\beta}'$ is obtained by adding a request with $b$ remaining chunks to the system whose state is $\vec{\beta}$.
If
\begin{eqnarray}
\sum_{i=j}^\infty \alpha_i \leq \sum_{i=j}^\infty \beta_i + L- d_{\min}, ~\forall~j=1,2,\ldots,
\end{eqnarray}
then
\begin{eqnarray}
\sum_{i=j}^\infty \alpha_i' \leq \sum_{i=j}^\infty \beta_i' + L- d_{\min}, ~\forall~j=1,2,\ldots
\end{eqnarray}
\end{lemma}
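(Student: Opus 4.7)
The plan is to treat Lemma B2 as a purely combinatorial statement, independent of the policies $P$ and $Q$: both $\vec{\alpha}'$ and $\vec{\beta}'$ are obtained by inserting a single new element $b$ into a non-increasing sequence. Denote the tail sums by $S_j(\vec{\alpha}) \triangleq \sum_{i=j}^{\infty}\alpha_i$, with the convention $S_0(\vec{\alpha}) \triangleq +\infty$. My main goal is to establish the insertion identity
\[
S_j(\vec{\alpha}') = \min\bigl\{S_j(\vec{\alpha}) + b,\; S_{j-1}(\vec{\alpha})\bigr\},
\]
together with the analogous identity for $\vec{\beta}'$.

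To prove this identity, I would let $r$ be the position at which $b$ is inserted into $\vec{\alpha}$, characterized by $\alpha_{r-1}\geq b \geq \alpha_r$. A short case split then yields $S_j(\vec{\alpha}') = S_j(\vec{\alpha}) + b$ for $j \leq r$ (the new entry is counted in the tail sum) and $S_j(\vec{\alpha}') = S_{j-1}(\vec{\alpha})$ for $j > r$ (the tail sum merely shifts one index earlier). The monotonicity $\alpha_{j-1} \geq b$ for $j \leq r$ and $\alpha_{j-1} \leq b$ for $j > r$ then confirms that the expression in each case equals the minimum appearing in the identity, so both cases are captured uniformly by the $\min$.

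With the identity in hand, the conclusion follows from monotonicity of $\min$. Applying the hypothesis $S_j(\vec{\alpha}) \leq S_j(\vec{\beta}) + (L-d_{\min})$ and $S_{j-1}(\vec{\alpha}) \leq S_{j-1}(\vec{\beta}) + (L-d_{\min})$, adding $b$ to the former, leaving the latter unchanged, and taking the minimum of the two right-hand sides gives
\[
S_j(\vec{\alpha}') \leq \min\bigl\{S_j(\vec{\beta}) + b,\; S_{j-1}(\vec{\beta})\bigr\} + (L-d_{\min}) = S_j(\vec{\beta}') + (L-d_{\min}).
\]
The edge case $j = 1$ is absorbed by the convention $S_0 = +\infty$, which makes the second term in the minimum vacuous on both sides.

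The only subtle point is making the insertion identity independent of where $b$ lands: the insertion positions of $b$ in $\vec{\alpha}$ and in $\vec{\beta}$ may differ, so a formula for $S_j(\vec{\alpha}')$ that mentions $r$ explicitly would not couple cleanly to the bound on $\vec{\beta}$. The $\min$-formulation removes this dependence entirely and lets the comparison be carried out term by term, at which point the lemma is immediate; no use of the dynamics or of any assumption other than $\vec{\alpha}$ and $\vec{\beta}$ being non-increasing is needed.
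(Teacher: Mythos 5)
Your proof is correct and is essentially the paper's argument in a tidier packaging: the paper establishes the same bound by a four-way case split on where $b$ lands in $\vec{\alpha}'$ and $\vec{\beta}'$ relative to the index $j$ (its Cases 1--4 are exactly the four combinations of which branch of your $\min$ is active in each vector), whereas your insertion identity $S_j(\vec{\alpha}')=\min\{S_j(\vec{\alpha})+b,\,S_{j-1}(\vec{\alpha})\}$ decouples the two insertion positions and reduces the conclusion to monotonicity of $\min$, with the convention $S_0=+\infty$ handling $j=1$. The underlying observations (tail sums after insertion equal either $S_j+b$ or $S_{j-1}$, compared via $b\lessgtr\alpha_{j-1}$) coincide with the paper's, so there is no gap.
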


\begin{proof}
The proof is similar to Lemma 3 in \cite{Smith78}.
Without loss of generalization, we suppose that $b$ is the $l$th coordinate of $\vec{\alpha}'$ and the $m$th coordinate of $\vec{\beta}'$. We consider the following four cases:

\textbf{Case 1}: $l<j, m<j$. We can obtain $\sum_{i=j}^\infty \alpha_i' = \sum_{i=j-1}^\infty \alpha_i \leq \sum_{i=j-1}^\infty \beta_i + L- d_{\min} = \sum_{i=j}^\infty \beta_i' + L- d_{\min}$.

\textbf{Case 2}: $l<j, m\geq j$. We have $\sum_{i=j}^\infty \alpha_i' = \sum_{i=j-1}^\infty \alpha_i \leq b + \sum_{i=j}^\infty \alpha_i \leq b + \sum_{i=j}^\infty \beta_i + L- d_{\min} = \sum_{i=j}^\infty \beta_i' + L- d_{\min}$.

\textbf{Case 3}: $l\geq j, m< j$. We have $\sum_{i=j}^\infty \alpha_i' = b + \sum_{i=j}^\infty \alpha_i \leq \sum_{i=j-1}^\infty \alpha_i \leq \sum_{i=j-1}^\infty \beta_i + L- d_{\min} = \sum_{i=j}^\infty \beta_i' + L- d_{\min}$.

\textbf{Case 4}: $l\geq j, m\geq j$. We have $\sum_{i=j}^\infty \alpha_i' = b + \sum_{i=j}^\infty \alpha_i \leq b + \sum_{i=j}^\infty \beta_i + L- d_{\min} = \sum_{i=j}^\infty \beta_i' + L- d_{\min}$.
\end{proof}

Using the initial state $\vec{\alpha}(0)=\vec{\beta}(0)$, Lemmas \ref{lemB1} and \ref{lemB2}, it is straightforward to prove Lemma \ref{lemB0}.
After Lemma \ref{lemB0} is established, we are ready to prove Theorem~\ref{thm3}.

\begin{proof}[Proof of Theorem~\ref{thm3}]
As explained above, we only need to evaluate the delay gap between policy $P$ and policy $Q$ when $L> d_{\min}$. Let the evolution of the system state under some queueing discipline be on a space $(\Omega,\mathcal{F},P)$. We assume that the request arrival process $\{a_i,k_i,n_i\}_{i=1}^N$ is fixed for all $\omega\in\Omega$. Let $\{\vec{\alpha}(t),t\geq0\}$ be the state process of policy $P$ and $\{\vec{\beta}(t),t\geq0\}$ be the state process of policy $Q$. Then, we have $\vec{\alpha}(0)=\vec{\beta}(0)=0$.

Suppose that under policy $Q$, there are $y$ request arrivals and $z$ request departures during $(0, t]$. Then, there are $y-z$ requests in the system at time $t$ such that $\sum_{i=y-z+1}^\infty \beta_i(t) = 0$. According to Lemma \ref{lemB0}, we have $\sum_{i=y-z+1}^\infty \alpha_i(t) \leq L- d_{\min}$. Hence, under policy $P$, the system still needs to download $L-d_{\min}$ or fewer chunks after time $t$, in order to complete $z$ requests as in policy $Q$. 
Suppose that exactly $L-d_{\min}$ chunks are needed to complete $z$ requests. At time $t$, at least $L-1$ threads are assigned to serve the requests associated to the $L-d_{\min}$ chunks that are most likely to result in request departures. After one of these chunks is downloaded, at least $L-2$ threads are assigned to serve the requests associated to the $L-d_{\min}-1$ chunks that are most likely to result in request departures. This procedure goes on, until $L-d_{\min}$ chunks are downloaded. Because the chunk download time of each thread is i.i.d. exponentially distributed with mean $1/\mu$, the average time for downloading these $L-d_{\min}$ chunks under policy $P$ is upper bounded by
\begin{eqnarray}\label{eq_delay111}
\sum_{l=d_{\min}}^{L-1}\frac{1}{l\mu},
\end{eqnarray}
where $\frac{1}{l\mu}$ is the average time for downloading one chunk when $l$ threads are active. If less than $L-d_{\min}$ chunks are needed to complete $z$ requests, the average  downloading time will be even shorter. Hence, the delay gap between policy $P$ and policy $Q$ is no more than the term in \eqref{eq_delay111}, and \eqref{eq_4-1-1} follows.
\end{proof}

\section{Proof of Theorem~\ref{thm2}} \label{app2}
First, the optimal policy under the conditions of Theorem \ref{thm2} is feasible even if preemption is allowed. Hence, by Theorem \ref{thm1}, preemptive SERPT-R provides a lower bound of $\overline{D}_{\text{opt}}$, i.e., the optimal delay of the policies satisfying the conditions of Theorem \ref{thm2}. On the other hand, non-preemptive SEDPT-R provides an upper bound of $\overline{D}_{\text{opt}}$.
The remaining task is to evaluate the delay gap between preemptive SERPT-R and non-preemptive SEDPT-R.

For notational simplicity, we use policy $P$ to denote preemptive SERPT-R, and policy $NP$ to denote non-preemptive SEDPT-R. We will show that \emph{for any time $t$ and any given sample path of chunk departures $(t_1,t_2,\ldots)$, policy $NP$ needs to download $L$ or fewer additional chunks after time $t$, so as to accomplish the same number of requests that are completed under policy $P$ during $(0,t]$}.

\begin{definition}\cite{Smith78}
The system state of \emph{preemptive SERPT-R (policy $P$)} is specified by an infinite vector $\vec{\beta}=(\beta_1,\beta_2,\ldots)$ with non-negative, non-increasing components. At any time, the coordinates of $\vec{\beta}$ are interpreted as follows: $\beta_1$ is the maximum number of remaining chunks among all requests, $\beta_2$ is the next greatest number of remaining chunks among all requests, and so on, with duplications being explicitly repeated. Suppose that there are $l$ unfinished requests in the system, then
\begin{eqnarray}
\beta_1\geq\beta_2\geq\ldots\geq\beta_l>0 = \beta_{l+1} =\beta_{l+2}=\ldots.
\end{eqnarray}
\end{definition}

\begin{definition}
The system state of \emph{non-preemptive SEDPT-R (policy $NP$)} is specified by a pair of vectors $\{\vec{\alpha},\vec{\delta}\}$, where $\vec{\alpha}=(\alpha_1,\alpha_2,\ldots)$ and $\vec{\delta}=(\delta_1,\delta_2,\ldots)$ are two infinite vectors with non-negative components. At any time, the coordinates of $\vec{\alpha}$ and $\vec{\delta}$ are interpreted as follows: $\alpha_i$ is the number of chunks to be downloaded for completing the request associated to the $i$th coordinate, and $\delta_i$ is the number of threads assigned to serve the request associated to the $i$th coordinate such that $\sum_{i=1}^\infty \delta_i\leq L$. Suppose that there are $l$ unfinished requests in the system, then the coordinates of $\vec{\alpha}$ and $\vec{\delta}$ are sorted such that
\begin{eqnarray}
&&\alpha_1-\delta_1\geq\alpha_2-\delta_2\geq\ldots\geq \alpha_{l}-\delta_{l},\\
&&\alpha_{l+1}-\delta_{l+1} = \alpha_{l+2}-\delta_{l+2}=\ldots = 0,\\
&&\alpha_i \left\{\begin{array}{l l} >0, &\text{if}~ i\leq l;\\
  =0, &\text{if}~ i\geq l+1,\end{array}\right.\\
&&\delta_i \left\{\begin{array}{l l} \geq0, &\text{if}~ i\leq l;\\
  =0, &\text{if}~ i\geq l+1.\end{array}\right.
\end{eqnarray}
Note that there exists an integer $i$ ($0\leq i\leq l$) such that $\alpha_1-\delta_1\geq\ldots\geq \alpha_{i}-\delta_{i}> 0\geq\alpha_{i+1}-\delta_{i+1}\geq\ldots\geq \alpha_{l}-\delta_{l}$.
\end{definition}

The key step for proving Theorem \ref{thm2} is to establish the following result:
\begin{lemma}\label{lem_non_prmp0}
Let $\{\vec{\alpha}(t),\vec{\delta}(t),t\geq0\}$ be the state process of policy $NP$ and $\{\vec{\beta}(t),t\geq0\}$ be the state process of policy $P$. If $\vec{\alpha}(0)=\vec{\delta}(0) = \vec{\beta}(0) =0$, then for any given sample path of chunk departures $(t_1,t_2,\ldots)$, we have
\begin{eqnarray}
\sum_{i=j}^\infty [\alpha_i(t)- \delta_i(t)]\leq \sum_{i=j}^\infty \beta_i(t)
\end{eqnarray}
for all $t\geq0$ and j = $1,2,\ldots$
\end{lemma}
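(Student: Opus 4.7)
The plan is to prove Lemma~\ref{lem_non_prmp0} by induction on the discrete events of the coupled evolution of the two systems, following the template of Appendix~\ref{app1}. Because the arrival sequence is fixed and the chunk departure times $(t_1,t_2,\ldots)$ are coupled on a common sample path, the only events that change either state are (a) a request arrival at some $a_i$ and (b) a chunk departure at some $t_j$. At $t=0$ the inequality is trivial since $\vec{\alpha}(0)=\vec{\delta}(0)=\vec{\beta}(0)=0$, so it suffices to prove that each event preserves $\sum_{i=j}^{\infty}[\alpha_i(t)-\delta_i(t)]\le \sum_{i=j}^{\infty}\beta_i(t)$ for every $j$.

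For an arrival of a new request with $b$ chunks, the coordinate with value $b$ is inserted into $\vec{\beta}$ and the pair $(\alpha,\delta)=(b,0)$ is appended to $(\vec{\alpha},\vec{\delta})$ before idle threads (if any) are reassigned by SEDPT-R. Analogous to the four cases in Lemma~\ref{lemB2}, splitting on whether the insertion index is below or above $j$ on each side preserves the tail inequality; any subsequent SEDPT-R reassignment of idle threads merely raises $\delta_k$ for the coordinate currently having the smallest $\alpha_k-\delta_k$, which weakly decreases every tail sum on the left. For a chunk departure under policy $P$, since $d_{\min}\ge L$ all $L$ threads are concentrated on the request with the smallest positive $\beta_i$, so exactly one coordinate decreases by one (and the corresponding request departs when $\beta_i=1$). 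For the coupled chunk departure under policy $NP$, the source coordinate has $\alpha_i$ and $\delta_i$ each decrease by one, leaving $\alpha_i-\delta_i$ unchanged, after which the freed thread (and any surplus threads released by the completion of a request) is reassigned by SEDPT-R, decreasing $\alpha_k-\delta_k$ by one for the coordinate currently having the smallest value.

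The main obstacle is the chunk-departure step: showing that the single SEDPT-R decrement on the left is always at an index that is at least as large (in the sorted order) as the SERPT-R decrement on the right, uniformly in $j$. The subtlety is that the coordinate on which policy $P$ applies its decrement need not correspond, under the coupling, to the same physical request that contributes the decrement on the $NP$ side; moreover, when the departed chunk completes its request, the several redundant threads attached to it are released simultaneously and must all be reassigned, which can reshuffle the sort order of $(\vec{\alpha},\vec{\delta})$. I plan to address this by establishing an auxiliary invariant that SEDPT-R keeps the ``extra'' differential work carried by policy $NP$ concentrated at the head of its sorted vector, so that each decrement made by preemptive SERPT-R at the head of $\vec{\beta}$ is matched or dominated by a decrement at an equal or earlier index on the $NP$ side. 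Combined with the observation that both $\sum_{k\ge 1}[\alpha_k-\delta_k]$ and $\sum_{k\ge 1}\beta_k$ decrease by exactly one at each chunk departure (since both policies are work-conserving under the hypotheses of Theorem~\ref{thm2}), this gives the required preservation for every $j$.

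Combining the base case with the two preservation steps yields $\sum_{i=j}^{\infty}[\alpha_i(t)-\delta_i(t)]\le \sum_{i=j}^{\infty}\beta_i(t)$ for all $t\ge 0$ and all $j\ge 1$, completing the argument.
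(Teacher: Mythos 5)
Your setup (induction over arrival and departure events, base case $\vec{\alpha}(0)=\vec{\delta}(0)=\vec{\beta}(0)=0$, and the four-case insertion argument for arrivals in the style of Lemma~\ref{lem_non_prmp2}) matches the paper, but the proposal has a genuine gap exactly at the step you yourself flag as the main obstacle: the chunk-departure step is not proved, only deferred to an ``auxiliary invariant'' that the extra differential work of policy $NP$ stays ``concentrated at the head of its sorted vector'' so that each decrement of $\vec{\beta}$ is matched by a decrement at an equal or earlier index on the $NP$ side. That invariant is never formulated precisely, never established, and is not the right mechanism: the number of unfinished requests under $P$ and $NP$ need not agree, so there is no meaningful index-to-index correspondence between the two sorted vectors, and no such cross-indexing is needed. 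The paper's argument (Lemma~\ref{lem_non_prmp1}) closes this step with a much simpler observation: under SEDPT-R the freed thread is reassigned to the coordinate with the \emph{smallest} $\alpha_i-\delta_i$, i.e., the last coordinate $l$ of the $NP$ vector, while the source coordinate has $\alpha$ and $\delta$ both drop by one. Hence for any $j$, either $\sum_{i=j}^{\infty}[\alpha_i'-\delta_i']\le 0$ (trivial), or $j\le l$ and the left tail decreases by \emph{exactly} one; since only one chunk departs under $P$ at the coupled instant, the right tail decreases by \emph{at most} one, giving $\sum_{i=j}^{\infty}[\alpha_i'-\delta_i']=\sum_{i=j}^{\infty}[\alpha_i-\delta_i]-1\le\sum_{i=j}^{\infty}\beta_i-1\le\sum_{i=j}^{\infty}\beta_i'$. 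No dominance of decrement positions is required (and note, incidentally, that SERPT-R decrements the \emph{last} positive coordinate of the non-increasingly sorted $\vec{\beta}$, not its head).

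A second, related omission: you acknowledge but do not handle the case in which the departed chunk completes a request under $NP$, so that the coordinate is removed, several redundant threads are released and reassigned simultaneously, and the sort order is reshuffled. The paper treats this separately (Case 2 of Lemma~\ref{lem_non_prmp1}) with the sub-cases $j\ge m$ and $j<m$ relative to the departing coordinate $m$; your proposal would need an analogous explicit argument. Finally, your claim that $\sum_{k\ge1}[\alpha_k-\delta_k]$ and $\sum_{k\ge1}\beta_k$ each decrease by exactly one at every departure is both slightly inaccurate in corner cases (e.g., when no unfinished request remains to absorb the freed thread, or when $P$ is already idle) and beside the point: what the induction needs is the per-$j$ tail behavior described above, not the behavior of the full sums.
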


In order to prove this result, we first establish the following lemmas:

\begin{lemma}\label{lem_non_prmp1}
Suppose that, under policy $NP$, $\{\vec{\alpha}',\vec{\delta}'\}$ is obtained by completing a chunk at one of the $L$ threads in the system whose state is $\{\vec{\alpha},\vec{\delta}\}$. Further, suppose that, under policy $P$, $\vec{\beta}'$ is obtained by completing a chunk at one of the $L$ threads in the system whose state is $\vec{\beta}$.
If
\begin{eqnarray}\label{eq_non_prmp_41}
\sum_{i=j}^\infty [\alpha_i - \delta_i]\leq \sum_{i=j}^\infty \beta_i, ~\forall~j=1,2,\ldots,
\end{eqnarray}
then
\begin{eqnarray}\label{eq_non_prmp_40}
\sum_{i=j}^\infty [\alpha'_i - \delta'_i]\leq \sum_{i=j}^\infty \beta'_i, ~\forall~j=1,2,\ldots
\end{eqnarray}
\end{lemma}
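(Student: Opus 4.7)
The plan is to track how the tail sums $\sum_{i=j}^{\infty}\gamma_i$ (with $\gamma_i=\alpha_i-\delta_i$) and $\sum_{i=j}^{\infty}\beta_i$ change after the two chunk-completion events, and then verify that the pre-update inequality \eqref{eq_non_prmp_41} transfers to \eqref{eq_non_prmp_40}. Under $P$ (preemptive SERPT-R with $d_{\min}\ge L$), all $L$ threads serve the unique request with fewest remaining chunks, which sits at sorted position $l^P$ equal to the number of unfinished requests; a chunk completion either decrements $\beta_{l^P}$ by one, or removes that request entirely when $\beta_{l^P}=1$. A short bookkeeping argument gives, in both sub-cases, $\sum_{i=j}^{\infty}\beta_i' = \sum_{i=j}^{\infty}\beta_i - \mathbf{1}\{j\le l^P\}$ for every $j\ge 1$.

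For $NP$, let $m$ index the request whose chunk just completed. I view the event as first decrementing $\alpha_m$ and $\delta_m$ each by one (leaving $\gamma_m$ unchanged at that instant) and then greedily reassigning the freed thread to the request $q^{\star}$ currently minimizing $\gamma_j$. In Case A ($\alpha_m\ge 2$, no departure) the reassignment drops $\gamma_{q^{\star}}$ by exactly one, and because $q^{\star}$ already held the smallest $\gamma$ and strictly decreases, it remains at position $l^{NP}$ of the re-sorted list, giving $\sum_{i=j}^{\infty}\gamma_i' = \sum_{i=j}^{\infty}\gamma_i - \mathbf{1}\{j\le l^{NP}\}$. In Case B ($\alpha_m=1$, request $m$ departs) the $\delta_m\ge 1$ threads on $m$ all become idle at once; I plan to apply the greedy reassignment one thread at a time and argue by induction on the number of reassignments that each step drops the current smallest $\gamma$ by one, so after the $m$-coordinate (which satisfies $\gamma_m=1-\delta_m\le 0$) is removed and the $\delta_m$ threads are redistributed, the re-sorted tail sums again decrease by exactly one for $j\le l^{NP}$ and are unchanged above, with the index shift absorbing the departure.

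Combining the two characterizations, I verify \eqref{eq_non_prmp_40} per $j$. When $j\le\min(l^P,l^{NP})$ both sides drop by one and the hypothesis transfers directly; when $l^P<j\le l^{NP}$ we have $\sum_{i=j}^{\infty}\beta_i=0$ by construction, so the hypothesis forces $\sum_{i=j}^{\infty}\gamma_i\le 0$ and therefore $\sum_{i=j}^{\infty}\gamma_i-1\le 0$ as required; when $l^{NP}<j\le l^P$ the $\gamma$ side is unchanged at $0$ while $\sum_{i=j}^{\infty}\beta_i\ge\beta_{l^P}\ge 1$ ensures $\sum_{i=j}^{\infty}\beta_i-1\ge 0$; and when $j$ exceeds both thresholds nothing changes. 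The main obstacle I anticipate is Case B with $\delta_m>1$: the simultaneous release of several threads and the cascading re-sortings require careful bookkeeping to show that the thread-by-thread induction really does preserve the ``tail decreases by one'' claim, and this is where most of the technical work will live.
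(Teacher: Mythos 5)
Your Case A bookkeeping and the per-$j$ verification scheme are fine as far as they go, but the exact identity you rely on in Case B is false, and that is exactly the case you flagged as carrying the technical weight. When request $m$ departs under $NP$, you claim that after deleting the coordinate $\gamma_m=1-\delta_m$ and redistributing its $\delta_m$ threads, every tail sum with $j\le l^{NP}$ drops by exactly one and the rest are unchanged. This is only guaranteed at small $j$; for indices near the bottom of the list the deletion of a nonpositive coordinate together with the index shift can make tail sums \emph{increase}. Concretely, take $L=4$ and two unfinished requests with $(\alpha,\delta)=(5,1)$ and $(1,3)$, so the sorted differentials are $\vec{\gamma}=(4,-2)$ and $l^{NP}=2$. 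When the second request departs, its three threads move to the first request, giving $\vec{\gamma}'=(1)$: the tail from $j=2$ goes from $-2$ up to $0$, not down by one. (A variant with three requests shows the same failure with $m<l$.) Since your verification step ``both sides drop by one, so the hypothesis transfers'' is built on this exact decrement, the argument does not establish \eqref{eq_non_prmp_40} at those indices, even though the conclusion itself happens to remain true there. Note also that the thread-by-thread induction you anticipate is not where the difficulty lies (under $d_{\min}\ge L$ all freed threads simply go to the smallest-$\gamma$ request); the problem is the removed coordinate $1-\delta_m\le 0$ interacting with the index shift.

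The repair is essentially the paper's argument: abandon equalities in favor of inequalities. First dispose of the trivial case $\sum_{i=j}^\infty[\alpha'_i-\delta'_i]\le 0$, which already gives \eqref{eq_non_prmp_40} since $\vec{\beta}'\ge 0$. Otherwise, in the departure case split on $j$ versus $m$: for $j\ge m$ bound the new $NP$ tail at $j$ by the old $NP$ tail at $j+1$ minus one (at least one freed thread is reassigned to a surviving request in that range), then invoke the hypothesis \eqref{eq_non_prmp_41} at index $j+1$ and use $\sum_{i=j+1}^\infty\beta_i\le\sum_{i=j}^\infty\beta_i$ together with the fact that a single chunk completion under $P$ lowers any tail sum of $\vec{\beta}$ by at most one; for $j<m$ the same-index accounting you propose does work ($\alpha$-tail down by one, $\delta$-tail unchanged). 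This also shows your exact formula on the $P$ side (tracking $l^P$) is unnecessary: the inequality $\sum_{i=j}^\infty\beta'_i\ge\sum_{i=j}^\infty\beta_i-1$ suffices and avoids leaning on the specific structure of SERPT-R.
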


\begin{proof}
Suppose that, under policy $NP$, there are $l$ unfinished requests at state $\{\vec{\alpha},\vec{\delta}\}$. If $\sum_{i=j}^\infty [\alpha'_i - \delta'_i]\leq0$, then the inequality \eqref{eq_non_prmp_40} follows naturally. In the following, we will consider the scenario of $\sum_{i=j}^\infty [\alpha'_i - \delta'_i]>0$ in two cases.

\textbf{Case 1}: Under policy $NP$, the chunk departure does not lead to a request completion. In this case, the thread that has just completed a chunk will be reassigned to serve the request associated to the $l$th coordinate such that $\alpha'_l- \delta'_l = \alpha_l- \delta_l -1$. Meanwhile, we have $\alpha'_i- \delta'_i = \alpha_i- \delta_i$ for all $i=1,2,\ldots,l-1$, and $\alpha'_i- \delta'_i = 0$ for all $i=l+1,l+2,\ldots$ Since $\sum_{i=j}^\infty [\alpha'_i - \delta'_i]>0$, we have $j\leq l$. Therefore, $\sum_{i=j}^\infty [\alpha'_i - \delta'_i]= \sum_{i=j}^\infty [\alpha_i - \delta_i] -1 \leq \sum_{i=j}^\infty \beta_i -1 \leq \sum_{i=j}^\infty \beta'_i$.

\textbf{Case 2}: Under policy $NP$, the chunk departure results in a request departure. Suppose that the departed request is associated to the $m$th coordinate at state $\{\vec{\alpha},\vec{\delta}\}$ ($m\leq l$). After the request departure, the threads that was previous serving the request associated to the $m$th coordinate will be reassigned to serve the request associated to the $l-1$th coordinate at state $\{\vec{\alpha}',\vec{\delta}'\}$.

If $j\geq m$, then we have $\sum_{i=j}^\infty [\alpha'_i - \delta'_i] \leq \sum_{i=j+1}^\infty [\alpha_i - \delta_i] -1 \leq \sum_{i=j+1}^\infty \beta_i -1\leq \sum_{i=j}^\infty \beta_i -1 \leq \sum_{i=j}^\infty \beta'_i$.

If $j< m$, then we have $\sum_{i=j}^\infty \alpha'_i  = \sum_{i=j}^\infty \alpha_i -1$ and $\sum_{i=j}^\infty \delta'_i  = \sum_{i=j}^\infty \delta_i$. Hence, $\sum_{i=j}^\infty [\alpha'_i - \delta'_i] = \sum_{i=j}^\infty [\alpha_i - \delta_i] -1 \leq \sum_{i=j}^\infty \beta_i -1\leq \sum_{i=j}^\infty \beta'_i$.
\end{proof}

\begin{lemma}\label{lem_non_prmp2}
Suppose that, under policy $NP$, $\{\vec{\alpha}',\vec{\delta}'\}$ is obtained by adding a request with $b$ remaining chunks to the system whose state is $\{\vec{\alpha},\vec{\delta}\}$. Further, suppose that, under policy $P$, $\vec{\beta}'$ is obtained by adding a request with $b$ remaining chunks to the system whose state is $\vec{\beta}$.
If
\begin{eqnarray}
\sum_{i=j}^\infty [\alpha_i - \delta_i]\leq \sum_{i=j}^\infty \beta_i, ~\forall~j=1,2,\ldots,
\end{eqnarray}
then
\begin{eqnarray}
\sum_{i=j}^\infty [\alpha'_i - \delta'_i]\leq \sum_{i=j}^\infty \beta'_i, ~\forall~j=1,2,\ldots
\end{eqnarray}
\end{lemma}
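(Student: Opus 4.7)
The plan is to mirror the four-case structure used by Smith in the proof of Lemma \ref{lemB2}, with the vector $\vec{\alpha} - \vec{\delta}$ playing the role that $\vec{\alpha}$ plays there. The first step is to pin down the two insertion positions: let $l$ be the position at which the new arrival lands in the sorted $\vec{\alpha}' - \vec{\delta}'$ (using the NP sort key $\alpha_i - \delta_i$), and let $m$ be its position in $\vec{\beta}'$. Because the arriving request has $b$ outstanding chunks and no thread yet assigned to it, its NP sort key is exactly $b - 0 = b$, which coincides with the value $b$ used for its insertion into $\vec{\beta}'$. The sorting convention then yields $\alpha_i - \delta_i \geq b$ for $i < l$ and $\alpha_i - \delta_i \leq b$ for $i \geq l$ in the pre-arrival NP state, and analogously $\beta_i \geq b$ for $i < m$ and $\beta_i \leq b$ for $i \geq m$ on the $P$ side.

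The second step is to write down how each tail sum responds to an insertion. When the insertion point is strictly before $j$ (i.e., $j > l$ on the NP side or $j > m$ on the $P$ side), the tail sum simply shifts down by one index; when the insertion point is at or after $j$ (i.e., $j \leq l$ or $j \leq m$), the tail sum instead picks up an additive $+b$. I would record these two update formulas once for each side and then invoke them throughout the case work.

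The third step is the case split on the signs of $l - j$ and $m - j$. Cases 1 ($l < j$, $m < j$) and 4 ($l \geq j$, $m \geq j$) collapse immediately to the induction hypothesis at index $j - 1$ and $j$, respectively, since both sides shift the same way. Case 2 ($l < j$, $m \geq j$) uses the NP sort-order inequality $\alpha_{j-1} - \delta_{j-1} \leq b$ (available because $j - 1 \geq l$) to convert a leading difference into an additive $b$ and then invoke the hypothesis at $j$. Case 3 ($l \geq j$, $m < j$) is the crucial one and exactly parallels Case 3 of Lemma \ref{lemB2}: since $j - 1 < l$, the NP sort order now gives $\alpha_{j-1} - \delta_{j-1} \geq b$, which absorbs the additive $+b$ on the left and reduces the inequality to the hypothesis at $j - 1$.

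The main obstacle I anticipate is purely notational, namely keeping the index bookkeeping straight on both sides simultaneously, given that the two vectors are sorted by different keys ($\alpha_i - \delta_i$ on the NP side, $\beta_i$ on the $P$ side) while the arrival lands at the same effective value $b$ on each. Beyond that, there is no new probabilistic content beyond Lemma \ref{lemB2}; in particular, unlike the companion arrival lemma in Appendix \ref{app1}, no $L - d_{\min}$ slack needs to be carried through, because the invariant in question does not exploit storage redundancy.
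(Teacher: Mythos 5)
Your proposal is correct and follows essentially the same route as the paper, which likewise adapts Smith's four-case insertion argument (as in Lemma~\ref{lemB2}) with the sort key $\alpha_i-\delta_i$ and the new arrival entering both states with effective value $b$. The key inequalities you isolate, $\alpha_{j-1}-\delta_{j-1}\leq b$ when $j-1\geq l$ (Case 2) and $\alpha_{j-1}-\delta_{j-1}\geq b$ when $j-1<l$ (Case 3), are exactly the ones the paper's proof uses, so no gap remains.
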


\begin{proof}
The proof is similar to Lemma 3 in \cite{Smith78}.
Without loss of generalization, we suppose that $b$ is the $l$th coordinate of $\{\vec{\alpha}',\vec{\delta}'\}$ and the $m$th coordinate of $\vec{\beta}'$. We consider the following four cases:

\textbf{Case 1}: $l<j, m<j$. We can obtain $\sum_{i=j}^\infty [\alpha'_i - \delta'_i] = \sum_{i=j-1}^\infty [\alpha_i - \delta_i] \leq \sum_{i=j-1}^\infty \beta_i= \sum_{i=j}^\infty \beta_i'$.

\textbf{Case 2}: $l<j, m\geq j$. We have $\sum_{i=j}^\infty [\alpha'_i - \delta'_i] = \sum_{i=j-1}^\infty [\alpha_i - \delta_i] \leq b + \sum_{i=j}^\infty [\alpha_i - \delta_i] \leq b + \sum_{i=j}^\infty \beta_i = \sum_{i=j}^\infty \beta_i'$.

\textbf{Case 3}: $l\geq j, m< j$. We have $\sum_{i=j}^\infty [\alpha'_i - \delta'_i] = b + \sum_{i=j}^\infty [\alpha_i - \delta_i] \leq \sum_{i=j-1}^\infty [\alpha_i - \delta_i] \leq \sum_{i=j-1}^\infty \beta_i = \sum_{i=j}^\infty \beta_i'$.

\textbf{Case 4}: $l\geq j, m\geq j$. We have $\sum_{i=j}^\infty [\alpha'_i - \delta'_i] = b + \sum_{i=j}^\infty [\alpha_i - \delta_i] \leq b + \sum_{i=j}^\infty \beta_i = \sum_{i=j}^\infty \beta_i'$.
\end{proof}

Using Lemma \ref{lem_non_prmp1}, Lemma \ref{lem_non_prmp2}, and the initial state $\vec{\alpha}(0)=\vec{\delta}(0) = \vec{\beta}(0) =0$ at time $t=0$, Lemma \ref{lem_non_prmp0} follows immediately.
After Lemma \ref{lem_non_prmp0} is established, we are ready to prove Theorem~\ref{thm2}.

\begin{proof}[Proof of Theorem~\ref{thm2}]
As explained above, we only need to evaluate the delay gap between policy $NP$ and policy $P$. Let the evolution of the system state under some queueing discipline be on a space $(\Omega,\mathcal{F},P)$. We assume that the request arrival process $\{a_i,k_i,n_i\}_{i=1}^N$ is fixed for all $\omega\in\Omega$. Let $\{\vec{\alpha}(t), \vec{\delta}(t),t\geq0\}$ be the state process of policy $NP$ and $\{\vec{\beta}(t),t\geq0\}$ be the state process of policy $P$. Then, we have $\vec{\alpha}(0)=\vec{\beta}(0)=\vec{\delta}(0)=0$.

Suppose that under policy $P$, there are $y$ request arrivals and $z$ request departures during $(0, t]$. Then, there are only $y-z$ unfinished requests in the system at time $t$ such that $\sum_{i=y-z+1}^\infty \beta_i(t) = 0$. According to Lemma \ref{lem_non_prmp0}, we have $\sum_{i=y-z+1}^\infty \alpha_i(t) \leq \sum_{i=y-z+1}^\infty \delta_i(t)$. Hence, under policy $NP$, the system still needs to download $\sum_{i=y-z+1}^\infty \delta_i(t)$ or fewer chunks associated to $\alpha_{y-z+1}(t), \alpha_{y-z+2}(t),\ldots$ after time $t$, in order to complete $z$ requests as in policy $P$. 

Suppose that exactly $\sum_{i=y-z+1}^\infty \delta_i(t)$ chunks are needed to complete $z$ requests. At time $t$, there are $\sum_{i=1}^{y-z} \delta_i(t)$ threads that are assigned to other requests. In order to accomplish $z$ requests, the system still needs to download $\sum_{i=y-z+1}^\infty \delta_i(t)$ chunks associated to $\alpha_{y-z+1}(t), \alpha_{y-z+2}(t),\ldots$, during which time at most
$\sum_{i=1}^{y-z} \delta_i(t)$ chunks associated to $\alpha_{1}(t), \alpha_{2}(t),\ldots, \alpha_{y-z}(t)$ will be downloaded. This is because each thread that is serving a request associated to $\alpha_{1}(t), \alpha_{2}(t),\ldots, \alpha_{y-z}(t)$ at time $t$ will be reassigned to serve a request associated to $\alpha_{y-z+1}(t), \alpha_{y-z+2}(t),\ldots$ after completing the current chunk.
Since $\sum_{i=1}^{\infty} \delta_i(t) \leq L$, the system needs to download at most $L$ extra chunks to complete $z$ requests, regardless of how many of these extra chunks belong to each request. Because the chunk download time of each thread is i.i.d. exponentially distributed with mean $1/\mu$, the average time for the system to use $L$ threads to download $L$ chunks is $1/\mu$. If less than $\sum_{i=y-z+1}^\infty \delta_i(t)$ chunks are needed to complete $z$ requests, the average downloading time will be even shorter. Hence, the delay gap between policy $NP$ and policy $P$ is no more than $1/\mu$, and \eqref{eq_2-1} follows.
\end{proof}

\section{Proof of Theorem~\ref{thm4}}\label{app3-1}
The delay lower bound of $\overline{D}_{\text{opt}}$ is trivial. For the upper bound of $\overline{D}_{\text{opt}}$, we need to combine the proof techniques of Theorem \ref{thm2} and Theorem \ref{thm3} to qualify the delay gap between preemptive SERPT-R and non-preemptive SEDPT-R under the conditions of Theorem~\ref{thm4}. By this, we can show that the delay gap is upper bounded by the average time for downloading $L$ extra chunks due to non-preemption and $L-d_{\min}$ extra chunks due to low storage redundancy. Note that we only need to evaluate the extra delay caused by non-preemption during the time intervals when all $L$ threads are active. This is because when the number of active threads is less than $L$, all the available chunks of the unfinished requests are under service at the same time, and thus non-preemption causes no additional delay beside the extra delay caused by low storage redundancy. By this, Theorem~\ref{thm4} follows.
\section{Proof of Lemma \ref{lem2}}\label{app3}
We first compare the chunk departure time instants among the class of work-conserving policies.

Consider the departure time of the first chunk $t_1$. Because $a_1=s_1=0$ and all $L$ threads are active for $t\geq0$, we have
\begin{eqnarray}
t_1=\min_{l=1,\ldots,L} X_l
\end{eqnarray}
for non-preemptive SEDPT-NR, where $X_l$ is the chunk downloading time of thread $l$ if it does not switch to serve another chunk before completing the current chunk. Under other work-conserving policies, some thread may switch to serve another chunk.
If the thread has spent $\tau$ seconds on one chunk, the tail probability for completing the current chunk under service is $\mathbb{P}(X>t+\tau|X>\tau)$. On the other hand, the tail probability for switching to serve a new chunk is $\mathbb{P}(X>t)$. Since the chunk downloading time is \emph{i.i.d.} NLU, it is stochastically better to keep downloading the same chunk than switching to serve a new chunk. Therefore, $t_1$ under non-preemptive SEDPT-NR is stochastically smaller than that under any other work-conserving policy.

Next, suppose that $(t_1,t_2,\ldots, t_j)$ under non-preemptive SEDPT-NR are stochastically smaller than those under any other work-conserving policy. Let $R_l$ denote the remaining time for thread $l$ to download the current chunk after $t_j$. Under non-preemptive SEDPT-NR, since all $L$ threads are active at all time $t\geq0$, $t_{j+1}$ is determined as
\begin{eqnarray}\label{eq4}
t_{j+1} = \min_{l=1,\ldots,L}\left[t_j+R_l\right].
\end{eqnarray}
Under other work-conserving policies, some thread may switch to serve a new chunk before completing the current chunk. Similar as above, one can show that $(t_1,t_2,\ldots,t_{j+1})$ under non-preemptive SEDPT-NR are stochastically smaller than those under any other work-conserving policy. By induction, the chunk departure instants $(t_1,t_2,\ldots)$ under non-preemptive SEDPT-NR are stochastically smaller than those under any other work-conserving policy.

Finally, since the downloading times of different chunks are \emph{i.i.d.}, service idling only postpones chunk departure time. Hence, the chunk departure time instants will be larger under non-work-conserving policies. Therefore, $(t_1,t_2,\ldots)$ under non-preemptive SEDPT-NR are stochastically smaller than those under any other online policy.

\section{Proof of Lemma~\ref{thm6}} \label{app4}
We first construct a delay lower bound of $\overline{D}_{\text{opt}}$.
Consider a fixed sample path of the chunk departure instants $(t_1,t_2,\ldots)$. The request departure instants $(c_{1,\pi},c_{2,\pi},\ldots)$ are determined by the correspondence between the requests and the departed chunks. Define $r_i(t)$ as the number of remaining chunks to be downloaded after time $t$ for completing request $i$.
If each departed chunk belongs to an unfinished request $i$ with the smallest $r_i(t)$, the number of unfinished requests is minimized. By this, we obtain a lower bound on the sample-path average delay $\frac{1}{N}\sum_{i=1}^N (c_{i,\pi}-a_i)$. According to Lemma \ref{lem2}, the chunk departure instants $(t_1,t_2,\ldots)$ under non-preemptive SEDPT-NR are stochastically smaller than those under any other policy. By integrating $\frac{1}{N}\sum_{i=1}^N (c_{i,\pi}-a_i)$ over the distribution of $(t_1, t_2, \ldots)$ under non-preemptive SEDPT-NR, a delay lower bound of $\overline{D}_{\text{opt}}$ is obtained. On the other hand, non-preemptive SEDPT-NR provides an upper bound of $\overline{D}_{\text{opt}}$. The remaining task is to evaluate the delay gap between the delay lower bound and non-preemptive SEDPT-NR.

Next, we utilize the proof techniques of Theorem \ref{thm2} to evaluate the delay gap between non-preemptive SEDPT-NR and the above lower bound. For notational simplicity, we use policy $P$ to denote the above constructed policy that achieves a lower bound of $\overline{D}_{\text{opt}}$, and policy $NP$ to denote non-preemptive SEDPT-NR. We will show that \emph{for any time $t$ and any given sample path of chunk departures $(t_1,t_2,\ldots)$, policy $NP$ needs to download $L$ or fewer additional chunks after time $t$, so as to accomplish the same number of requests that are completed under policy $P$ during $(0,t]$}.

\begin{definition}\cite{Smith78}
The system state of policy $P$ is specified by an infinite vector $\vec{\beta}=(\beta_1,\beta_2,\ldots)$ with non-negative, non-increasing components. At any time, the coordinates of $\vec{\beta}$ are interpreted as follows: $\beta_1$ is the maximum number of remaining chunks among all requests, $\beta_2$ is the next greatest number of remaining chunks among all requests, and so on, with duplications being explicitly repeated. Suppose that there are $l$ unfinished requests in the system, then
\begin{eqnarray}
\beta_1\geq\beta_2\geq\ldots\geq\beta_l>0 = \beta_{l+1} =\beta_{l+2}=\ldots.
\end{eqnarray}
\end{definition}

\begin{definition}
The system state of \emph{non-preemptive SEDPT-NR (policy $NP$)} is specified by a pair of vectors $\{\vec{\alpha},\vec{\delta}\}$, where $\vec{\alpha}=(\alpha_1,\alpha_2,\ldots)$ and $\vec{\delta}=(\delta_1,\delta_2,\ldots)$ are two infinite vectors with non-negative components. At any time, the coordinates of $\vec{\alpha}$ and $\vec{\delta}$ are interpreted as follows: $\alpha_i$ is the number of chunks to be downloaded for completing the request associated to the $i$th coordinate, and $\delta_i$ is the number of threads assigned to serve the request associated to the $i$th coordinate such that $\sum_{i=1}^\infty \delta_i\leq L$. Suppose that there are $l$ unfinished requests in the system, then there exists an integer $m$ ($0\leq m\leq l$) such that the coordinates of $\vec{\alpha}$ and $\vec{\delta}$ satisfy
\begin{eqnarray}
&&\!\!\!\!\!\!\!\!\!\!\!\!\!\!\!\!\alpha_1-\delta_1\geq\ldots\geq \alpha_{m}-\delta_{m}>0= \alpha_{m+1}-\delta_{m+1}=\ldots\\
&&\!\!\!\!\!\!\!\!\!\!\!\!\!\!\!\!\alpha_i \left\{\begin{array}{l l} >0, &\text{if}~ i\leq l;\nonumber\\
  =0, &\text{if}~ i\geq l+1,\end{array}\right.\nonumber\\
  &&\!\!\!\!\!\!\!\!\!\!\!\!\!\!\!\!\delta_i \left\{\begin{array}{l l} \geq0, &\text{if}~ i\leq l;\\
  =0, &\text{if}~ i\geq l+1.\end{array}\right.
\end{eqnarray}
\end{definition}

\begin{lemma}\label{lem_SEDPT-NR0}
Let $\{\vec{\alpha}(t),\vec{\delta}(t),t\geq0\}$ be the state process of policy $NP$ and $\{\vec{\beta}(t),t\geq0\}$ be the state process of policy $P$. If $\vec{\alpha}(0)=\vec{\delta}(0) = \vec{\beta}(0) =0$, then for any given sample path of chunk departures $(t_1,t_2,\ldots)$, we have
\begin{eqnarray}
\sum_{i=j}^\infty [\alpha_i(t)- \delta_i(t)]\leq \sum_{i=j}^\infty \beta_i(t)
\end{eqnarray}
for all $t\geq0$ and j = $1,2,\ldots$
\end{lemma}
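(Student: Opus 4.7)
The plan is to mirror the proof of Lemma \ref{lem_non_prmp0} by induction along the common sample path of chunk departures $(t_1, t_2, \ldots)$, which is fixed by hypothesis. The base case is immediate since $\vec{\alpha}(0) = \vec{\delta}(0) = \vec{\beta}(0) = 0$, so both sides of the inequality are zero at $t = 0$. Thereafter the state changes only at two event types: chunk departures (at times $t_j$) and request arrivals (at times $a_i$). So it suffices to prove two one-step preservation lemmas analogous to Lemmas \ref{lem_non_prmp1} and \ref{lem_non_prmp2}.

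For the arrival step, I would re-use the proof of Lemma \ref{lem_non_prmp2} essentially verbatim: upon inserting a new request with $b$ remaining chunks at position $l$ in $(\vec{\alpha}',\vec{\delta}')$ and position $m$ in $\vec{\beta}'$, split into the four cases based on whether $l, m$ are below or at least $j$, and verify each by elementary arithmetic on partial sums. The argument does not depend on whether redundant threads are allowed, because the new request contributes $(\alpha_{\text{new}}, \delta_{\text{new}}) = (b, 0)$ at the moment of arrival under both SEDPT-NR and SEDPT-R, and $\beta_{\text{new}} = b$ on the $P$ side.

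For the chunk-departure step, I would establish the analog of Lemma \ref{lem_non_prmp1}: starting from $\sum_{i\geq j}[\alpha_i - \delta_i] \leq \sum_{i\geq j} \beta_i$, show $\sum_{i\geq j}[\alpha'_i - \delta'_i] \leq \sum_{i\geq j} \beta'_i$. Under policy $P$ the chunk departure is routed (by construction of the lower-bound policy) to the unfinished request with the smallest remaining-chunk count, so $\beta'$ is obtained from $\beta$ by decrementing the smallest positive coordinate (which may zero it out). Under policy $NP$, because no redundant threads are allowed, we always have $\alpha_i - \delta_i \geq 0$, and the departing chunk corresponds to some coordinate $m$ with $\delta_m \geq 1$. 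I would split into two subcases: (i) the departure does not complete a request, so the thread is reassigned by SEDPT-NR to the currently smallest $\alpha - \delta$, causing $\alpha_m - \delta_m$ to drop by one and $\alpha_l - \delta_l$ (for the newly served request) to drop by one; (ii) the departure completes request $m$, whereupon $\delta_m$ threads are freed and reassigned to the remaining requests with the smallest $\alpha - \delta$. In each subcase, the decrement to the left-hand partial sum is at least $1$ whenever it affects indices $\geq j$, matching the decrement on the right, so the invariant is preserved exactly as in Lemma \ref{lem_non_prmp1}.

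The main obstacle will be the bookkeeping in subcase (ii) of the chunk-departure step: when a request completes under $NP$, the $\delta_m$ freed threads are redistributed to potentially several requests, and the sorted order of $(\vec{\alpha}, \vec{\delta})$ must be re-established. I need to argue that these reassignments only decrease $\alpha_i - \delta_i$ on coordinates to which the SEDPT-NR rule sends the freed threads (those with smallest $\alpha - \delta$), and that after the re-sort the partial-sum inequality still closes for every $j$. The key observation, which I would verify by comparing the two policies' responses to the same departure event, is that under both $P$ (which completes the smallest-remaining request) and $NP$ (which completes whichever request gets its last needed chunk), a request completion on the $NP$ side when there is no simultaneous completion on the $P$ side only strengthens the invariant; whereas when $P$ also completes a request, the symmetric shift of both sides cancels. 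Once both preservation steps are verified, induction on the ordered sequence of arrival and departure events yields the claim.
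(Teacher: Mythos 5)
Your outline is essentially the paper's own proof: an induction over arrival and departure events with two one-step preservation lemmas, the arrival step reused verbatim from Lemma~\ref{lem_non_prmp2} and the departure step adapted to the SEDPT-NR reassignment rule (these are precisely the paper's Lemmas~\ref{lem_SEDPT-NR2} and~\ref{lem_SEDPT-NR1}), and the argument goes through. Two corrections to your bookkeeping in the departure step, though. In your subcase (i), the request whose chunk just departed has both $\alpha_m$ and $\delta_m$ decrease by one, so its difference $\alpha_m-\delta_m$ is \emph{unchanged}; the only coordinate whose difference drops is the one that receives the freed thread, i.e., the smallest positive $\alpha_i-\delta_i$. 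That single unit drop is exactly what is needed: if $\sum_{i\geq j}[\alpha_i'-\delta_i']>0$, then $j$ is at most the index of that coordinate, so the left-hand partial sum drops by one, while any single chunk departure lowers $\sum_{i\geq j}\beta_i$ by at most one (you do not even need that policy $P$ decrements its smallest positive coordinate). Second, the ``main obstacle'' you anticipate in subcase (ii) does not arise: under SEDPT-NR at most $\alpha_i$ threads are ever assigned to request $i$, so $\delta_i\leq\alpha_i$ always, a completing request has $\alpha_m=\delta_m=1$, and exactly one thread is freed per departure; the redistribution of several freed threads upon a completion is an SEDPT-R (redundant-thread) phenomenon handled in Lemma~\ref{lem_non_prmp1}, and plays no role here.
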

Lemma \ref{lem_SEDPT-NR0} can be obtained from the following lemmas:

\begin{lemma}\label{lem_SEDPT-NR1}
Suppose that, under policy $NP$, $\{\vec{\alpha}',\vec{\delta}'\}$ is obtained by completing a chunk at one of the $L$ threads in the system whose state is $\{\vec{\alpha},\vec{\delta}\}$. Further, suppose that, under policy $P$, $\vec{\beta}'$ is obtained by completing a chunk at one of the $L$ threads in the system whose state is $\vec{\beta}$.
If
\begin{eqnarray}\label{eq_SEDPT-NR_41}
\sum_{i=j}^\infty [\alpha_i - \delta_i]\leq \sum_{i=j}^\infty \beta_i, ~\forall~j=1,2,\ldots,
\end{eqnarray}
then
\begin{eqnarray}\label{eq_SEDPT-NR_40}
\sum_{i=j}^\infty [\alpha'_i - \delta'_i]\leq \sum_{i=j}^\infty \beta'_i, ~\forall~j=1,2,\ldots
\end{eqnarray}
\end{lemma}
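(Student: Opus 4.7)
The plan is to follow the case-analysis structure of Lemma~\ref{lem_non_prmp1}, adapting it to the non-work-conserving nature of SEDPT-NR. I would first dispose of the trivial case $\sum_{i=j}^\infty[\alpha'_i - \delta'_i] \leq 0$; for the remaining case, the analysis splits according to whether the chunk completion under policy $NP$ triggers a request departure.

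\textbf{Case 1} (chunk completes, no request departure): The key dichotomy under SEDPT-NR is whether the freed thread can be reassigned. When some unfinished request satisfies $\alpha_q > \delta_q$, the freed thread is routed to the request with the smallest positive $\alpha_q - \delta_q$, which sits at position $m$ in the sorted state (the largest index with $\alpha_m - \delta_m > 0$). This decrement leaves all other $\alpha_i - \delta_i$ unchanged, so the argument mirrors Case~1 of Lemma~\ref{lem_non_prmp1}: for $j \leq m$, $\sum_{i=j}^\infty[\alpha'_i - \delta'_i] = \sum_{i=j}^\infty[\alpha_i - \delta_i] - 1 \leq \sum_{i=j}^\infty \beta_i - 1 \leq \sum_{i=j}^\infty \beta'_i$, and for $j > m$ the inequality holds trivially since the left-hand side vanishes. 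When no reassignment is possible, the SEDPT-NR rule forces $\alpha_i = \delta_i$ for every unfinished request, hence $\sum_{i=j}^\infty[\alpha'_i - \delta'_i] = 0$, which contradicts the non-triviality assumption and renders this subcase vacuous.

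\textbf{Case 2} (chunk completes, request departs): Non-redundancy forces $\alpha_{\tilde m} = \delta_{\tilde m} = 1$ for the departing request $i_{\tilde m}$, since under SEDPT-NR one cannot have $\delta_j > \alpha_j$; the single freed thread is then subject to the same reassignment dichotomy as in Case~1. I would reproduce the position-based subcases of Case~2 of Lemma~\ref{lem_non_prmp1} (distinguishing whether the sorted index of the departing request and that of the reassignment target fall above or below $j$), verifying in each subcase that the induced decrement of $\sum_{i=j}^\infty[\alpha'_i - \delta'_i]$ is large enough to preserve the target inequality given $\sum_{i=j}^\infty \beta'_i \geq \sum_{i=j}^\infty \beta_i - 1$. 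The no-reassignment subcase is again vacuous by the same collapse-to-zero argument used in Case~1.

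The main obstacle I expect is the subcase analysis in Case~2: removing the departing request shifts indices in the sorted state vector, and one must verify that the tail sum does not inadvertently grow when an index shift promotes a previously-hidden coordinate above threshold $j$. Tracking the transition in terms of the multiset $\{\alpha_i - \delta_i\}$ rather than position-by-position avoids most of the indexing bookkeeping, since removing the zero-valued entry corresponding to $i_{\tilde m}$ does not alter the multiset of positive values, and the decrement at the reassignment target affects only one positive entry; after these observations, the arithmetic carries over essentially verbatim from Lemma~\ref{lem_non_prmp1}.
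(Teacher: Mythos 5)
Your proposal is correct and follows essentially the same route as the paper's proof: the paper also disposes of the trivial case, observes that under SEDPT-NR the served request's difference $\alpha_q-\delta_q$ is unchanged so the only change to the (nonnegative) difference vector is a unit decrement at the smallest positive entry, and concludes via $\sum_{i=j}^\infty[\alpha'_i-\delta'_i]=\sum_{i=j}^\infty[\alpha_i-\delta_i]-1\leq\sum_{i=j}^\infty\beta_i-1\leq\sum_{i=j}^\infty\beta'_i$. Your departure/non-departure case split and the multiset observation that the departing request contributes a zero entry are exactly the bookkeeping the paper leaves implicit, so the two arguments coincide.
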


\begin{proof}
If $\sum_{i=j}^\infty [\alpha'_i - \delta'_i]=0$, then the inequality \eqref{eq_non_prmp_40} follows naturally.

If $\sum_{i=j}^\infty [\alpha'_i - \delta'_i]>0$, suppose that, under policy $NP$, there are $m$ requests satisfying $\alpha_i - \delta_i>0$ at state $\{\vec{\alpha},\vec{\delta}\}$. After the chunk departure, the thread that just became idle will be assigned to serve a request associated to the smallest positive $\alpha_i - \delta_i$. This tells us that (i) $\alpha'_i - \delta'_i = \alpha_i - \delta_i$ for $i=1,2,\ldots,m-1$; (ii) $\alpha'_m - \delta'_m = \alpha_m - \delta_m-1$; and (iii) $\alpha'_i - \delta'_i = \alpha_i - \delta_i=0$ for $i=m+1,m+2,\ldots$
Since $\sum_{i=j}^\infty [\alpha'_i - \delta'_i]>0$, we have $j\leq m$.
Hence, $\sum_{i=j}^\infty [\alpha'_i - \delta'_i] = \sum_{i=j}^\infty [\alpha_i - \delta_i] -1 \leq \sum_{i=j}^\infty \beta_i -1\leq \sum_{i=j}^\infty \beta'_i$.
\end{proof}

\begin{lemma}\label{lem_SEDPT-NR2}
Suppose that, under policy $NP$, $\{\vec{\alpha}',\vec{\delta}'\}$ is obtained by adding a request with $b$ remaining chunks to the system whose state is $\{\vec{\alpha},\vec{\delta}\}$. Further, suppose that, under policy $P$, $\vec{\beta}'$ is obtained by adding a request with $b$ remaining chunks to the system whose state is $\vec{\beta}$.
If
\begin{eqnarray}
\sum_{i=j}^\infty [\alpha_i - \delta_i]\leq \sum_{i=j}^\infty \beta_i, ~\forall~j=1,2,\ldots,
\end{eqnarray}
then
\begin{eqnarray}
\sum_{i=j}^\infty [\alpha'_i - \delta'_i]\leq \sum_{i=j}^\infty \beta'_i, ~\forall~j=1,2,\ldots
\end{eqnarray}
\end{lemma}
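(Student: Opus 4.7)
The plan is to mirror the four-case argument used in Lemma \ref{lem_non_prmp2}, because the arrival step here is structurally identical: a single request with $b$ remaining chunks is inserted into the sorted state vectors of both policies, and we must show that the tail-sum inequality is preserved for every index $j$. The only difference from Lemma \ref{lem_non_prmp2} is that policy $NP$ is now SEDPT-NR rather than SEDPT-R, but the state representations and the sorting conventions on $\vec{\alpha}-\vec{\delta}$ and $\vec{\beta}$ are the same, and the arrival step does not invoke the thread-assignment rule.

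First I would observe that at the instant of arrival the new request has no thread assigned to it, so in the sorted pair $\{\vec{\alpha}',\vec{\delta}'\}$ the newly inserted coordinate has $\alpha'_{\ell}=b$ and $\delta'_{\ell}=0$, contributing exactly $b$ to the corresponding difference $\alpha'_{\ell}-\delta'_{\ell}$. Let $\ell$ denote the position of this new coordinate in the sorted vector $\vec{\alpha}'-\vec{\delta}'$ under policy $NP$, and let $m$ denote its position in the sorted vector $\vec{\beta}'$ under policy $P$. All other coordinates are merely shifted by one index in their respective sorted orders.

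Next I would split into four cases according to whether the inserted coordinate lies inside or outside each tail sum $\sum_{i\geq j}$. When $\ell<j$ and $m<j$, neither tail contains the new arrival and both tails are shifted by one, so the claim at $j$ reduces to the hypothesis evaluated at $j-1$. When $\ell<j$ and $m\geq j$, the $NP$ tail only shifts while the $P$ tail gains $b$; the bound $\sum_{i\geq j-1}[\alpha_i-\delta_i]\leq b+\sum_{i\geq j}[\alpha_i-\delta_i]$ holds because the coordinate at old position $j-1$ is at most $b$ by the non-increasing insertion rule, and combining with the hypothesis at $j$ closes the chain. The case $\ell\geq j,\, m<j$ is handled symmetrically using $b+\sum_{i\geq j}[\alpha_i-\delta_i]\leq\sum_{i\geq j-1}[\alpha_i-\delta_i]$. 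Finally, when $\ell\geq j$ and $m\geq j$, both tails pick up $b$, so cancelling it on both sides recovers the hypothesis at $j$.

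I do not expect any serious obstacle, since the arrival step is entirely combinatorial and the thread-assignment rule of SEDPT-NR does not enter (it is used only in the chunk-departure step handled by Lemma \ref{lem_SEDPT-NR1}). The only point requiring minor care is the position-based bound $\alpha_{j-1}-\delta_{j-1}\leq b$ used in the two unbalanced cases, which is an immediate consequence of sorted insertion. Combining the four cases yields the desired inequality for every $j$, completing the proof.
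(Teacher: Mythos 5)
Your proposal is correct and takes essentially the same approach as the paper: the paper proves this lemma by noting it is identical to the four-case sorted-insertion argument of Lemma~\ref{lem_non_prmp2}, which is exactly the case analysis (with the position-based bounds relating $b$ to $\alpha_{j-1}-\delta_{j-1}$) that you reproduce.
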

The proof of Lemma \ref{lem_SEDPT-NR2} is the same with that of Lemma \ref{lem_non_prmp2}. Now, we are ready to prove Lemma~\ref{thm6}.

\begin{proof}[Proof of Lemma~\ref{thm6}]
As explained above, we only need to evaluate the delay gap between policy $NP$ and policy $P$. Let the evolution of the system state under some queueing discipline be on a space $(\Omega,\mathcal{F},P)$. We assume that the request arrival process $\{a_i,k_i,n_i\}_{i=1}^N$ is fixed for all $\omega\in\Omega$. Let $\{\vec{\alpha}(t), \vec{\delta}(t),t\geq0\}$ be the state process of policy $NP$ and $\{\vec{\beta}(t),t\geq0\}$ be the state process of policy $P$. Then, we have $\vec{\alpha}(0)=\vec{\beta}(0)=\vec{\delta}(0)=0$.

Suppose that under policy $NP$, there are $y$ request arrivals and $z$ request departures during $(0, t]$. Then, there are $y-z$ requests in the system at time $t$ such that $\sum_{i=y-z+1}^\infty \beta_i(t) = 0$. According to Lemma \ref{lem_SEDPT-NR0}, we have $\sum_{i=y-z+1}^\infty \alpha_i(t) \leq \sum_{i=y-z+1}^\infty \delta_i(t) \leq L$. Hence, under policy $NP$, the system still needs to download $L$ or fewer chunks associated to $\alpha_{y-z+1}(t), \alpha_{y-z+2}(t),\ldots$ after time $t$, in order to complete $z$ requests as in policy $P$. Further, $\sum_{i=y-z+1}^\infty \delta_i(t) \leq L$ tells us that the services of these chunks have already started by time $t$. Therefore, the average remaining downloading time of these chunks after time $t$ is no more than
\begin{eqnarray}
&&\overline{D}_{\text{extra}}\leq \mathbb{E}\left\{ \max_{l=1,\ldots, L} X_l\right\}.
\end{eqnarray}
Therefore, the delay gap between policy $NP$ and policy $P$ is no more than $\mathbb{E}\left\{ \max_{l=1,\ldots, L} X_l\right\}$, and Lemma \ref{thm6} is proven.
\end{proof}
\section{Proof of Theorem \ref{thm7}}\label{app5}
We will prove this theorem in three steps: in \emph{Step 1}, we will construct a virtual policy which provides delay lower bound of $\overline{D}_{\text{opt}}$; in \emph{Step 2}, we will compare the chunk departure sample paths of the constructed virtual policy and non-preemptive SEDPT-NR; in \emph{Step 3}, we will evaluate the delay gap between the delay lower bound and the average delay of non-preemptive SEDPT-NR. The details are provided in the sequel.

\emph{Step 1:}
We first construct a virtual policy which provides delay lower bound of $\overline{D}_{\text{opt}}$. Define $r(t)$ as the total number of remaining chunks to be downloaded for completing all the unfinished requests at time $t$.
We construct a virtual policy $P$ as follows: If $r(t)\geq L$ at time $t$, each thread is assigned to serve one chunk and will not switch to serve another chunk until it has completed the current chunk. If $0<r(t)< L$, suppose that there are $L-r(t)$ ``virtual'' chunks, such that each thread is assigned to serve one chunk and will not switch to serve another chunk until it has completed the current chunk. If $r(t)= 0$, all $L$ threads are idle. Further, under the virtual policy $P$, each departed chunk belongs to an unfinished request with the fewest remaining chunks. Similar to Lemma \ref{lem2}, we can obtain the following result:

\begin{lemma}\label{lemG1} If the chunk downloading time is {i.i.d.} NLU, then for given request parameters $N$ and $(a_i,k_i,n_i)_{i=1}^N$, the constructed chunk departure instants $(t_1,t_2,\ldots)$ of policy $P$ are stochastically smaller than those under any online policy.
\end{lemma}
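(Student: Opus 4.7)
The plan is to parallel and extend the induction argument used in the proof of Lemma \ref{lem2}, adapting it from the saturated, work-conserving regime to arbitrary system loads by exploiting the virtual chunks that policy $P$ introduces whenever $0 < r(t) < L$. The virtual chunks are precisely the device that allows the ``start fresh, never switch'' structure of Lemma \ref{lem2} to survive intact even when the real workload is light.

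I would proceed by induction on the chunk departure index $j$, establishing $(t_1^P, \ldots, t_j^P) \leq_{\text{st}} (t_1^Q, \ldots, t_j^Q)$ for every online policy $Q$. For the base case $j = 1$, I would observe that under $P$ every non-idle thread works on a fresh chunk and never switches, so $t_1^P$ is the minimum of i.i.d.\ fresh service times over the active threads; any competing $Q$ can only enlarge this minimum in stochastic order, because of possible thread idleness and because the NLU property \eqref{eq_light} says that the tail of a fresh chunk's service time dominates the tail of the residual service time of any partially-served chunk.

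The inductive step would rest on exactly the same two comparisons, re-applied at each chunk-departure epoch: (i) leaving a thread idle on a subinterval only postpones its next completion, since the downloading times are i.i.d.; and (ii) abandoning a chunk on which $\tau > 0$ seconds have already been served, in favour of a fresh one, enlarges the next per-thread completion time stochastically, by a direct application of \eqref{eq_light} in the form $\mathbb{P}(X > t) \geq \mathbb{P}(X > t + \tau \mid X > \tau)$. Policy $P$ never performs (i) or (ii), whereas $Q$ may perform either. Propagating these per-thread comparisons through the minimum operator (as in the recursion analogous to \eqref{eq4}) would give $t_{j+1}^P \leq_{\text{st}} t_{j+1}^Q$, and iterating would deliver the joint stochastic ordering claimed.

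The hard part I anticipate is making (ii) into a rigorous \emph{conditional} stochastic comparison: the accumulated service time $\tau$ is a random quantity adapted to $Q$'s own history, and $Q$'s switching times are stopping times of the coupled service-time filtration. To handle this cleanly, I would introduce a coupling of per-thread fresh service-time sequences that is shared by $P$ and $Q$, appeal to the strong Markov property of the residual-lifetime process, and lift the marginal NLU inequality to a conditional form at arbitrary stopping times, in the spirit of standard multivariate stochastic-order characterizations (see \cite{StochasticOrderBook}). Once the per-epoch conditional comparison is established on the coupled space, the induction closes and Lemma \ref{lemG1} follows.
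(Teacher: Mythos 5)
Your proposal follows essentially the same route as the paper's proof: induction over the chunk-departure epochs, with the base case $t_1=\min_{l=1,\ldots,L}X_l$ under the never-switching, never-idling (virtual-chunk) policy $P$, the NLU inequality \eqref{eq_light} showing that abandoning a partially served chunk can only stochastically postpone the next departure, the recursion $t_{j+1}=\min_l\left[\max\{s_{j+1},t_j\}+R_l\right]$ for the inductive step, and the final observation that idling under non-work-conserving policies only delays departures further. The paper leaves the inductive comparison at exactly the informal level you flag (``one can show that''), so your coupling and stopping-time considerations are a refinement of, not a departure from, its argument.
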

\begin{proof}
We first compare the chunk departure times among the class of work-conserving policies.

Let us consider the departure time of the first chunk $t_1$. Because $a_1=s_1=0$ and all $L$ threads are active for $t\geq0$, we have
\begin{eqnarray}
t_1=\min_{l=1,\ldots,L} X_l
\end{eqnarray}
for the constructed chunk departures, where $X_l$ is the chunk downloading time of thread $l$ if it does not switch to serve another chunk before completing the current chunk. Under other work-conserving policies, some thread may switch to serve another chunk. We have shown that, if the chunk downloading time is \emph{i.i.d.} NLU, it is stochastically better to keep downloading the same chunk than switching to serve a new chunk. Therefore, $t_1$ under policy $P$ is stochastically smaller than that under any work-conserving policy.

Next, suppose that the constructed chunk departure instants $(t_1,t_2,\ldots, t_j)$ of policy $P$ are stochastically smaller than those under any work-conserving policy. Let $R_l$ denote the remaining downloading time of thread $l$ for serving the current chunk after time $\max\{s_{j+1},t_j\}$. Under policy $P$, all $L$ threads are active after time $\max\{s_{j+1},t_j\}$. Hence, $t_{j+1}$ is determined as
\begin{eqnarray}\label{eq4}
t_{j+1} = \min_{l=1,\ldots,L}\left[\max\{s_{j+1},t_j\}+R_l\right].
\end{eqnarray}
Under other work-conserving policies, some thread may switch to serve a new chunk before completing the current chunk. Similar with the above discussions, one can show that the chunk departure instants $(t_1,t_2,\ldots,t_{j+1})$  of policy $P$ are stochastically smaller than those under any work-conserving policy. By induction, the constructed chunk departure instants $(t_1,t_2,\ldots)$ of policy $P$ are stochastically smaller than those under any work-conserving policy.

Finally, since the downloading times of different chunks are \emph{i.i.d.}, service idling only postpones chunk departure time. Hence, the chunk departure times will be larger under non-work-conserving policies. Therefore, the constructed chunk departure instants $(t_1,t_2,\ldots)$  of policy $P$ are stochastically smaller than those under any online policy.
\end{proof}

Under policy $P$, each departed chunk belongs to an unfinished request with the fewest remaining chunks, such that the number of unfinished requests is minimized. According to Lemma \ref{lemG1}, the constructed chunk departure instants $(t_1,t_2,\ldots)$  of policy $P$ are stochastically smaller than those under any online policy. By taking the expectation over the distribution of $(t_1, t_2, \ldots)$, one can show that the virtual policy $P$ provides a delay lower bound of $\overline{D}_{\text{opt}}$. On the other hand, non-preemptive SEDPT-NR provides an upper bound of $\overline{D}_{\text{opt}}$. The remaining task is to evaluate the delay gap between policy $P$ and non-preemptive SEDPT-NR.

\emph{Step 2:} We now study the chunk departure sample paths of policy $P$ and non-preemptive SEDPT-NR. For notational simplicity, we use policy $NP$ to denote non-preemptive SEDPT-NR. Similar to the proof of Lemma \ref{thm6}, we define the system states of policy $P$ and policy $NP$. Let $\{\vec{\alpha}(t),\vec{\delta}(t),t\geq0\}$ be the state process of policy $NP$ and $\{\vec{\beta}(t),t\geq0\}$ be the state process of policy $P$. Suppose that $\vec{\alpha}(0) = \vec{\delta}(0)=\vec{\beta}(0)=0$.

\begin{lemma}\label{lem_Yin}
If $\vec{\alpha}(0) = \vec{\delta}(0)=\vec{\beta}(0)=0$, then for any chunk departure sample path of policy $NP$, there exists a chunk departure sample path of policy $P$, such that for any time $t$ the number of chunks downloaded during $(0,t]$ under the sample path of policy $P$ is no more than $L-1$ plus the number of chunks downloaded during $(0,t]$ under the sample path of policy $NP$, i.e.,
\begin{eqnarray}\label{eq_thm7}
\sum_{i=1}^\infty\alpha_i(t) \leq \sum_{i=1}^\infty\beta_i(t) + L-1,~\forall~t\geq0.
\end{eqnarray}
\end{lemma}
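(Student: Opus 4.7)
The plan is to prove Lemma~\ref{lem_Yin} by an inductive construction of the sample path of policy $P$ in event time, maintaining the invariant $\sum_{i=1}^\infty \alpha_i(t) \leq \sum_{i=1}^\infty \beta_i(t) + L - 1$ at every $t \geq 0$. The inductive step dispatches on three event types: request arrivals (common to both policies), chunk completions under policy $NP$, and chunk completions under policy $P$. The base case at $t=0$ is immediate, since both state sums equal zero and the inequality reduces to $0 \leq L-1$.

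A request arrival adds the same number of chunks to $\sum_i \alpha_i$ and to $\sum_i \beta_i$, so neither side of the invariant changes relative to the other. An $NP$ completion decreases $\sum_i \alpha_i$ by one while leaving $\sum_i \beta_i$ fixed, so the invariant only becomes slacker. The crux is a $P$ completion, which may decrease $\sum_i \beta_i$ by one if the completed chunk is actual rather than virtual. I plan to exploit the construction freedom afforded by the ``there exists a sample path of $P$'' clause: at each $P$ completion, decide whether the completed chunk is actual or virtual so as to preserve the invariant. If the slack is at least two, i.e.\ $\sum_i \alpha_i(t) \leq \sum_i \beta_i(t) + L - 2$, either label is admissible. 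In the tight case $\sum_i \alpha_i(t) = \sum_i \beta_i(t) + L - 1$, a virtual completion is required, which I handle in two subcases. Subcase (a): $\sum_i \beta_i(t) < L$, so $P$ has at least one thread on a virtual chunk; since all chunks served under $P$ have iid downloading times, I can construct the sample path so that the first upcoming completion comes from a virtual-chunk thread. Subcase (b): $\sum_i \beta_i(t) \geq L$, which combined with tightness gives $\sum_i \alpha_i(t) \geq 2L - 1 \geq L$, so $NP$ likewise has all $L$ threads busy on actual chunks; in this regime I couple the two policies' thread-wise service-time sequences so that the $P$ completion coincides with an $NP$ completion, making $\sum_i \alpha_i$ and $\sum_i \beta_i$ decrease together.

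The main obstacle is ensuring that the constructed sample path of $P$ is consistent with the iid assumption on chunk downloading times. I plan to adapt the sample-path coupling device used in the proofs of Lemmas~\ref{lemB1} and~\ref{lem_non_prmp1}: attach to each thread an iid sequence of chunk service times, have both policies draw from the same sequence whenever the thread is busy on an actual chunk under both policies (this realizes the Subcase~(b) coupling), and let $P$ consume fresh iid draws for its virtual chunks and for the chunks it serves while the corresponding thread of $NP$ is idle. A secondary technical point is verifying that the ``first completion is virtual'' labeling in Subcase~(a) is compatible with the distribution of the minimum of $L$ iid residual service times; this reduces to the exchangeability of the in-service chunks' identities when constructing a particular realization of the joint process. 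Under this joint construction, the invariant holds path-by-path at every $t$, which establishes the lemma.
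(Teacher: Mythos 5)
Your base case and the arrival and $NP$-completion events are fine; the entire difficulty of the lemma sits in your tight $P$-completion case, and there the argument has a genuine gap. The existential clause ``there exists a sample path of $P$'' does not give you the freedom you use. Take deterministic downloading times $X\equiv 1/\mu$ (or uniform on $[1,2]$): these are NLU and are exactly the kind of ``constant plus short tail'' behavior the theorem is meant to cover, yet for them the residual time of a chunk is determined by its age, so the completion order among in-service chunks is forced and, given the arrival process, policy $P$ has essentially a \emph{unique} departure path. In subcase (a) you therefore cannot ``construct the sample path so that the first upcoming completion comes from a virtual-chunk thread'' --- if the virtual chunk is younger than a real chunk, the real chunk finishes first in every feasible realization; exchangeability of in-service chunks fails precisely because their ages differ. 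Subcase (b) is also not realized by your coupling: sharing per-thread service-time sequences only equalizes durations, not start times, and the start times of the in-service chunks under $NP$ and $P$ generically differ (e.g., $NP$ idles threads during its low-load periods while $P$ is grinding through virtual chunks, and vice versa when $\sum_i\beta_i=0$), so the $P$ completion does not coincide with an $NP$ completion. You also cannot repair this by stretching $P$'s service times until the next $NP$ event, since $NP$'s path is fixed by hypothesis and bounded-support NLU laws leave no room to stretch. So your induction reduces, at the only nontrivial step, to asserting the conclusion; the lemma is still true in these cases, but your construction does not prove it.

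The paper avoids any per-event choice by a different decomposition: partition $NP$'s busy time into intervals where its remaining-chunk count satisfies $r(t)\leq L-1$ and intervals where $r(t)\geq L$, and introduce an intermediate virtual policy $Q$ that is declared empty throughout each low-load interval and restarts fresh (following $P$'s principle, with virtual chunks) at the start $\nu_i$ of each high-load interval. The $L-1$ slack is generated once, in the low-load intervals, where $\sum_i\alpha_i(t)\leq L-1$ while $\sum_i\gamma_i(t)=0$; during high-load intervals $NP$ keeps all $L$ threads busy with weakly earlier chunk start times than $Q$'s fresh threads, so $Q$ never downloads more than $NP$ there and $\sum_i\alpha_i(t)\leq\sum_i\gamma_i(t)+L-1$ persists; finally $\sum_i\gamma_i(t)\leq\sum_i\beta_i(t)$ because $P$ follows the same principle as $Q$ but carries extra backlog at each $\nu_i$. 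If you want to salvage your event-driven invariant, you would need to import exactly this structural fact --- that at any tight instant with $\sum_i\beta_i\geq L$ every in-service chunk of $P$ can be matched to an $NP$ chunk with an earlier or equal start time --- rather than appeal to a choice of completion order or a duration coupling that the distributions need not permit.
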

\begin{proof}
{We partition the system service duration of policy $NP$ into a sequence of time intervals $(\tau_1, \nu_1]$, $(\nu_1, \tau_2]$, $(\tau_2, \nu_2]$, $(\nu_2, \tau_3]$, $\ldots$, such that $r(t)\leq L-1$ for $t\in (\tau_i, \nu_i]$ and $r(t)\geq L$ for $t\in (\nu_i, \tau_{i+1}]$ for $i=1,2,\ldots$ Therefore, under policy $NP$, at most $L-1$ threads are active during the intervals $(\tau_i, \nu_i]$ and all $L$  threads are active during the intervals $(\nu_i, \tau_{i+1}]$. We construct a ``virtual'' policy $Q$ based on policy $NP$: After time $\tau_i$, there are at most $L-1$ remaining chunks to be downloaded. Under policy $Q$, these remaining chunks are completed immediately after time $\tau_i$ such that the $L$ threads are idle during $(\tau_i, \nu_i]$. During $(\nu_i, \tau_{i+1}]$, policy $Q$ is defined according to the same principle of policy $P$: ``virtual chunks'' are used when there are less than $L$ remaining chunks such that all $L$ threads are active under policy $Q$ until there is no remaining chunk to download.
The system state of policy $Q$ is specified by an infinite vector $\vec{\gamma}=(\gamma_1,\gamma_2,\ldots)$ with non-negative, non-increasing components. At any time, the coordinates of $\vec{\gamma}$ are interpreted as follows: $\gamma_1$ is the maximum number of remaining chunks among all requests, $\gamma_2$ is the next greatest number of remaining chunks among all requests, and so on, with duplications being explicitly repeated.

Next, we prove that
\begin{eqnarray}\label{eq_Yin}
\sum_{i=1}^\infty\alpha_i(t) \leq \sum_{i=1}^\infty\gamma_i(t) + L-1
\end{eqnarray}
for all $t\geq0$.
During $(\tau_i, \nu_i]$, we have $\sum_{i=1}^\infty\alpha_i(t)=r(t)\leq L-1$ and $\sum_{i=1}^\infty\gamma_i(t)=0$. Hence, \eqref{eq_Yin} holds during $(\tau_i, \nu_i]$. At time $\nu_i$, policy $NP$ has at most $L-1$ extra chunks, compared to policy $Q$. Further, the  $L$ threads of policy $NP$ start downloading earlier than time $\nu_i$, while the $L$ threads of policy $Q$ start downloading exactly at time $\nu_i$. Hence, Therefore, \eqref{eq_Yin} must hold during $(\nu_i, \tau_{i+1}]$. By induction, \eqref{eq_Yin} holds for all $t\geq0$.

Further, we show that there exists a chunk departure sample path of policy $P$ such that
\begin{eqnarray}\label{eq_Yin1}
\sum_{i=1}^\infty\gamma_i(t) \leq \sum_{i=1}^\infty\beta_i(t), ~\forall~t\geq0.
\end{eqnarray}
During $(\tau_i, \nu_i]$, policy $Q$ satisfies $\sum_{i=1}^\infty\gamma_i(t) =0$ and \eqref{eq_Yin1} follows. During $(\nu_i, \tau_{i+1}]$, policy $Q$ satisfies the same principle as policy $P$, except for their different initial states at time $\nu_i$. In particular, policy $Q$ has no chunk to download before time $\nu_i$ and policy $P$ may have some chunks not completed yet before time $\nu_i$. Therefore, policy $P$ needs to complete these remaining chunks to have the same state with policy $Q$. Since policy $P$ and policy $Q$ satisfy the same principle, there must exist a chunk departure sample path of policy $P$ such that \eqref{eq_Yin1} holds during $(\nu_i, \tau_{i+1}]$. By induction, \eqref{eq_Yin1} holds for all $t\geq0$. Combining \eqref{eq_Yin} and \eqref{eq_Yin1}, Lemma \ref{lem_Yin} follows.}
\end{proof}

\emph{Step 3:} We will show that \emph{for any time $t$ and the chunk departure sample paths constructed above, policy $NP$ needs to download $2L-1$ or fewer additional chunks after time $t$, so as to accomplish the same number of requests that are completed under policy $P$ during $(0,t]$}. Towards this goal, we need to prove the following lemma:

\begin{lemma}\label{lemG4}
Let $\{\vec{\alpha}(t),\vec{\delta}(t),t\geq0\}$ be the state process of policy $NP$ and $\{\vec{\beta}(t),t\geq0\}$ be the state process of policy $P$. If $\vec{\alpha}(0) = \vec{\delta}(0)=\vec{\beta}(0)=0$, then under the chunk departure sample paths of policy $NP$ and policy $P$ mentioned above, we have
\begin{eqnarray}
\sum_{i=1}^\infty \beta_i(t)+ \sum_{i=j}^\infty [\alpha_i(t)- \delta_i(t)]\leq \sum_{i=j}^\infty \beta_i(t) + \sum_{i=1}^\infty \alpha_i(t)
\end{eqnarray}
for all $t\geq0$ and j = $1,2,\ldots$
\end{lemma}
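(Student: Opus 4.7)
I would prove Lemma \ref{lemG4} by induction on the sequence of events (request arrivals, chunk departures in $NP$, and chunk departures in $P$) along the coupled sample paths constructed in Lemma \ref{lem_Yin}. It is convenient to first rewrite the claim in the equivalent form
\begin{eqnarray*}
\sum_{i=1}^{j-1} \beta_i(t) \leq \sum_{i=1}^{j-1} \alpha_i(t) + \sum_{i=j}^\infty \delta_i(t),
\end{eqnarray*}
which admits the following interpretation: the aggregate remaining work in the top $j-1$ classes of $P$ is bounded by the same aggregate in $NP$ plus the number of $NP$ threads currently assigned to lower-priority classes (indices $\geq j$). At $t=0$ both sides vanish since $\vec{\alpha}(0)=\vec{\beta}(0)=\vec{\delta}(0)=0$, giving the base case.

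For the inductive step I would handle three event types. At a request arrival, a new request with $b$ remaining chunks is inserted into $\vec{\alpha}$ at some position $l$ and into $\vec{\beta}$ at some position $m$; a four-case split on whether $l,m<j$ or $l,m\geq j$ (in the spirit of Lemma \ref{lem_SEDPT-NR2}) shows the inequality is preserved, because the same $b$ is added on both sides modulo the reordering. At a chunk departure in $NP$, the completed chunk belongs to the request with the smallest positive $\alpha_i-\delta_i$, and the freed thread is reassigned under SEDPT-NR; the induced changes in $\sum_{i=1}^{j-1}\alpha_i$ and $\sum_{i=j}^\infty\delta_i$ can be tracked position by position, analogous to Lemma \ref{lem_SEDPT-NR1}, so that the right-hand side drops by at most the amount by which the left-hand side drops. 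At a chunk departure in $P$, the departed chunk comes from the unfinished request with the fewest remaining chunks (by construction of the virtual policy), so $\sum_{i=1}^{j-1}\beta_i$ decreases by at most one and hence the left-hand side does not increase.

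The main obstacle will be the departure analysis for $P$, because the two policies use different sort keys ($\beta_i$ for $P$ versus $\alpha_i-\delta_i$ for $NP$) and their sample paths are not synchronized event by event. To close this gap I would exploit the intermediate virtual policy $Q$ from the proof of Lemma \ref{lem_Yin}: during $NP$-busy intervals, $Q$ is synchronized with $NP$ and the chunk departures of $P$ are paired with those of $Q$ (and hence of $NP$); during $NP$-nonbusy intervals, $Q$ is idle and $P$ only needs to drain its own small residual queue. By strengthening the induction hypothesis to carry this pairing explicitly, each isolated $P$-departure can be amortized against a previously processed $NP$-departure, and the invariant is maintained throughout. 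Once Lemma \ref{lemG4} is established, Theorem \ref{thm7} follows by taking $j$ such that $\sum_{i=j}^\infty \beta_i(t) = 0$ (so that only $NP$'s residual in-service chunks contribute), and then bounding the expected residual downloading time in $NP$ by $\mathbb{E}\{\max_{l=1,\ldots,L} X_l\} + \mathbb{E}\{\max_{l=1,\ldots,L-1} X_l\}$ via the NLU properties, as in the proof of Lemma \ref{thm6}.
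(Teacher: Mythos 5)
Your event-by-event skeleton (arrival step, $NP$-departure step, $P$-departure step) and the rewriting of the invariant as $\sum_{i=1}^{j-1}\beta_i(t)\le\sum_{i=1}^{j-1}\alpha_i(t)+\sum_{i=j}^\infty\delta_i(t)$ are fine in outline, but your $NP$-departure step asserts exactly the wrong thing in the one case that is actually delicate. At an $NP$ chunk departure the left-hand side $\sum_{i=1}^{j-1}\beta_i$ does not change, so what you must show is that the right-hand side, which equals $\sum_{i=1}^\infty\alpha_i-\sum_{i=j}^\infty[\alpha_i-\delta_i]$, does not decrease. This is true when $\sum_{i=j}^\infty[\alpha_i-\delta_i]>0$, because SEDPT-NR reassigns the freed thread to the request with the smallest positive differential, so the tail differential and $\sum_i\alpha_i$ drop together; but when $\sum_{i=j}^\infty[\alpha_i-\delta_i]=0$ (in particular whenever $j$ exceeds the number of unfinished requests under $NP$), the right-hand side drops by one while the left-hand side is unchanged, so your claim that ``the right-hand side drops by at most the amount by which the left-hand side drops'' fails. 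In that regime the inequality to be maintained degenerates to $\sum_{i=1}^{j-1}\beta_i\le\sum_{i=1}^\infty\alpha_i$ (for large $j$, to $\sum_i\beta_i\le\sum_i\alpha_i$), which is not an event-local fact: it is a property of the particular coupled sample paths of $P$ and $NP$ under which the lemma is asserted, and it is precisely where the statement would fail for an arbitrary coupling in which $NP$'s downloads race ahead of $P$'s. The paper's proof is structured around this point: Lemma \ref{lem_G1} proves an interval version in which arbitrary numbers $b$ and $d$ of $NP$- and $P$-departures occur, and the positive-tail case closes by pure algebra (using that the tail of $\vec{\beta}$ drops by at most $d$ and the tail differential by exactly $b$) with no event synchronization at all, the zero-tail case being dispatched separately; your proposal never confronts that case.

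Relatedly, the ``main obstacle'' you identify---different sort keys and unsynchronized departures---is not an obstacle, and the machinery you propose to overcome it is not available. A $P$-departure can only decrease $\sum_{i=1}^{j-1}\beta_i$, so it preserves the invariant trivially; no pairing or amortization of $P$-departures against $NP$-departures is needed. Nor does the construction in Lemma \ref{lem_Yin} provide such a pairing: it yields only the one-sided total-work bounds \eqref{eq_Yin} and \eqref{eq_Yin1}, culminating in \eqref{eq_thm7}, i.e.\ $\sum_i\alpha_i(t)\le\sum_i\beta_i(t)+L-1$, and this inequality points in the unhelpful direction for the zero-tail case above, which needs control of $\sum_i\beta_i$ in terms of $\sum_i\alpha_i$. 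So the ``strengthened induction hypothesis carrying the pairing explicitly'' would have to be invented from scratch, and it is exactly where the difficulty you deferred lives. To repair the argument, either adopt the paper's interval formulation (Lemmas \ref{lem_G1} and \ref{lem_G2}, with the arrival step as in Lemma \ref{lem_non_prmp2}), or give an explicit argument for the case $\sum_{i=j}^\infty[\alpha_i-\delta_i]=0$ that invokes the specific sample-path coupling under which Lemma \ref{lemG4} is stated.
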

Lemma \ref{lemG4} can be easily obtained from the following two lemmas:

\begin{lemma}\label{lem_G1}
Suppose that, under policy $NP$, the system state at time $t$ is $\{\vec{\alpha},\vec{\delta}\}$ and at time $t+\Delta t$ is $\{\vec{\alpha}',\vec{\delta}'\}$. Further, suppose that, under policy $P$, the system state at time $t$ is $\vec{\beta}$ and at time $t+\Delta t$ is $\vec{\beta}'$. If (i) no arrivals occur during the interval $(t, t + \Delta t]$ and (ii)
\begin{eqnarray}\label{eq_G_41}
\sum_{i=1}^\infty \beta_i + \sum_{i=j}^\infty [\alpha_i - \delta_i]\leq \sum_{i=j}^\infty \beta_i+ \sum_{i=1}^\infty \alpha_i, ~\forall~j=1,2,\ldots,\!\!
\end{eqnarray}
then
\begin{eqnarray}\label{eq_G_40}
\sum_{i=1}^\infty \beta'_i+\sum_{i=j}^\infty [\alpha'_i - \delta'_i]\leq \sum_{i=j}^\infty \beta'_i+ \sum_{i=1}^\infty \alpha'_i, ~\forall~j=1,2,\ldots
\end{eqnarray}
\end{lemma}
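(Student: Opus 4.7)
The plan is to prove preservation of \eqref{eq_G_41} through a discrete-event analysis of $(t, t+\Delta t]$. Choose $\Delta t$ small enough that at most one chunk completion occurs in each of policies $NP$ and $P$ during this interval. Since no arrivals occur, the only state changes come from these at most two completion events. First I would rewrite \eqref{eq_G_41} in the equivalent form
\[
\sum_{i=1}^{j-1} \beta_i \leq \sum_{i=1}^{j-1} \alpha_i + \sum_{i=j}^{\infty} \delta_i, \quad \forall\, j \geq 2,
\]
obtained by subtracting $\sum_{i=j}^\infty\beta_i + \sum_{i=j}^\infty[\alpha_i - \delta_i]$ from both sides of \eqref{eq_G_41}. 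This form is easier to track: the left side is the total remaining work in the top $j-1$ requests of $P$, while the right side combines the top $j-1$ work of $NP$ with the threads assigned to its lower-ranked requests. The case $j=1$ is trivial, since then the inequality reduces to $\sum_i \delta_i \geq 0$.

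For the event ``chunk completes in $P$ only,'' since $P$ obeys a virtual SRPT rule and completes the chunk of the request with the fewest remaining chunks, only $\beta_{l_P}$ decreases by $1$, where $l_P$ is the last positive coordinate of $\vec{\beta}$. For any $j$ with $j-1 \geq l_P$ the left-hand side decreases by $1$; for $j-1 < l_P$ it is unchanged. In either case the right-hand side is untouched, so the inequality is preserved.

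For the event ``chunk completes in $NP$ only,'' the completed chunk belongs to some request at position $m$ in the $(\alpha-\delta)$-sorted order. Immediately after completion, $\alpha_m \to \alpha_m - 1$ and $\delta_m \to \delta_m - 1$, so $\alpha_m - \delta_m$ is unchanged and no reordering is triggered at this step. The freed thread is then reassigned under SEDPT-NR to the request $i^*$ with the smallest positive $\alpha_{i^*} - \delta_{i^*}$ (if one exists), increasing $\delta_{i^*}$ by $1$ and decreasing $\alpha_{i^*} - \delta_{i^*}$ by $1$; this may cause $i^*$ to swap with neighbors in the sorted order. I would split into sub-cases according to whether $m$ and $i^*$ lie in $\{1,\ldots,j-1\}$ or $\{j,j+1,\ldots\}$, and whether the completion caused a request departure (i.e., $\alpha_m = 1$ before the event). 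In each sub-case, a direct accounting shows that the two sides change by compatible amounts, so the inequality is preserved. The event in which both policies complete a chunk is then handled by combining the two analyses.

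The main obstacle is the bookkeeping associated with the reindexing in $NP$: when the reassignment changes $\alpha_{i^*} - \delta_{i^*}$, the sorted order can shift, so the identity of the ``top $j-1$'' coordinates of $\vec{\alpha}$ may change. A majorization-style argument parallel to Lemmas \ref{lemB2} and \ref{lem_non_prmp2}, but carrying both $\vec{\alpha}$ and $\vec{\delta}$ jointly, is needed to ensure the rewritten inequality holds simultaneously for every $j$ after the swap. The sub-case in which the completion causes a request departure is the most delicate, since then the position structure of $\{\vec{\alpha},\vec{\delta}\}$ genuinely collapses by one slot, and the resulting vacancy must be balanced against the corresponding decrement in $\sum_i \alpha_i$ and against possible unchanged behaviour on the $\vec{\beta}$ side.
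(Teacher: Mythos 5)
Your rewriting of \eqref{eq_G_41} as $\sum_{i=1}^{j-1}\beta_i \le \sum_{i=1}^{j-1}\alpha_i + \sum_{i=j}^{\infty}\delta_i$ is correct, and your treatment of the event ``chunk completes in $P$ only'' is fine. The gap is in the event ``chunk completes in $NP$ only'': the claim that in every sub-case the two sides change by compatible amounts is false. Write the right-hand side as $\sum_{i=1}^{j-1}(\alpha_i-\delta_i)+\sum_{i=1}^{\infty}\delta_i$. The completion leaves $\alpha_m-\delta_m$ unchanged, and then either the freed thread is reassigned to the request with the smallest positive $\alpha_i-\delta_i$, or no such request exists and the thread idles (so $\sum_i\delta_i$ drops by one). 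If the reassigned request ranks within the top $j-1$ coordinates (equivalently, every positive $\alpha_i-\delta_i$ already lies in the top $j-1$), or if the thread idles, the right-hand side strictly decreases by one while the left-hand side is untouched; in both situations $\sum_{i=j}^\infty[\alpha'_i-\delta'_i]=0$ after the event. Hence, if \eqref{eq_G_41} is tight, the per-event induction does not close. For instance, with $j=2$, $\vec{\alpha}=(1,1)$, $\vec{\delta}=(1,1)$ under $NP$ and a single $P$-request with $\beta_1=2$, \eqref{eq_G_41} holds with equality, and one $NP$ completion followed by idling would violate \eqref{eq_G_40}; whether such states are reachable under the coupled sample paths is exactly what ``direct accounting'' cannot settle. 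So this is not merely the re-sorting bookkeeping you flag at the end; it is a sub-case where preservation genuinely fails to follow from the induction hypothesis and needs a separate argument covering states with $\sum_{i=j}^\infty[\alpha'_i-\delta'_i]=0$.

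The paper avoids this trap by not arguing event by event. It treats the whole arrival-free interval at once: with $b$ and $d$ the numbers of chunks downloaded by $NP$ and $P$ in $(t,t+\Delta t]$, it uses $\sum_i\alpha'_i=\sum_i\alpha_i-b$, $\sum_i\beta'_i=\sum_i\beta_i-d$, $\sum_{i=j}^\infty\beta_i-\sum_{i=j}^\infty\beta'_i\le d$, and the identity $\sum_{i=j}^\infty[\alpha'_i-\delta'_i]=\sum_{i=j}^\infty[\alpha_i-\delta_i]-b$, the last being invoked only when $\sum_{i=j}^\infty[\alpha'_i-\delta'_i]>0$; the case $\sum_{i=j}^\infty[\alpha'_i-\delta'_i]=0$ is handled separately at the outset rather than propagated through completions. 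To repair your argument you would have to do the same: either supply a direct proof of \eqref{eq_G_40} for the zero-tail sub-cases, or replace the one-completion-at-a-time induction with the interval-level counting.
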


\begin{proof}
If $\sum_{i=j}^\infty [\alpha'_i - \delta'_i]= 0$, then the inequality \eqref{eq_non_prmp_40} follows naturally.

If $\sum_{i=j}^\infty [\alpha'_i - \delta'_i]> 0$, suppose that $b$ chunks are downloaded under policy $NP$ during $(t, t+\Delta t]$, and $d$ chunks are downloaded under policy $P$. Then, we have
\begin{eqnarray}\label{eq_G_412}
\sum_{i=1}^\infty\alpha_i - \sum_{i=1}^\infty\alpha_i' = b,\\
\sum_{i=1}^\infty\beta_i - \sum_{i=1}^\infty\beta_i' = d.
\end{eqnarray}
Further, under policy $NP$, the smallest and yet positive $\alpha_i - \delta_i$ will decrease by one after each chunk departure. Hence, we have
\begin{eqnarray}\label{eq_G_411}
\sum_{i=j}^\infty [\alpha'_i - \delta'_i] = \sum_{i=j}^\infty [\alpha_i - \delta_i] -b.
\end{eqnarray}
Using \eqref{eq_G_41}, \eqref{eq_G_412}-\eqref{eq_G_411}, we obtain $\sum_{i=1}^\infty \beta'_i + \sum_{i=j}^\infty [\alpha'_i - \delta'_i] = \sum_{i=1}^\infty \beta'_i + \sum_{i=j}^\infty [\alpha_i - \delta_i] -b = \sum_{i=1}^\infty \beta'_i + \sum_{i=j}^\infty [\alpha_i - \delta_i] + \sum_{i=1}^\infty\alpha_i' - \sum_{i=1}^\infty\alpha_i \leq \sum_{i=1}^\infty\beta_i' + \sum_{i=1}^\infty \alpha'_i + \sum_{i=j}^\infty \beta_i - \sum_{i=1}^\infty \beta_i = \sum_{i=1}^\infty \alpha'_i + \sum_{i=j}^\infty \beta_i - d \leq \sum_{i=1}^\infty \alpha'_i + \sum_{i=j}^\infty \beta'_i$.
\end{proof}

\begin{lemma}\label{lem_G2}
Suppose that, under policy $NP$, $\{\vec{\alpha}',\vec{\delta}'\}$ is obtained by adding a request with $b$ remaining chunks to the system whose state is $\{\vec{\alpha},\vec{\delta}\}$. Further, suppose that, under policy $P$, $\vec{\beta}'$ is obtained by adding a request with $b$ remaining chunks to the system whose state is $\vec{\beta}$.
If
\begin{eqnarray}
\sum_{i=1}^\infty \beta_i +\sum_{i=j}^\infty [\alpha_i - \delta_i]\leq \sum_{i=j}^\infty \beta_i+  \sum_{i=1}^\infty \alpha_i, ~\forall~j=1,2,\ldots,
\end{eqnarray}
then
\begin{eqnarray}
\sum_{i=1}^\infty \beta_i'+\sum_{i=j}^\infty [\alpha'_i - \delta'_i]\leq \sum_{i=j}^\infty \beta'_i+ \sum_{i=1}^\infty \alpha'_i, ~\forall~j=1,2,\ldots
\end{eqnarray}
\end{lemma}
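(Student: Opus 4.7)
The plan is to mimic the case-analysis arguments of the earlier arrival-step lemmas (Lemmas \ref{lemB2}, \ref{lem_non_prmp2}, and \ref{lem_SEDPT-NR2}), adapted to the richer two-sided invariant of Lemma \ref{lem_G1}. The newly arrived request has $b$ remaining chunks and zero threads assigned, so its coordinate enters $\{\vec{\alpha}',\vec{\delta}'\}$ with $\alpha'-\delta' = b$; let $l$ be its insertion position in the sorted $\alpha'_i-\delta'_i$ sequence and let $m$ be its insertion position in the sorted $\beta'_i$ sequence. The two global identities $\sum_{i=1}^\infty \alpha'_i = \sum_{i=1}^\infty \alpha_i + b$ and $\sum_{i=1}^\infty \beta'_i = \sum_{i=1}^\infty \beta_i + b$ immediately handle the two total-mass terms appearing on each side of the invariant.

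The key per-coordinate observations come from the fact that inserting one value into a non-increasing vector simply shifts all strictly smaller entries one slot to the right. Concretely, if $l<j$ then $\sum_{i=j}^\infty [\alpha'_i - \delta'_i] = \sum_{i=j-1}^\infty [\alpha_i - \delta_i]$ and the sorting forces $b \geq \alpha_{j-1}-\delta_{j-1}$; if $l\geq j$ then $\sum_{i=j}^\infty [\alpha'_i - \delta'_i] = b + \sum_{i=j}^\infty [\alpha_i - \delta_i]$ and $b\leq \alpha_{j-1}-\delta_{j-1}$. The analogous pair of identities holds for $\vec{\beta}'$ at position $m$, yielding $\sum_{i=j}^\infty \beta'_i = \sum_{i=j-1}^\infty \beta_i$ when $m<j$ and $\sum_{i=j}^\infty \beta'_i = b + \sum_{i=j}^\infty \beta_i$ when $m\geq j$.

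Given these ingredients I split the argument into the four cases $\{l<j,l\geq j\}\times\{m<j,m\geq j\}$. In Cases 1 ($l<j,m<j$) and 4 ($l\geq j,m\geq j$) the shifts on the two sides match up, so the target inequality reduces directly to the inductive hypothesis applied at index $j-1$ and $j$, respectively, with the extra $+b$ terms canceling across the two sides. In the mixed Cases 2 ($l<j,m\geq j$) and 3 ($l\geq j,m<j$) the residual $b$ terms do not cancel outright: in Case 2 I apply the hypothesis at index $j$ and close the remaining gap using $\alpha_{j-1}-\delta_{j-1}\leq b$, while in Case 3 I apply the hypothesis at index $j-1$ and close the gap using the reverse inequality $b\leq \alpha_{j-1}-\delta_{j-1}$.

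The main obstacle is purely bookkeeping. Unlike the earlier arrival-step lemmas, the invariant of Lemma \ref{lem_G1} mixes a tail sum over $\alpha_i-\delta_i$ together with two full sums over $\alpha_i$ and $\beta_i$, so one has to select the inductive hypothesis at the \emph{correct} index ($j-1$ or $j$) in each of the four cases and then verify that the insertion-position inequality on $b$ precisely absorbs the residual $b$'s on both sides. Once that checking is carried out, no new structural idea beyond this careful cancellation is required.
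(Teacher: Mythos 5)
Your proposal is correct and follows essentially the same route as the paper, which omits the proof of Lemma \ref{lem_G2} precisely because it is the same four-case insertion argument as Lemma \ref{lem_non_prmp2} (and Lemma \ref{lemB2}), with the only new ingredient being the observation that the full sums $\sum_{i=1}^\infty \alpha_i$ and $\sum_{i=1}^\infty \beta_i$ each increase by $b$ and cancel across the inequality. Your case-by-case bookkeeping (hypothesis at $j-1$ or $j$, closed by $b \gtrless \alpha_{j-1}-\delta_{j-1}$ according to the insertion positions) matches the paper's treatment of the analogous lemmas.
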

The proof of Lemma \ref{lem_G2} is quite similar with that of Lemma \ref{lem_non_prmp2} and is thus omitted. We now prove Theorem \ref{thm7}.
\begin{proof}[Proof of Theorem \ref{thm7}]
As explained above, we only need to evaluate the delay gap between policy $NP$ and policy $P$. Let the evolution of the system state under some queueing discipline be on a space $(\Omega,\mathcal{F},P)$. We assume that the request arrival process $\{a_i,k_i,n_i\}_{i=1}^N$ is fixed for all $\omega\in\Omega$. Let $\{\vec{\alpha}(t), \vec{\delta}(t),t\geq0\}$ be the state process of policy $NP$ and $\{\vec{\beta}(t),t\geq0\}$ be the state process of policy $P$. Then, we have $\vec{\alpha}(0)=\vec{\beta}(0)=\vec{\delta}(0)=0$.

Suppose that under policy $NP$, there are $y$ request arrivals and $z$ request departures during $(0, t]$. Then, there are $y-z$ requests in the system at time $t$ such that $\sum_{i=y-z+1}^\infty \beta_i(t) = 0$. According to Lemma \ref{lemG4} and \eqref{eq_thm7}, we have $\sum_{i=y-z+1}^\infty \alpha_i(t) \leq \sum_{i=y-z+1}^\infty \delta_i(t) + \sum_{i=1}^\infty [\alpha_i(t) - \beta_i (t)] \leq 2L - 1$.
Hence, under policy $NP$, the system still needs to download $2L-1$ chunks  after time $t$, in order to complete $z$ requests as in policy $P$. Therefore, the average downloading time of these extra chunks after time $t$ is no more than
\begin{eqnarray}
&&\overline{D}_{\text{extra}}\leq \mathbb{E}\left\{ \max_{l=1,\ldots, L} X_l\right\} + \mathbb{E}\left\{ \max_{l=1,\ldots, L-1} X_l\right\}.
\end{eqnarray}
Hence, the delay gap between policy $NP$ and policy $P$ is no more than $\mathbb{E}\left\{ \max_{l=1,\ldots, L} X_l\right\}+ \mathbb{E}\left\{ \max_{l=1,\ldots, L-1} X_l\right\}$. By this, Theorem \ref{thm7} is proven.
\end{proof}

\section{Proof of Theorem \ref{thm8}} \label{app5.5}
When preemption is allowed, the proof of Theorem \ref{thm7} can be directly used to show that \eqref{eq_7} still holds, with $\overline{D}_{\text{opt}}$ representing the optimal delay performance in the preemptive case. Further, preemptive SEDPT-WCR can achieve a shorter average delay than non-preemptive SEDPT-NR when preemption is allowed. Then, Theorem \ref{thm8} follows.

\section{Proof of Lemma \ref{lem3}} \label{app6}
We first compare the chunk departure time sequence among the class of work-conserving policies.
 Since $d_{\min}\geq L$, all $L$ threads are kept active whenever there are unfinished requests.

Let us consider the departure time of the first chunk $t_1$. Since $a_1=s_1=0$, for any non-preemptive work-conserving policy, we have
\begin{eqnarray}
t_1=\min_{l=1,\ldots,L} X_l.
\end{eqnarray}
Therefore, the distribution of $t_1$ is invariant under any non-preemptive work-conserving policy.

Next, suppose that $(t_1,t_2,\ldots, t_j)$ under non-preemptive SEDPT-R are stochastically smaller than those under any other work-conserving policy. Let $\tau_l$ denote the time that thread $l$ has spent on the current chunk up to time $t_j$, and $R_l$ denote the remaining time for thread $l$ to download the current chunk after time $t_j$. The tail distribution of $R_l$ is given by
\begin{eqnarray} \label{eq3}
\mathbb{P}(R_l>\gamma|\tau_l = \tau) = \mathbb{P}( X > \gamma +\tau | X>\tau).
\end{eqnarray}
By \eqref{eq3} and the condition that the chunk downloading time distribution is NSU, the remaining downloading time $R_l$  of the case $\tau_l=0$ is stochastically smaller  than that  of the case $\tau_l=\tau>0$. In other words, the remaining downloading time $R_l$ is stochastically smaller if thread $l$ switches to download a new chunk at time $t_j$.
For any non-preemptive work-conserving policy, $t_{j+1}$ is determined as
\begin{eqnarray}\label{eq4}
t_{j+1} = \min_{l=1,\ldots,L}\left[\max\{s_{j+1}, t_j\}+R_l\right].
\end{eqnarray}
Hence, $(t_1,t_2,\ldots,t_{j+1})$ is stochastically smaller if all $L$ threads switch to download a new chunk at time $t_j$. This only occurs under SEDPT-R, where all $L$ threads are assigned to serve the same request. Therefore, $(t_1,t_2,\ldots, t_{j+1})$ under non-preemptive SEDPT-R are stochastically smaller than those under any other work-conserving policy.

By induction, $(t_1,t_2,\ldots, t_{N})$ under non-preemptive SEDPT-R are stochastically smaller than those under any other work-conserving policy.

Finally, since the downloading times of different chunks are \emph{i.i.d.}, service idling only postpones chunk departure time. Hence, the chunk departure times will be larger under non-work-conserving policies. Therefore, $(t_1,t_2,\ldots, t_{N})$ under non-preemptive SEDPT-R are stochastically smaller than those under any other online policy.

\section{Proof of Theorem~\ref{thm5}}\label{app7}
Since $k_i=1$ for all $i$, each file only has one remaining chunk. Hence, the file departure process $(c_{1,\pi},c_{2,\pi},\ldots,c_{N,\pi})$ is a permutation of $(t_1,t_2,\ldots,t_N)$ and
\begin{eqnarray}
\sum_{i=1}^N \mathbb{E}\left\{t_i\right\} = \sum_{i=1}^N \mathbb{E}\left\{c_{i,\pi}\right\}.
\end{eqnarray}
In Lemma \ref{lem3}, it was shown that the chunk departure instants $(t_1,t_2,\ldots, t_N)$ under non-preemptive SEDPT-R are stochastically smaller than those under any other online policy. Therefore, non-preemptive SEDPT-R minimizes $\sum_{i=1}^N \mathbb{E}\left\{t_i\right\}$ \cite{StochasticOrderBook}. By this, Theorem \ref{thm5} is proven.
\fi

\end{document}